\newtheorem{theorem}{Theorem}
\newtheorem{corollary}{Corollary}
\newtheorem{remark}{Remark}
\newtheorem{example}{Example}
\newtheorem{proposition}{Proposition}
\def\BibTeX{{\rm B\kern-.05em{\sc i\kern-.025em b}\kern-.08em
    T\kern-.1667em\lower.7ex\hbox{E}\kern-.125emX}}
\begin{document}

\title{Capacity Bounds for Broadcast Channels with Bidirectional Conferencing Decoders}

\author{Reza K. Farsani and Wei Yu, \IEEEmembership{Fellow, IEEE}%
\thanks{
Manuscript submitted to {\it IEEE Transactions on Information Theory} on June 28, 2024, revised on February 2, 2025, and April 6, 2025. 
This work is supported by the Natural Sciences and Engineering Research Council (NSERC) of Canada via a Discovery Grant.
The materials in this paper have been presented in part at the IEEE Information Theory Workshop (ITW), Saint Malo, France, April 2023 \cite{reza_itw} and in part at the IEEE International Symposium on Information Theory (ISIT), Taiwan, June 2023 \cite{reza_isit}. 
	The authors are with The Edward S.\ Rogers Sr.\ 
Department of Electrical and Computer Engineering, University of Toronto, 
10 King's College Road, Toronto, Ontario M5S3G4, Canada.
E-mails: \{rkfarsani, weiyu\}@ece.utoronto.ca. 
}
}
\maketitle

\thispagestyle{empty}

\allowdisplaybreaks

\begin{abstract}
The two-user broadcast channel (BC) with receivers connected by bidirectional cooperation links of finite capacities, known as conferencing decoders, is considered. A novel capacity region outer bound is established based on multiple applications of the Csisz\'{a}r-K\"{o}rner identity. Achievable rate regions are derived by using Marton's coding as the transmission scheme, together with different combinations of decode-and-forward and quantize-bin-and-forward strategies at the receivers. It is shown that the outer bound coincides with the achievable rate region for a new class of semi-deterministic BCs with degraded message sets; for this class of channels, one-round cooperation is sufficient to achieve the capacity. Capacity result is also derived for a class of more capable semi-deterministic BCs with both common and private messages and one-sided conferencing. For the Gaussian BC with conferencing decoders, if the noises at the decoders are perfectly correlated (i.e., the correlation is either 1 or -1), the new outer bound yields exact capacity region for two cases: i) BC with degraded message sets; ii) BC with one-sided conferencing from the weaker receiver to the stronger receiver.  When the noises have arbitrary correlation, the outer bound is shown to be within half a bit from the capacity region for these same two cases. Finally, for the general Gaussian BC, a one-sided cooperation scheme based on decode-and-forward from the stronger receiver to the weaker receiver is shown to achieve the capacity region to within $\frac{1}{2}\log (\frac{2}{1-|\lambda|})$ bits, where $\lambda$ is the noise correlation. An interesting implication of these results is that for a Gaussian BC with perfectly negatively correlated noises and conferencing decoders with finite cooperation link capacities, it is possible to achieve a strictly positive rate using only an infinitesimal amount of transmit power.
\end{abstract}

\begin{IEEEkeywords}
Broadcast channel, 
capacity region, 
correlated noise, 
primitive relay channel,
receiver cooperation 
\end{IEEEkeywords}

\section{Introduction}

In many communication networks, it is sometimes feasible for the receivers 
at distinct locations to exchange messages and to cooperate. Receiver
cooperation is known to be able to improve the performance of the direct
communication from the transmitter.  This paper considers a channel model
in which cooperation takes place via dedicated digital links of finite
capacities---termed \textit{conferencing links}.  
We investigate the impact of receiver cooperation on the capacity region
of a two-user broadcast channel (BC) with conferencing links, with both
common and private messages. 

\subsection{Motivation} 

We are particularly interested in the effect of receiver noise correlation 
on the capacity of BC with conferencing decoders. To motivate the discussion, consider 
a Gaussian channel model as shown in Fig.~\ref{fig:Gaussian_BC}. 
As is well known, the capacity of a single-user Gaussian channel is 
$C=\frac{1}{2}\log\left( 1 + \frac{P}{N}\right)$ bits per channel use, 
where $P$ is the average transmit power constraint and $N$ is 
the receiver noise power.  
Thus, at a fixed positive noise power $N$, if the transmit
power $P \to 0$, then $C \to 0$.

Now consider the scenario of transmitting a common message through
a two-user Gaussian broadcast channel with conferencing decoders. 
If the noises at the two receivers have the same power and they are 
\emph{uncorrelated}, then full receiver cooperation can potentially 
increase the signal-to-noise ratio (SNR) by 3dB, resulting in up to 
1 bit increase in the common message capacity of the BC. 

The situation is completely different if the receiver noises are
\emph{correlated}.  Specifically, if the receiver noises are 
\emph{perfectly positively} correlated, then cooperation would gain nothing, 
because $Y_1 = Y_2$ in this case; as each receiver already knows 
what the other receiver knows, exchanging information does not help. 
However, if the receiver noises are \emph{perfectly negatively} correlated, 
then cooperation can help significantly. Consider the case where 
$Y_1 = X + Z$ and $Y_2=X-Z$, and the conferencing links have infinite capacities
$C_{12} = C_{21} = \infty$. 
Such a BC with \emph{full receiver cooperation} has
\emph{infinite} capacity at any positive transmit power $P$, because having
access to both $Y_1$ and $Y_2$ allows the noises to be cancelled completely.

What happens if the noises are perfectly negatively correlated, but the conferencing 
links have \emph{finite} capacity, e.g., $C_{12} = C_{21} = 1$ bit per channel use? 
Complete noise cancellation is no longer possible. 
The analysis becomes more delicate, and this is one of the main topics of this
paper. It turns out that
an interesting phenomenon emerges---it is possible to achieve a strictly positive
rate (i.e., 1 bit in the example above), even with an infinitesimally small amount 
of transmit power.  In contrast to the single-user Gaussian
channel where $P \to 0$ implies $C \to 0$, for this Gaussian BC with perfectly 
negatively correlated noises and conferencing decoders of fixed cooperative
link capacities, we can have $P \to 0$, while $C \to 1$. 
In other words, the energy efficiency for transmitting one bit of information 
can be infinite!

\begin{figure*}
    \centering
    \includegraphics[width=0.7\textwidth]{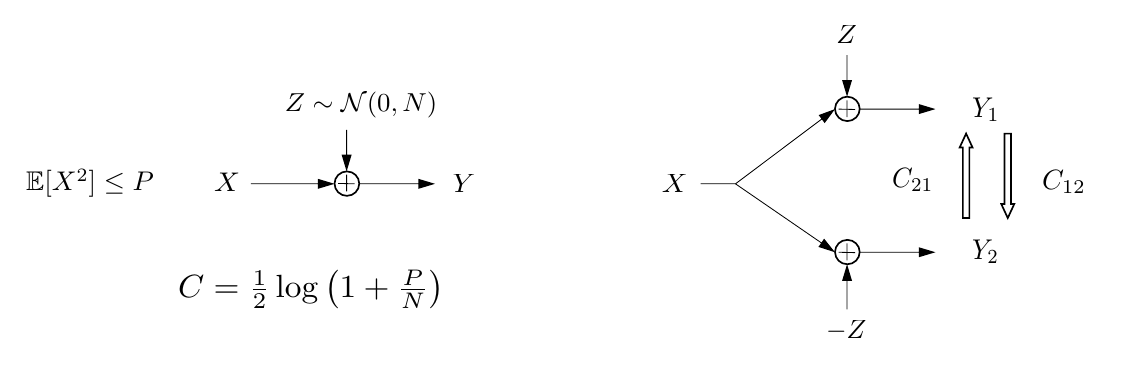}
	\caption{The capacity of a Gaussian channel (left) goes 
	to zero if the transmit power $P$ goes to zero at a fixed noise
variance $N$.  In a Gaussian BC with perfectly negatively correlated noises at the
receiver (right), conferencing links of fixed capacities $C_{12}$ and 
$C_{21}$ between the receivers can allow a strictly positive rate to be
achieved even at an infinitesimally small amount of transmit power.}
    \label{fig:Gaussian_BC}
\end{figure*}

The above example motivates us to study the BC with conferencing decoders
with a specific focus on the Gaussian channel with correlated noises. 
We remark that correlated noises often occur in practical communication scenarios
when the noises at the two receivers are affected by a strong common interference source. 

\subsection{Prior Literature} 

Cooperation via conferencing links has been studied for many different
communication networks in the information theory literature \cite{Wimac, Madsen, 
CaoChen, Maric, Gunduz, SkoglundA, SkoglundI, Rezamac1, Tse1, Tse2, Kang, Rezamac2,
Wei, SkoglundT, RezaK1, RezaK2}.  Further, the papers \cite{Draper, Dabora, bross, Dik, 
Gold, Cuff, absent, unreliable} specifically consider the two-user BC with 
conferencing decoders. In \cite{Draper}, the authors develop communication
strategies for the interactive decoding of a common message broadcast to
cooperative receivers. In \cite{Dabora}, the capacity region of physically degraded
channel is derived and also an achievable rate region is given for the general
case. In \cite{bross}, an achievable rate region is presented based on coding strategies for the partially cooperative relay broadcast channels and also a converse result is proved. In \cite{Dik}, the problems of communication over physically degraded,
state-dependent BCs with one-sided conferencing decoders are investigated. In
\cite{Gold}, the capacity region of the semi-deterministic BC with one-sided
decoder cooperation is derived and its duality with a source coding problem is
addressed. The authors in \cite{Cuff} consider the BC with one-sided
cooperating receivers under the strong secrecy constraints and present capacity 
results for semi-deterministic and physically degraded cases. In \cite{absent},
the BC with (one-sided) unreliable cooperating decoders is studied. 
In a recent work \cite{unreliable}, the BC with degraded message sets and
one-sided cooperation link that may be absent is considered and its capacity
region is given.

\subsection{Main Contributions} 

The existing capacity results for the BC with conferencing decoders are all for the case of one-sided cooperation, i.e., only one of the decoders is connected to the other by a cooperating link. This is due to the lack of useful outer bounds (beyond the cut-set bound) for the two-sided cooperation case. 
In this paper, we first establish a novel outer bound on the capacity region of the two-user BC with bidirectional conferencing decoders. The new outer bound, which is derived using multiple applications of the Csisz\'{a}r-K\"{o}rner identity \cite{marton} \cite[Lemma 7]{CsisKo}, is strictly tighter than the previous ones including that of \cite[Proposition 1]{Dabora}, which is essentially the cut-set bound. 

We then propose achievability strategies for the BC with conferencing decoders. In \cite [Theorem 2]{Dabora} an achievable region is derived for the BC with bidirectional cooperation by applying Marton's coding at the transmitter and the compress-and-forward cooperative scheme at both receivers. Another achievable region is given in \cite{bross} based on coding
strategies for the partially cooperative relay broadcast channels. A third achievable region is given in \cite[Appendix B]{Gold} for the BC with one-sided cooperation between the receivers, which is derived by applying Marton's coding at the transmitter and decode-and-forward as the cooperative protocol. All of these achievable schemes are however insufficient to either derive new capacity results for the discrete memoryless BCs, or approximate capacity results for the Gaussian BC. 

This paper presents two achievability schemes for the two-user BC with both common and private messages and bidirectional conferencing decoders. In the first scheme, we apply Marton's coding as the transmission scheme, and quantize-bin-and-forward at one receiver first and then a combination of decode-and-forward and quantize-bin-and-forward at the other receiver as cooperative strategy. In the second scheme, we apply a combination of decode-and-forward and quantize-bin-and-forward at one receiver first and then quantize-bin-and-forward at the other receiver as cooperation scheme. For each achievability scheme, at the receiver that applies a combination of decode-and-forward and quantize-bin-and-forward, the optimal proportion of the capacity of the cooperative link that should be devoted to each strategy is determined. 

We prove that the novel outer bound coincides with the first achievable rate region for a new class of semi-deterministic BCs with degraded message sets. This capacity result is important from two viewpoints. First, it is the first capacity result for the two-user BC with two-sided conferencing decoders (all previously known capacity results are regarding the channel with one-sided cooperation). Second, it is among the rare cases in network information theory for which quantize-bin-and-forward is optimal. These results also demonstrate that a one-round cooperation protocol is sufficient to achieve capacity and multi-round strategies similar to those devised in \cite{Draper} are not needed. Moreover, we derive a capacity result for a new class of more capable semi-deterministic BCs with both common and private messages and one-sided cooperation. 

Furthermore, we evaluate the derived outer bound for the Gaussian channel with correlated noises. Using this bound, we prove several interesting results. For the channel with perfectly correlated noises (when noise correlation is either 1 or -1), the new outer bound yields the exact capacity region for two cases: i) BCs with degraded message sets; ii) BCs with one-sided conferencing between decoders. For these two cases, we also show that the outer bound is within half a bit from the capacity region over arbitrary noises correlation. 
Lastly, we prove that for the Gaussian BC, a one-sided cooperative scheme from the stronger receiver to the weaker receiver based on the decode-and-forward technique is already sufficient to achieve the capacity region to within $\frac{1}{2}\log (\frac{2}{1-\left|\lambda\right|})$ bits, where $\lambda$ is the correlation coefficient of the noises at the receivers of the BC.
Therefore, such a strategy is approximately optimal when the noise correlation is small.

\subsection{Organization of Paper and Notations}

The rest of the paper is organized as follows. We begin by defining the channel model in Section \ref{sec:channel_model}. The main converse and achievability results for the discrete memoryless BC are presented in Section \ref{sec:dmc}. Section \ref{sec:gaussian} treats the Gaussian BC with conferencing decoders. Section \ref{sec:conclude} concludes the paper.  The appendices contain most of the proofs.

In this paper, we use the following notations. A random variable is given by upper case letter (e.g. $X$) and its realization is shown by lower case letter (e.g. $x$). 
We use $X^n$ to denote a sequence of random variables $(X_1,\cdots,X_n)$ and use the notation $X_t^n=(X_t, X_{t+1},\cdots,X_n)$.
Further, for integers $m$ and $n$, we use $[m:n]$ to denote the index set $\{m, \cdots, n\}$.

The probability density function (PDF) or the probability mass function (PMF) 
of a random variable $X$ is denoted by $P_X(x)$ and the conditional PDF/PMF of $X$ given $Y$ is denoted by $P_{X|Y}(x|y)$, where the subscripts are omitted occasionally for brevity. The operator $\{a\}^+$ is defined as: $\{a\}^+=\max\{0,a\}$. The set of non-negative real numbers is given by $\mathbb{R}_+$. For $\epsilon>0$, the set of all jointly $\epsilon$-letter typical $n$-sequences $x^n$ with respect to the PDF/PMF $P_X(x)$ is denoted by $\mathcal{T}^n_{\epsilon}(P_X)$; (see \cite{NIT} for definition).
We use the notation $X=\varnothing$ to denote that
a random variable $X$ is set as a constant.
Finally, we use the shorthand $\psi(x) = \frac{1}{2} \log(1+x)$.

\begin{figure*}
    \centering
    \includegraphics[width=0.67\textwidth]{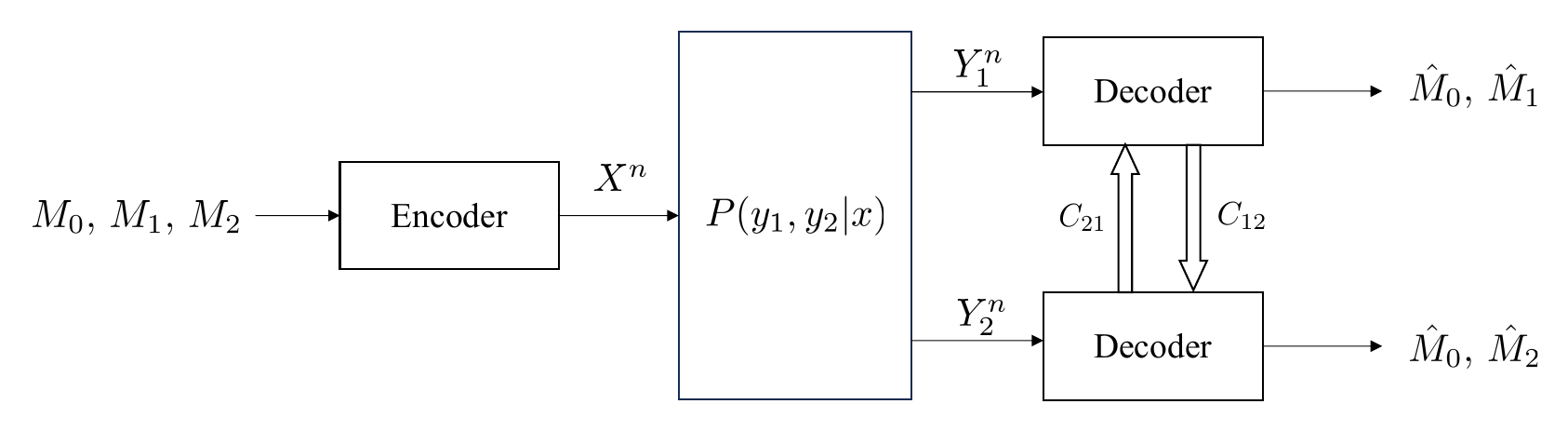}
    \caption{Two-user BC with conferencing decoders}
    \label{fig:BC_CD}
\end{figure*}

\section{Channel Model}
\label{sec:channel_model}

The two-user BC with conferencing decoders is a communication scenario in which a transmitter sends a common message and two private messages to two receivers, and the two receivers are able to exchange information via communication links of finite capacities (called conferencing links). Fig.~\ref{fig:BC_CD} illustrates the channel model. The channel is given by $(\mathcal{X},\mathcal{Y}_1,\mathcal{Y}_2, P(y_1,y_2|x),C_{12},C_{21})$ where $\mathcal{X}$ denotes the input alphabet, $\mathcal{Y}_1$ and $\mathcal{Y}_2$ denote the output alphabets, $P(y_1,y_2|x)$ is the channel transition probability, and $C_{12}$ and $C_{21}$ are capacities of the conferencing links per channel use. 

For the BC with conferencing decoders, a length-$n$ code with $\Gamma$ conferencing rounds is described as follows.
The transmitter encodes independent messages $M_0$, $M_1$,  and $M_2$, which are uniformly distributed over the sets $[1:2^{nR_0}]$, $[1:2^{nR_1}]$, and $[1:2^{nR_2}]$, respectively, into a codeword and sends over the channel according to the following:
\begin{align*}
    &\Delta: [1:2^{nR_0}]\times[1:2^{nR_1}]\times[1:2^{nR_2}]\longmapsto \mathcal{X}^n  \\
	&X^n=\Delta(M_0,M_1,M_2).
\end{align*}
The receiver $Y_i,\;i=1,2$, receives a sequence $Y_i^n\in\mathcal{Y}_i^n$. The code consists of two sets of conferencing functions $\{\Xi_{12,\gamma}\}_{\gamma=1}^\Gamma$ and $\{\Xi_{21,\gamma}\}_{\gamma=1}^\Gamma$ with the corresponding output alphabets $\{\mathcal{J}_{12,\gamma}\}_{\gamma=1}^\Gamma$ and $\{\mathcal{J}_{21,\gamma}\}_{\gamma=1}^\Gamma$, respectively, which are described as follows:
\begin{align*}
    \Xi_{12,\gamma}&:\mathcal{Y}_1^n\times\mathcal{J}_{21,1}\times...\times\mathcal{J}_{21,\gamma-1}\longmapsto\mathcal{J}_{12,\gamma} \\
    &J_{12,\gamma}=\Xi_{12,\gamma}(Y_1^n,J_{21}^{\gamma-1}) \\
    \Xi_{21,\gamma}&:\mathcal{Y}_2^n\times\mathcal{J}_{12,1}\times...\times\mathcal{J}_{12,\gamma-1}\longmapsto\mathcal{J}_{21,\gamma} \\
    &J_{21,\gamma}=\Xi_{21,\gamma}(Y_2^n,J_{12}^{\gamma-1}).
\end{align*}
A sequence of conferencing rounds is said to be $(C_{12},C_{21})$-permissible if
\begin{align*}
    \sum_{\gamma=1}^\Gamma \log |\mathcal{J}_{12,\gamma} |\le nC_{12},\qquad \sum_{\gamma=1}^\Gamma \log |\mathcal{J}_{21,\gamma} |\le nC_{21}.
\end{align*}
Before decoding, the receivers exchange information by holding a $(C_{12},C_{21})$-permissible conference. The first receiver obtains the sequence $J_{21}^\Gamma=(J_{21,1},J_{21,2},...,J_{21,\Gamma})$ and the second one obtains the sequence $J_{12}^\Gamma=(J_{12,1},J_{12,2},...,J_{12,\Gamma})$. The receivers then decode their respective messages based on the following decoding functions:
\begin{align*}
    \nabla_1&:\mathcal{Y}_1^n\times \mathcal{J}_{21}^\Gamma\longmapsto[1:2^{nR_0}]\times[1:2^{nR_1}] \\
&(\hat{M}_0,\hat{M}_1)=\nabla_1(Y_1^n\times J_{21}^\Gamma)\\
    \nabla_2&:\mathcal{Y}_2^n\times \mathcal{J}_{12}^\Gamma\longmapsto[1:2^{nR_0}]\times[1:2^{nR_2}] \\
&(\hat{M}_0,\hat{M}_2)=\nabla_2(Y_2^n\times J_{12}^\Gamma).
\end{align*}
The capacity region for the channel is defined as usual \cite{NIT}. Here, we omit the details for brevity.

\section{Discrete Memoryless BC with Conferencing Decoders}
\label{sec:dmc}

\subsection{Converse}

\begin{theorem}\label{th.obc}
Consider the two-user BC with conferencing decoders shown in Fig.~\ref{fig:BC_CD}. Let $\mathcal{R}_o$ denote the set of all rate triples $(R_0,R_1,R_2)$ such that
\begin{align} 
    R_0+R_1 &\le I(U;Y_1)+C_{21} \label{eq:outer_R1U}\\
    R_1 &\le I(X;Y_1|Y_2,V)+I(X;Y_2) \label{eq:outer_R1V_1}\\
    R_1 &\le I(X;Y_2|Y_1,V)+I(X;Y_1) \label{eq:outer_R1V_2}\\
    R_0+R_2 &\le I(V;Y_2)+C_{12} \label{eq:outer_R2V}\\
    R_2 &\le I(X;Y_2|Y_1,U)+I(X;Y_1) \label{eq:outer_R2U_1}\\
    R_2 &\le I(X;Y_1|Y_2,U)+I(X;Y_2) \label{eq:outer_R2U_2}\\
    R_0+R_1+R_2 &\le I(X;Y_1|V)+I(V;Y_2)+C_{12}+C_{21} \label{eq:outer_RsumV_1}\\
    R_0+R_1+R_2 &\le I(X;Y_2|U)+I(U;Y_1)+C_{12}+C_{21} \label{eq:outer_RsumU_1}\\
    R_0+R_1+R_2 &\le I(X;Y_1|Y_2,V)+I(X;Y_2)+C_{12} \label{eq:outer_RsumV_2}\\
    R_0+R_1+R_2 &\le I(X;Y_2|Y_1,U)+I(X;Y_1)+C_{21} \label{eq:outer_RsumU_2}\\
    R_0+R_1+R_2 &\le I(X;Y_1,Y_2) \label{eq:outer_cs}
\end{align}
for some joint PDFs $P(u,v,x)$, where $U,V \rightarrow X \rightarrow Y_1,Y_2$ forms a Markov chain and $|\mathcal{U}|\le |\mathcal{X}|+2$ and $|\mathcal{V}|\le |\mathcal{X}|+2$.
Then $\mathcal{R}_o$ constitutes an outer bound on the capacity region.
\end{theorem}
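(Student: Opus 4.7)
The plan is to combine Fano's inequality with the conferencing rate constraints, use Nair--El Gamal-type identifications of auxiliary variables (augmented with future output sequences where the Csisz\'{a}r sum identity is needed), and then single-letterize via a uniform time-sharing index. I would begin by writing down Fano's inequality for both decoders,
\[
n(R_0 + R_1) \le I(M_0, M_1; Y_1^n, J_{21}^\Gamma) + n\epsilon_n, \qquad n(R_0 + R_2) \le I(M_0, M_2; Y_2^n, J_{12}^\Gamma) + n\epsilon_n,
\]
together with the conferencing bounds $H(J_{12}^\Gamma) \le nC_{12}$ and $H(J_{21}^\Gamma) \le nC_{21}$. The key structural observation is that the complete conference transcript $(J_{12}^\Gamma, J_{21}^\Gamma)$ is a deterministic function of $(Y_1^n, Y_2^n)$, since each round can be reconstructed recursively from the channel outputs and the previous rounds. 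This lets me replace $J_{21}^\Gamma$ by $Y_2^n$ (and $J_{12}^\Gamma$ by $Y_1^n$) inside mutual information expressions whenever convenient, at the cost only of a valid upper bound.

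With these tools, I would adopt the identifications $U_i = (M_0, M_1, Y_1^{i-1}, Y_{2,i+1}^n)$ and $V_i = (M_0, M_2, Y_2^{i-1}, Y_{1,i+1}^n)$, both of which satisfy the Markov chain $(U_i, V_i) \to X_i \to (Y_{1,i}, Y_{2,i})$ by channel memorylessness. Bounds \eqref{eq:outer_R1U} and \eqref{eq:outer_R2V} then follow directly from the chain rule after charging $nC_{21}$ or $nC_{12}$ bits for the conference. For bounds \eqref{eq:outer_R1V_1}--\eqref{eq:outer_R1V_2} and \eqref{eq:outer_R2U_1}--\eqref{eq:outer_R2U_2}, the key step is to write, for instance,
\[
nR_1 \le I(M_1; Y_1^n, J_{21}^\Gamma \mid M_0, M_2) + n\epsilon_n \le I(M_1; Y_1^n, Y_2^n \mid M_0, M_2) + n\epsilon_n,
\]
split by the chain rule, and bound $I(M_1; Y_2^n \mid M_0, M_2) \le \sum_i I(X_i; Y_{2,i})$ and $I(M_1; Y_1^n \mid Y_2^n, M_0, M_2) \le \sum_i I(X_i; Y_{1,i} \mid Y_{2,i}, V_i)$, where the latter uses the Markov chain and channel memorylessness to identify $H(Y_{1,i} \mid Y_{2,i}, V_i, X_i)$ with $H(Y_{1,i} \mid Y_{2,i}, X_i)$. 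Bounds \eqref{eq:outer_RsumV_2} and \eqref{eq:outer_RsumU_2} come from the same decomposition but with only one of the two conferencing penalties added, and \eqref{eq:outer_cs} is immediate from the cut-set bound.

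The main difficulty lies in bounds \eqref{eq:outer_RsumV_1} and \eqref{eq:outer_RsumU_1}, which require the Csisz\'{a}r sum identity. After applying Fano and both conferencing constraints, I arrive at
\[
n(R_0 + R_1 + R_2) \le I(M_0, M_2; Y_2^n) + I(M_1; Y_1^n \mid M_0, M_2) + n(C_{12} + C_{21}) + 2n\epsilon_n,
\]
and the augmentation $Y_{1,i+1}^n$ inside $V_i$ becomes essential. Chain-rule expansion yields $\sum_i I(V_i; Y_{2,i}) \ge I(M_0, M_2; Y_2^n) + \sum_i I(Y_{1,i+1}^n; Y_{2,i} \mid M_0, M_2, Y_2^{i-1})$ and $\sum_i I(X_i; Y_{1,i} \mid V_i) = I(M_1; Y_1^n \mid M_0, M_2) - \sum_i I(Y_2^{i-1}; Y_{1,i} \mid M_0, M_2, Y_{1,i+1}^n)$, and the two correction sums coincide by Csisz\'{a}r's sum identity, so they cancel upon addition; \eqref{eq:outer_RsumU_1} follows symmetrically with augmented $U_i$. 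The proof closes by introducing a time-sharing index $Q$ uniform on $[1{:}n]$ independent of everything else and setting $(X, Y_1, Y_2, U, V) = (X_Q, Y_{1,Q}, Y_{2,Q}, (U_Q, Q), (V_Q, Q))$; the Markov chain is preserved, concavity of mutual information in the input distribution converts $\tfrac{1}{n}\sum_i I(X_i; Y_{2,i})$ to a single-letter $I(X; Y_2)$, and sending $n \to \infty$ finishes the argument. A routine bookkeeping chore is to verify that the simpler bounds remain valid with the augmented $U_i, V_i$ used in \eqref{eq:outer_RsumV_1}--\eqref{eq:outer_RsumU_1}, which they do because the additional conditioning affects the relevant entropy differences in the harmless direction.
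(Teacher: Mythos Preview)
Your approach is correct and essentially the same as the paper's: Fano plus conferencing-rate bounds, single-letterization via a time-sharing index, and the Csisz\'{a}r--K\"{o}rner identity for the two ``hard'' sum-rate constraints. The only cosmetic difference is your choice of auxiliary variables: the paper takes $U_t=(M_0,M_1,Y_2^{t-1},Y_{1,t+1}^n)$ (past $Y_2$'s, future $Y_1$'s) whereas you take the mirror-image $U_i=(M_0,M_1,Y_1^{i-1},Y_{2,i+1}^n)$; both are valid and yield the same single-letter bound after the Csisz\'{a}r cancellation, and your handling of \eqref{eq:outer_RsumU_2}/\eqref{eq:outer_RsumV_2} via $I(M_0,M_1;Y_1^n)+I(M_2;Y_1^n\mid M_0,M_1)=I(M_0,M_1,M_2;Y_1^n)$ is in fact slightly more direct than the paper's route (which reuses the Csisz\'{a}r step there as well).
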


\begin{IEEEproof}
	The proof is given in Appendix \ref{app:outer_bound}. 
\end{IEEEproof}

This novel outer bound on the capacity region of the channel is derived through
multiple applications of the Csisz\'{a}r-K\"{o}rner identity \cite{marton}
\cite[Lemma 7]{CsisKo}. 
The Csisz\'{a}r-K\"{o}rner identity is used to establish the constraints 
\eqref{eq:outer_RsumV_1}, \eqref{eq:outer_RsumU_1}, \eqref{eq:outer_RsumV_2},
and \eqref{eq:outer_RsumU_2}
based on the appropriate definitions of auxiliary random variables $U$ and $V$. 

Specifically, the definitions of $U$ and $V$ are carefully chosen to ensure a compact representation of the outer bound using only two auxiliary random variables. 
Moreover, the multi-letter mutual information expressions are carefully manipulated 
in order to establish novel structures in the constraints
(\ref{eq:outer_R1V_1}), (\ref{eq:outer_R1V_2}), (\ref{eq:outer_R2U_1}), 
(\ref{eq:outer_R2U_2}), (\ref{eq:outer_RsumV_2}) and (\ref{eq:outer_RsumU_2}) in terms of these auxiliary variables. 
Note that the constraint (\ref{eq:outer_cs}) is due to the cut-set bound. 

As we demonstrate later, the novel structure of the outer bound is crucial 
for deriving new capacity results for specific channels (and also approximate capacity results for specific Gaussian channels). 

The outer bound $\mathcal{R}_o$ is tighter than that of \cite[Proposition 1]{Dabora}, which is essentially the cut-set bound.

\subsection{Achievability}

We now present novel achievability
schemes for the two-user BC with both common and private messages and
bidirectional conferencing decoders. The achievability schemes consist
of judicious combination of Marton's coding together with decode-and-forward
and quantize-bin-and-forward cooperation strategies.
The achievable rate regions are given in the following results.

\begin{proposition}\label{th.ibc1}

Consider the two-user BC with conferencing decoders shown in Fig.~\ref{fig:BC_CD}. Let $\mathcal{R}_i^{(1)}$ denote the set of all rate triples $(R_0,R_1,R_2)\in \mathbb {R}_+^3$ such that
\begin{align}
\label{eq:ibc1}
R_0+ R_1 &\le \min \Big\{I(U,W;Y_1)+\zeta_2, I(U,W;Y_1,\hat{Y}_2)\Big\}\\
R_0+ R_2 &\le \min \Big\{I(V,W;Y_2)+\zeta_1+\Bar{\alpha}_1C_{12}, \nonumber \\
 & \qquad I(V,W;\hat{Y}_1,Y_2)+\Bar{\alpha}_1C_{12}\Big\}\\
R_0+R_1+ R_2 &\le \min \Big\{I(U;Y_1|W)+\zeta_2, I(U;Y_1,\hat{Y}_2|W)\Big\} \nonumber \\
 & \quad +\min \Big\{I(V,W;Y_2)+\zeta_1+\Bar{\alpha}_1C_{12}, \nonumber \\
 & \qquad \qquad \qquad I(V,W;\hat{Y}_1,Y_2)+\Bar{\alpha}_1C_{12}\Big\} \nonumber \\
& \quad -I(U;V|W) \\
R_0+R_1+ R_2 &\le  \min \Big\{I(U,W;Y_1)+\zeta_2, I(U,W;Y_1,\hat{Y}_2)\Big\}\nonumber\\
 & \quad + \min \Big\{I(V;Y_2|W)+\zeta_1, \nonumber \\ 
 & \qquad I(V;\hat{Y}_1,Y_2|W)\Big\} -I(U;V|W) \\
2R_0+R_1+ R_2 &\le  \min \left \{I(U,W;Y_1)+\zeta_2, I(U,W;Y_1,\hat{Y}_2)\right \}\nonumber\\
 & \quad + \min \Big\{I(V,W;Y_2)+\zeta_1+\Bar{\alpha}_1C_{12}, \nonumber \\ 
 & \qquad \qquad \qquad I(V,W;\hat{Y}_1,Y_2)+\Bar{\alpha}_1C_{12}\Big\} \nonumber \\ &\quad -I(U;V|W)
\label{eq:ibc1_end}
\end{align}
where
\begin{align}
    \zeta_1&=\{\alpha_1 C_{12}-I(\hat{Y}_1;U,Y_1|V,W,Y_2)\}^+\\
    \zeta_2&=\{C_{21}-I(\hat{Y}_2;Y_2|U,W,Y_1)\}^+
\end{align}
for some joint PDFs $P(u,v,w,x)P(\hat{y}_1|u,w,y_1)P(\hat{y}_2|y_2)$ and $\alpha_1 \in [0,1]$. The convex closure of $\mathcal{R}_i^{(1)}$ is achievable.
\end{proposition}

\begin{IEEEproof}
The cooperation strategy to derive the rate region $\mathcal{R}_i^{(1)}$ involves using Marton's coding scheme for the BC together with quantize-bin-and-forward at the receiver $Y_2$ first and then a combination of decode-and-forward and quantize-bin-and-forward at the receiver $Y_1$. At the receiver $Y_1$, the capacity of the conferencing link $C_{12}$ is divided into two parts. A fraction $\Bar{\alpha}_1C_{12}$ is devoted for decode-and-forward and the remaining $\alpha_1 C_{12}$ for quantize-bin-and-forward scheme. A detailed proof of Proposition \ref{th.ibc1} is given in Appendix \ref{app:achievability_1}.
\end{IEEEproof}

\begin{remark}\label{zetarep}
In the characterization of the rate region $\mathcal{R}_i^{(1)}$ given in Proposition \ref{th.ibc1}, one can replace $\zeta_1$ and $\zeta_2$ with $\Tilde{\zeta}_1$ and $\Tilde{\zeta_2}$, respectively, as given below:
\begin{align}
\label{eq:zetarep_1}
        \Tilde{\zeta}_1&=\alpha_1 C_{12}-I(\hat{Y}_1;U,Y_1|V,W,Y_2)\\
\label{eq:zetarep_2}
        \Tilde{\zeta_2}&=C_{21}-I(\hat{Y}_2;Y_2|U,W,Y_1).
\end{align}
To see why this is the case, it is sufficient to note that in the case of
$\alpha_1 C_{12}<I(\hat{Y}_1;U,Y_1|V,W,Y_2)$, we can set $\hat{Y}_1 =
\varnothing$ and obtain the same rate region $\mathcal{R}_i^{(1)}$.
Similarly, in the case of $C_{21}<I(\hat{Y}_2;Y_2|U,W,Y_1)$, we can set 
$\hat{Y}_2 = \varnothing$ and obtain the same rate region. 
\end{remark}

\newcounter{storeeqcounter_one}
\newcounter{tempeqcounter}

\begin{figure*}[!t]
\setcounter{tempeqcounter}{\value{equation}} 
\begin{align}
\label{eq:ibc1eq}
R_0+R_1 &\le \min
    \left \{I(U,W;Y_1)+C_{21}-I(\hat{Y}_2;Y_2|U,W,Y_1), I(U,W;Y_1,\hat{Y}_2)\right \}\\
R_0+R_2 &\le I(V,W;Y_2)+C_{12}-I(\hat{Y}_1;U,Y_1|V,W,Y_2)\\
R_0+R_1+R_2 &\le \min \left \{I(U;Y_1|W)+C_{21}-I(\hat{Y}_2;Y_2|U,W,Y_1), I(U;Y_1,\hat{Y}_2|W)\right \}\nonumber\\
    &\qquad +I(V,W;Y_2)+C_{12}-I(\hat{Y}_1;U,Y_1|V,W,Y_2)-I(U;V|W)\\
R_0+R_1+R_2 &\le \min \left \{I(U,W;Y_1)+C_{21}-I(\hat{Y}_2;Y_2|U,W,Y_1), I(U,W;Y_1,\hat{Y}_2)\right \} \nonumber \\
	&\qquad +\min \left\{\begin{array}{l}
        {I(V;Y_2|W)+C_{12}-I(\hat{Y}_1;U,Y_1|V,W,Y_2),}\\
        {I(V;\hat{Y}_1,Y_2|W)}
    \end{array}\right\} -I(U;V|W)\\
2R_0+R_1+R_2 &\le \min
    \left \{I(U,W;Y_1)+C_{21}-I(\hat{Y}_2;Y_2|U,W,Y_1), I(U,W;Y_1,\hat{Y}_2)\right \}\nonumber\\
    &\qquad +I(V,W;Y_2)+C_{12}-I(\hat{Y}_1;U,Y_1|V,W,Y_2)-I(U;V|W)
\label{eq:ibc1eq_end}
\end{align}
\setcounter{equation}{\value{tempeqcounter}} 
\hrulefill
\vspace*{0pt}
\end{figure*}

\addtocounter{equation}{5}%
\setcounter{storeeqcounter_one}%
{\value{equation}}%

\begin{remark}
It is clear that an alternative achievable rate region can be derived by exchanging the cooperative protocol at the receivers in the achievability scheme of Proposition \ref{th.ibc1}, i.e., to apply quantize-bin-and-forward at the receiver $Y_1$ first and then a combination of decode-and-forward and quantize-bin-and-forward at the receiver $Y_2$.
\end{remark}

As discussed before, the parameter $\alpha_1$ in the achievable rate region $\mathcal{R}_i^{(1)}$ reflects a balance between the rates associated with the decode-and-forward and the quantize-bin-and-forward schemes at the receiver that applies both of them (i.e., $Y_1$). It turns out that the optimal value for the parameter $\alpha_1$ can be determined. This results in a simplified expression for the achievable rate region as given in the following theorem.

\begin{theorem}\label{alpha_1*}
For the two-user BC with conferencing decoders shown in Fig.~\ref{fig:BC_CD},
the rate region expressed in \eqref{eq:ibc1eq}-\eqref{eq:ibc1eq_end} 
over some joint PDF $P(u,v,w,x)P(\hat{y}_1|u,w,y_1)P(\hat{y}_2|y_2)$
is achievable.
This achievable rate region is obtained by setting the parameter $\alpha_1$ in $\mathcal{R}_i^{(1)}$ optimally as $\alpha_1^*$, given below:
\begin{equation}\label{eq:alpha*1}
    \alpha_1^*=\min \left\{\frac{I(\hat{Y}_1;U,Y_1|W,Y_2)}{C _{12}},1\right\}.
\end{equation}
\end{theorem}

\begin{IEEEproof}
First of all, $\zeta_1$ and $\zeta_2$ in the characterization of $\mathcal{R}_i^{(1)}$ can be respectively replaced by $\Tilde{\zeta}_1$ and $\Tilde{\zeta}_2$ given in (\ref{eq:zetarep_1})-(\ref{eq:zetarep_2}). Now, by setting $\alpha_1=\alpha_1^*$ in (\ref{eq:alpha*1}), one can easily derive the characterization (\ref{eq:ibc1eq})-(\ref{eq:ibc1eq_end}). To see that this choice is in fact optimal, it is sufficient to note that
\begin{align}
& \min 
\Big\{
I(V,W;Y_2)+\alpha_1 C_{12}-I(\hat{Y}_1;U,Y_1|V,W,Y_2)+\Bar{\alpha}_1C_{12}, \nonumber \\
& \qquad \quad \qquad I(V,W;\hat{Y}_1,Y_2)+\Bar{\alpha}_1C_{12} \Big\} \nonumber \\
& \qquad \le I(V,W;Y_2)+C_{12}-I(\hat{Y}_1;U,Y_1|V,W,Y_2) 
\end{align}
and
\begin{align}
\min &
\left\{\begin{array}{l}
I(V;Y_2|W)+\alpha_1 C_{12}-I(\hat{Y}_1;U,Y_1|V,W,Y_2), \\
I(V;\hat{Y}_1,Y_2|W)
\end{array}\right\} \nonumber \\
\le & \min 
\left\{\begin{array}{l}
        {I(V;Y_2|W)+C_{12}-I(\hat{Y}_1;U,Y_1|V,W,Y_2),}\\
        {I(V;\hat{Y}_1,Y_2|W)}
\end{array}\right\}.
\end{align}
\end{IEEEproof}



\begin{remark}
The achievable rate region $\mathcal{R}_i^{(1)}$ given in 
Theorem \ref{alpha_1*} clearly includes both of the regions previously presented in \cite[Theorem 2]{Dabora} and \cite[Appendix B]{Gold} as subsets. Moreover, after some algebraic computation, one can show that $\mathcal{R}_i^{(1)}$, restricted over the distributions of the form $P(u,v,w,x)P(\hat{y}_1|w,y_1)P(\hat{y}_2|y_2)$, includes the achievable rate region of \cite[Theorem 3]{bross} as a subset. 
Therefore, $\mathcal{R}_i^{(1)}$ encompasses these previously results.
The achievable rate region $\mathcal{R}_i^{(1)}$ is evaluated over a larger set of distributions, i.e., $P(u,v,w,x)P(\hat{y}_1|u,w,y_1)P(\hat{y}_2|y_2)$, as compared to the one in \cite[Theorem 3]{bross}. This improvement arises from the novel cooperation scheme at the first receiver, where it transmits a compressed version of the decoded signals U and W along with its received signal $Y_1$ to the second receiver. As discussed in \cite{FarsaniThesis}, this novel approach may lead to derivation of new capacity results for certain types of Gaussian channels.
\end{remark}

Let us now specialize the achievable rate region $\mathcal{R}_i^{(1)}$ for a
primitive relay channel \cite{Kimpr}. The primitive relay channel is a communication
scenario wherein a transmitter sends a message to a receiver with the help of
a relay node which has a digital relay link of given capacity to the receiver. 
This scenario is a special case of the two-user BC with conferencing decoders 
by setting $M_0=M_1=\varnothing$ and $C_{21}=0$ in Fig.~\ref{fig:BC_CD}.
Now, by setting $U = \varnothing$  and $V = X$ in 
$\mathcal{R}_i^{(1)}$,
we obtain the following achievable rate for the primitive relay channel:
\begin{align}
& \max_{P(w,x)P(\hat{y}_1|w,y_1)} \min \Big\{
	{I(X;Y_2)+C_{12}-I(\hat{Y}_1;Y_1|X,W,Y_2),} \nonumber \\
& \qquad        {I(W;Y_1)+I(X;\hat{Y}_1,Y_2|W),} \nonumber \\
& \qquad       {I(W;Y_1)+I(X;Y_2|W)+C_{12}-I(\hat{Y}_1;Y_1|X,W,Y_2)}
\Big\}.
\end{align}
This is in fact the largest known achievable rate for the primitive relay channel as discussed in \cite{gamgoh}.

\begin{figure*}[!b]
\normalsize
\setcounter{tempeqcounter}{\value{equation}} 
\setcounter{equation}{38}%
\hrulefill
\begin{align}
\label{eq:ibc2eq}
    R_0+R_1 &\le I(U,W;Y_1)+C_{21}-I(\hat{Y}_2;Y_2|U,W,Y_1)\\
    R_0+R_2 &\le I(V,W;Y_2)\\
	R_0+R_1+R_2 &\le \min \left \{\begin{array}{l}
    	{I(U;Y_1|W)+C_{21}-I(\hat{Y}_2;Y_2|U,W,Y_1),}\\ {I(U;Y_1,\hat{Y}_2|W)}
    \end{array}\right\}
    +I(V,W;Y_2)-I(U;V|W)\\
    R_0+R_1+R_2 &\le I(U,W;Y_1)+C_{21}-I(\hat{Y}_2;Y_2|U,W,Y_1) \nonumber \\
    & \qquad +\min \left 
	\{\begin{array}{l}
    		{I(V;Y_2|W)+C_{12}-I(\hat{Y}_1;U,Y_1|V,W,Y_2),}\\
    		{I(V;\hat{Y}_1,Y_2|W)}
    	\end{array}\right\}
    -I(U;V|W)\\
    2R_0+R_1+R_2 &\le I(U,W;Y_1)+C_{21}-I(\hat{Y}_2;Y_2|U,W,Y_1)+I(V,W;Y_2)-I(U;V|W)
\label{eq:ibc2eq_end}
\end{align}
\setcounter{equation}{\value{tempeqcounter}} 
\vspace*{0pt}
\end{figure*}

Next, we present a second achievable rate region 
based on a combination of
decode-and-forward and quantize-bin-and-forward at one receiver first and then
quantize-bin-and-forward at the other receiver as the cooperation strategy. 
This strategy may be advantageous when part of the message is already 
decodable at the first receiver before conferencing.

\begin{proposition}\label{th.ibc2}
Consider the two-user BC with conferencing decoders shown in Fig.~\ref{fig:BC_CD}. Let $\mathcal{R}_i^{(2)}$ denote the set of all rate triples $(R_0,R_1,R_2)\in \mathbb {R}_+^3$ such that
\begin{align}
\label{eq:ibc2}
    R_0+R_1 &\le \min
    \Big\{I(U,W;Y_1)+\eta_2+\Bar{\alpha}_2C_{21}, \nonumber \\ 
& \qquad I(U,W;Y_1,\hat{Y}_2)+\Bar{\alpha}_2C_{21}\Big \}\\
    R_0+R_2 &\le I(V,W;Y_2)\\
    R_0+R_1+R_2 &\le \min \Big \{I(U;Y_1|W) +\eta_2, I(U;Y_1,\hat{Y}_2|W)\Big\} \nonumber \\ 
& \ +I(V,W;Y_2)-I(U;V|W)\\
    R_0+R_1+R_2 &\le \min \Big \{I(U,W;Y_1)+\eta_2+\Bar{\alpha}_2C_{21}, \nonumber \\
& \qquad \qquad I(U,W;Y_1,\hat{Y}_2)+\Bar{\alpha}_2C_{21}\Big \} \nonumber \\
& \ + \min \left \{I(V;Y_2|W)+\eta_1,I(V;\hat{Y}_1,Y_2|W)\right\} \nonumber \\
& \ -I(U;V|W)\\
2R_0+R_1+R_2 &\le \min
    \Big \{I(U,W;Y_1)+\eta_2+\Bar{\alpha}_2C_{21}, \nonumber \\
& \qquad \qquad I(U,W;Y_1,\hat{Y}_2)+\Bar{\alpha}_2C_{21}\Big \} \nonumber \\
    &\ +I(V,W;Y_2)-I(U;V|W)
\label{eq:ibc2_end}
\end{align}
where
\begin{align}
    \eta_1&=\{C_{12}-I(\hat{Y}_1;U,Y_1|V,W,Y_2)\}^+\\
    \eta_2&=\{\alpha_2 C_{21}-I(\hat{Y}_2;Y_2|U,W,Y_1)\}^+
\end{align}
for some joint PDFs $P(u,v,w,x)P(\hat{y}_1|u,w,y_1)P(\hat{y}_2|w,y_2)$ and $\alpha_2 \in [0,1]$. The convex closure of $\mathcal{R}_i^{(2)}$ is achievable.
\end{proposition}

\begin{IEEEproof}
The cooperation strategy to achieve the rate region $\mathcal{R}_i^{(2)}$ is to apply Marton's coding with a combination of decode-and-forward and quantize-bin-and-forward at the receiver $Y_2$ first, then quantize-bin-and-forward at the receiver $Y_1$. At the receiver $Y_2$, the capacity of the conferencing link $C_{21}$ is divided into two parts. A fraction $\Bar{\alpha}_2C_{21}$ is devoted for decode-and-forward and the remaining $\alpha_2 C_{21}$ is devoted for quantize-bin-and-forward scheme. The details of the proof are in
	Appendix \ref{app:achievability_2}. 
\end{IEEEproof}

\begin{remark}\label{etarep}
In the characterization of the rate region (\ref{eq:ibc2})-(\ref{eq:ibc2_end}), one can replace $\eta_1$ and $\eta_2$ with $\Tilde{\eta}_1$ and $\Tilde{\eta_2}$, respectively, which are given as follows:
\begin{align}\label{eq:etarep}
        \Tilde{\eta}_1&=C_{12}-I(\hat{Y}_1;U,Y_1|V,W,Y_2)\\
        \Tilde{\eta_2}&=\alpha_2 C_{21}-I(\hat{Y}_2;Y_2|U,W,Y_1).
\end{align}
To see why this is the case, it is sufficient to note that in the case of
$C_{12}<I(\hat{Y}_1;U,Y_1|V,W,Y_2)$, we can set $\hat{Y}_1 = 
\varnothing$ and obtain the same rate region $\mathcal{R}_i^{(2)}$.
Similarly, in the case of $\alpha_2 C_{21}-I(\hat{Y}_2;Y_2|U,W,Y_1)$, we can set 
$\hat{Y}_2 = \varnothing$ and obtain the same rate region. 
\end{remark}

\begin{remark}
One can obtain an alternative achievable rate region by exchanging the cooperative protocol at the receivers in the achievability scheme of Proposition \ref{th.ibc2}, i.e., to apply a combination of decode-and-forward and quantize-bin-and-forward at the receiver $Y_1$ first and then quantize-bin-and-forward at the receiver $Y_2$.
\end{remark}

Similar to the region $\mathcal{R}_i^{(1)}$, we can derive the optimal value of the parameter $\alpha_2$ for the region $\mathcal{R}_i^{(2)}$. 

\begin{theorem}\label{alpha_2*}
For the two-user BC with conferencing decoders shown in Fig.~\ref{fig:BC_CD}, 
the rate region expressed in \eqref{eq:ibc2eq}-\eqref{eq:ibc2eq_end} over
some joint PDFs $P(u,v,w,x)P(\hat{y}_1|u,w,y_1)P(\hat{y}_2|w,y_2)$ is 
achievable. This achievable rate region is obtained by setting 
the parameter $\alpha_2$ in $\mathcal{R}_i^{(2)}$ optimally as $\alpha_2^*$, 
given below:
\newcounter{storeeqcounter_two}
\addtocounter{equation}{5}%
\setcounter{storeeqcounter_two}%
{\value{equation}}%
\begin{equation}\label{eq:alpha_2*}
    \alpha_2^* 
    =\min \left\{\frac{I(\hat{Y}_2;Y_2|W,Y_1)}{C _{21}},1\right\}.
\end{equation}
\end{theorem}


\begin{IEEEproof}
	First of all, $\eta_1$ and $\eta_2$ in the characterization (\ref{eq:ibc2})-(\ref{eq:ibc2_end}) are respectively replaced by $\Tilde{\eta}_1$ and $\Tilde{\eta}_2$ given in (\ref{eq:etarep}). Now, by setting $\alpha_2=\alpha_2^*$ in (\ref{eq:alpha_2*}), we easily obtain the characterization (\ref{eq:ibc2eq})-(\ref{eq:ibc2eq_end}). To prove that this choice is in fact optimal, it is sufficient to consider the following inequalities:
\begin{align}
& \min \Big \{I(U,W;Y_1)+\alpha_2 C_{21}-I(\hat{Y}_2;Y_2|U,W,Y_1)+\Bar{\alpha}_2C_{21}, \nonumber \\
& \qquad \qquad I(U,W;Y_1,\hat{Y}_2)+\Bar{\alpha}_2C_{21}\Big \} \nonumber \\
& \qquad \le I(U,W;Y_1)+C_{21}-I(\hat{Y}_2;Y_2|U,W,Y_1)
\end{align}
and
\begin{align}
& \min 
\left \{\begin{array}{l}
I(U;Y_1|W)+\alpha_2 C_{21}-I(\hat{Y}_2;Y_2|U,W,Y_1), \\I(U;Y_1,\hat{Y}_2|W)
\end{array}\right\} \nonumber \\
& \le \min 
\left \{\begin{array}{l}
{I(U;Y_1|W)+C_{21}-I(\hat{Y}_2;Y_2|U,W,Y_1),}\\{I(U;Y_1,\hat{Y}_2|W)}
\end{array}\right\}
\end{align}
and by noting that $I(\hat{Y}_2;U,Y_2|W,Y_1) =I(\hat{Y}_2;Y_2|W,Y_1)$.
\end{IEEEproof}

In general, it is not easy to compare the two regions $\mathcal{R}_i^{(1)}$ and $\mathcal{R}_i^{(2)}$ numerically. An advantage of the region $\mathcal{R}_i^{(2)}$ over the region $\mathcal{R}_i^{(1)}$ is that its feasible set of distributions is a larger set. 
We further note that by allowing a fourth auxiliary random variable, it is possible to design an achievability scheme that includes both of $\mathcal{R}_i^{(1)}$ and $\mathcal{R}_i^{(2)}$ as special cases. However, evaluating the resulting rate region would be complex. 

\subsection{Capacity Regions for Specific Channels}

The achievability results in the previous section may be further improved, e.g., by 
applying multiple rounds of cooperation at the receivers. However, in what follows, we demonstrate that the rate region $\mathcal{R}_i^{(1)}$ is already sufficient to establish new capacity results for several particular two-user BCs with conferencing decoders (as well as approximate capacity results for the Gaussian channel as seen in the next section). The first such capacity result is for a class of semi-deterministic BCs with degraded message sets. 
Note that the determinism here is more general than that of semi-deterministic BC in \cite{Gold}. 

\begin{theorem}\label{th.dbc}
	Consider the two-user BC with degraded message sets, with a common message for both users and a private message for the first user, and bidirectional conferencing decoders. If the BC satisfies certain semi-deterministic property in the sense that $Y_2=f(X,Y_1)$, then its capacity region is given by the closure of the convex hull of all $(R_0,R_1)$ satisfying
\begin{align}\label{eq:dbc}
    R_0 &\le I(V;Y_2)+C_{12}\\
    R_0+R_1 &\le I(X;Y_1)+C_{21}\\
    R_0+R_1 &\le I(X;Y_1|V)+I(V;Y_2)+C_{12}+C_{21}\\
    R_0+R_1 &\le I(X;Y_1,Y_2|V)+I(V;Y_2)+C_{12} \label{eq:BC_degraded_Y1Y2} \\
    R_0+R_1 &\le I(X;Y_1,Y_2)
\end{align}
for some joint PDFs $P(v,x)$,
where $V \rightarrow X \rightarrow Y_1,Y_2$ forms a Markov chain and $|\mathcal{V}|\le |\mathcal{X}|+2$.
\end{theorem}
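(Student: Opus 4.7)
The plan is to prove the theorem in two parts. The converse is a specialization of the outer bound $\mathcal{R}_o$ of Theorem~\ref{th.obc} to the degraded-message-set case $R_2=0$, combined with one algebraic rewrite and one data-processing step. The achievability is a specialization of the simplified rate region \eqref{eq:ibc1eq}--\eqref{eq:ibc1eq_end} of Theorem~\ref{alpha_1*} under an assignment of auxiliaries that exploits the semi-deterministic structure $Y_2=f(X,Y_1)$.

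For the converse, I would set $R_2=0$ in every bound of Theorem~\ref{th.obc}. Three of the five claimed inequalities then appear directly: $R_0\le I(V;Y_2)+C_{12}$ from \eqref{eq:outer_R2V}; $R_0+R_1\le I(X;Y_1|V)+I(V;Y_2)+C_{12}+C_{21}$ from \eqref{eq:outer_RsumV_1}; and $R_0+R_1\le I(X;Y_1,Y_2)$ from \eqref{eq:outer_cs}. The inequality $R_0+R_1\le I(X;Y_1)+C_{21}$ follows from \eqref{eq:outer_R1U} after discarding $U$ through the data-processing inequality $I(U;Y_1)\le I(X;Y_1)$, which is valid because $U\to X\to Y_1$ is a Markov chain. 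Finally, the bound \eqref{eq:BC_degraded_Y1Y2} is obtained from \eqref{eq:outer_RsumV_2} by the identity
\begin{equation*}
I(X;Y_1\mid Y_2,V)+I(X;Y_2)\;=\;I(X;Y_1,Y_2\mid V)+I(V;Y_2),
\end{equation*}
which comes from the chain rule $I(X;Y_1,Y_2\mid V)=I(X;Y_2\mid V)+I(X;Y_1\mid Y_2,V)$ together with $I(X;Y_2)-I(X;Y_2\mid V)=I(V;Y_2)$, the latter being a consequence of the Markov chain $V\to X\to Y_2$.

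For the achievability, I would evaluate \eqref{eq:ibc1eq}--\eqref{eq:ibc1eq_end} under the following assignment of auxiliaries. To avoid symbol clash, denote the auxiliaries of Theorem~\ref{alpha_1*} by $(U_\star,V_\star,W_\star)$. Take $U_\star\equiv X$; take $V_\star\equiv\varnothing$ (since $R_2=0$, no user-2 private codebook is needed); take $W_\star\equiv V$ (the common-message codeword is identified with the target auxiliary $V$); take $\hat Y_1\equiv\varnothing$ (so all of $C_{12}$ is spent on decode-and-forward of the common message); and take $\hat Y_2\equiv Y_2$. This last choice is the key operational point: because $Y_2=f(X,Y_1)$, one has $I(\hat Y_2;Y_2\mid U_\star,W_\star,Y_1)=H(Y_2\mid X,V,Y_1)=0$, so the Wyner--Ziv compression at user~2 is costless and the link $C_{21}$ appears at its full value in the rate bounds. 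A direct substitution then shows that the five inequalities of \eqref{eq:ibc1eq}--\eqref{eq:ibc1eq_end} collapse onto the five claimed inequalities; the $2R_0+R_1$ constraint that also appears is the sum of two of the claimed bounds and hence redundant.

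The main subtle point is locating the two observations that make converse and achievability coincide: the algebraic identity above, which brings \eqref{eq:outer_RsumV_2} into the form \eqref{eq:BC_degraded_Y1Y2} appearing in the theorem, and the recognition that the semi-deterministic hypothesis permits the costless choice $\hat Y_2\equiv Y_2$. Once these two observations are in hand the rest of the argument is mechanical verification, and no time-sharing beyond the convex closure already implicit in $\mathcal{R}_i^{(1)}$ is needed. A byproduct is the conclusion that a single-round, one-auxiliary cooperation scheme---decode-and-forward of the common message through $C_{12}$ combined with quantize-and-forward of $Y_2$ through $C_{21}$---already achieves capacity.
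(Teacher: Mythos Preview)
Your proof is correct and follows essentially the same route as the paper's own argument: the converse is obtained from $\mathcal{R}_o$ with $R_2=0$, and the achievability from $\mathcal{R}_i^{(1)}$ under the assignment $U\equiv X$, $W\equiv V$, $\hat Y_2\equiv Y_2$ (with $V_\star,\hat Y_1$ set to $\varnothing$). Your write-up is in fact more detailed than the paper's brief proof, since you make explicit the algebraic identity that recasts \eqref{eq:outer_RsumV_2} as \eqref{eq:BC_degraded_Y1Y2}, the data-processing step that removes $U$ from \eqref{eq:outer_R1U}, and the verification that $H(Y_2\mid X,V,Y_1)=0$ renders the quantize-and-forward at user~2 costless; these are precisely the points the paper leaves to the reader.
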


\begin{IEEEproof}
The achievability is derived by setting $U = X$, $W = V$, $\hat{Y}_1=\varnothing$, and $\hat{Y}_2 = Y_2$ in $\mathcal{R}_i^{(1)}$. The converse proof is readily given by $\mathcal{R}_o$. 
\end{IEEEproof}

Note that $(Y_1, Y_2)$ appears jointly in the first mutual information expression in
\eqref{eq:BC_degraded_Y1Y2}. Thus, the capacity region of this BC depends on the 
joint distribution $p(y_1,y_2|x)$ and not just on the marginals, in contrast to 
BC without conferencing decoders. 

\begin{remark}
By setting $R_0=0$ and $C_{12}=0$, the result of Theorem \ref{th.dbc} is reduced to the capacity of deterministic primitive relay channel derived in \cite{Kimd}. In this case, the capacity is given by
	$\max_{P(x)} \min\{I(X;Y_1)+C_{21}, I(X;Y_1,Y_2)\}$. Note that for the case of $R_0=0$ and $C_{12}=0$, it is optimal to set $V = \varnothing$. In fact, the cut-set bound is achievable. This capacity is achieved by quantize-bin-and-forward as the relay strategy.
\end{remark}

We present two interesting examples of this class of BC. 

\begin{example}
\label{ex:DMC_1}
Consider the following binary channel,
\begin{align}
\begin{split}
\begin{cases}
    Y_1=X \oplus Z \\
    Y_2=Z
\end{cases}
\end{split}
\end{align}
where $Z$ is a binary noise and $\oplus$ is the XOR operator. For this channel, we have $Y_2=X \oplus Y_1$, so the channel has the desired deterministic property. The capacity region of this BC with degraded message sets is given by all $(R_0,R_1)$ satisfying 
\begin{align}\label{eq:bex1}
    R_0 &\le C_{12}\\
    R_0+R_1 &\le I(X;Y_1)+C_{21}\\
    R_0+R_1 &\le H(X)
\end{align}
for some $P(x)$. 
\end{example}

	In this example, the receiver $Y_2$ does not receive information from the transmitter directly. Instead, it observes the noise $Z$ and relays the noise to the receiver $Y_1$ by sending a compressed version of it through the digital link $C_{21}$. The receiver $Y_1$ then decodes both messages and forwards the common message to the receiver $Y_2$ through the digital link $C_{12}$. Therefore, the receiver $Y_2$ can still receive information at a positive rate.

\begin{example}
\label{ex:DMC_2}
Consider the following binary channel, 
\begin{align}
\begin{cases}
    Y_1=Z \\
    Y_2=X \oplus Z
\end{cases}
\end{align}
where again $Z$ is a binary noise. The capacity region of this BC with degraded message sets is the closure of the convex hull of all $(R_0,R_1)$ satisfying 
\begin{align}\label{eq:bex2}
    R_0 &\le I(V;Y_2)+C_{12}\\
    R_0+R_1 &\le C_{21}\\
    R_0+R_1 &\le H(X|V)+I(V;Y_2)+C_{12} \label{eq:bex2_3th} \\
    R_0+R_1 &\le H(X)
\end{align}
for some $P(v,x)$, where $V \rightarrow X \rightarrow Y_1,Y_2$ forms a Markov chain and $|\mathcal{V}|\le 3$. 
\end{example}

 \begin{figure}
    \centering
    \includegraphics[width=\columnwidth]{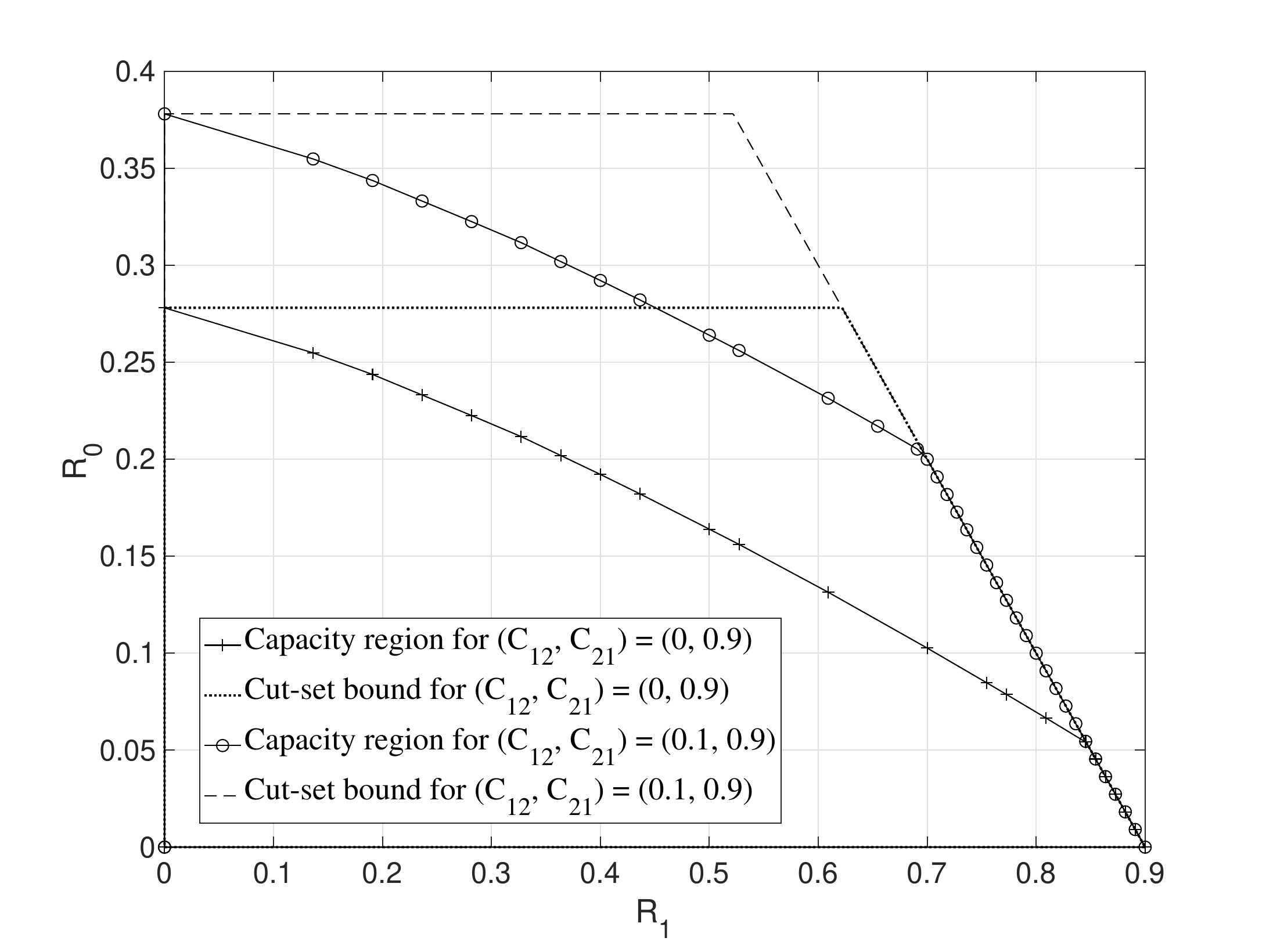}
    \caption{Capacity region and cut-set bound for the BC in Example \ref{ex:DMC_2}, where $Z$ is a binary noise with ${P(Z=0) = 0.2}$.}
    \label{ex:DMC_2-plots}
\end{figure}

Unlike the previous example, for which the capacity region coincides with the cut-set bound, the capacity region for this BC is characterized using an auxiliary variable $V$. 
In Fig.~\ref{ex:DMC_2-plots}, we plot the capacity region along with the cut-set bound for the channel in Example \ref{ex:DMC_2} with $P(Z=0)=0.2$ for two cases: $(C_{12},C_{21}) = (0, 0.9)$ and $(C_{12},C_{21}) = (0.1, 0.9)$, by searching over all possible $P(v,x)$. As can be seen, the capacity region lies strictly below the cut-set bound in this example. Moreover, $C_{12}$ strictly improves the capacity region, although $Y_1$ only observes the noise. In fact, the constraint \eqref{eq:bex2_3th} (which arises from \eqref{eq:BC_degraded_Y1Y2}) plays a crucial role in characterizing the capacity region. Without (\ref{eq:bex2_3th}), the remaining constraints simply reduce to the cut-set bound.

For this channel, the optimal coding strategy is as follows. Note that the receiver $Y_1$ (which is supposed to decode both common and private messages) does not receive any information from the transmitter directly. The optimal cooperation strategy is that the receiver $Y_2$ first sends a compressed version of its received signal to the receiver $Y_1$ through the digital link $C_{21}$. Next, the receiver $Y_1$ decodes both messages using the information received (and its own signal which is in fact the channel noise) and then performs decode-and-forward of the common message through the link $C_{12}$ to the receiver $Y_2$ (using the variable $V$). Lastly, the receiver $Y_2$ decodes the common message using also its own signal. 

It is worthwhile to make the following observation. In Example \ref{ex:DMC_2}, as the receiver $Y_1$ observes the channel noise only, one might think that a cooperative scheme in which $Y_1$ applies quantize-bin-and-forward and $Y_2$ applies decode-and-forward is the right strategy. However, it turns out that such a scheme is not optimal when the message to $Y_2$ is a degraded version of the message to $Y_1$. This is in contrast to Example \ref{ex:DMC_1}, in which the message set degrades in the opposite direction and performing quantize-bin-and-forward at the receiver that observes the noise only is the capacity-achieving strategy.


As far as we know, the above results are the first cases where a combination of quantize-bin-and-forward and decode-and-forward strategies yields an \textit{optimal bidirectional cooperation protocol}. There is no previously known capacity result in the literature for channels with bidirectional cooperation between the receivers. Moreover, these results demonstrate that a one-round cooperation scheme is sufficient for achieving capacity for this class of channels. 

It is also possible to provide a capacity result for a new class of more capable semi-deterministic BCs with both common and private messages and one-sided conferencing. 

\begin{theorem}\label{th.dmbc}
Consider the two-user BC with both common and private messages and with certain semi-deterministic property in the sense that $Y_2=f(X,Y_1)$. Moreover, assume that the channel is
more-capable, i.e., $I(X;Y_2)\le I(X;Y_1)$ for every input distribution
$P(x)$. For the channel with one-sided conferencing, i.e., 
with a digital relay link of finite capacity $C_{21}$ from the decoder at 
$Y_2$ to the decoder at $Y_1$, the capacity region is given by the closure 
of the convex hull of all $(R_0,R_1,R_2)$ satisfying 
\begin{align}\label{eq:dmbc}
    R_0+R_2 &\le I(V;Y_2)\\
    R_0+R_1+R_2 &\le I(X;Y_1)+C_{21}\\
    R_0+R_1+R_2 &\le I(X;Y_1|V)+I(V;Y_2)+C_{21}\\
    R_0+R_1+R_2 &\le I(X;Y_1,Y_2|V)+I(V;Y_2)\\
    R_0+R_1+R_2 &\le I(X;Y_1,Y_2)
\end{align}
for some joint PDFs $P(v,x)$,
where $V \rightarrow X \rightarrow Y_1,Y_2$ forms a Markov chain and $|\mathcal{V}|\le |\mathcal{X}|+2$.
\end{theorem}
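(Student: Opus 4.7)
The plan is to derive the achievability from the inner bound $\mathcal{R}_i^{(1)}$ in its simplified form of Theorem \ref{alpha_1*} with $C_{12} = 0$, and the converse from the outer bound $\mathcal{R}_o$ of Theorem \ref{th.obc} together with the more-capable hypothesis. For the achievability direction, I would take the specialization $U \equiv X$, $W \equiv V$, $\hat{Y}_1 \equiv \varnothing$, and $\hat{Y}_2 \equiv Y_2$. The semi-deterministic assumption $Y_2 = f(X, Y_1)$ forces the quantization penalty $I(\hat{Y}_2; Y_2 \mid U, W, Y_1) = H(Y_2 \mid X, V, Y_1)$ to vanish, while the Marton penalty $I(U; V \mid W) = I(X; V \mid V) = 0$ and the penalty $I(\hat{Y}_1; U, Y_1 \mid V, W, Y_2)$ associated with $\hat{Y}_1 = \varnothing$ are trivially zero. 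Substituting these choices into the five inequalities \eqref{eq:ibc1eq}--\eqref{eq:ibc1eq_end} then yields exactly the five bounds of Theorem \ref{th.dmbc}, together with two ancillary constraints on $R_0 + R_1$ and $2R_0 + R_1 + R_2$ that are rendered redundant by $R_1, R_2 \ge 0$ combined with the bounds on $R_0 + R_1 + R_2$ and $R_0 + R_2$.

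For the converse, specializing $\mathcal{R}_o$ to $C_{12} = 0$ directly produces four of the five claimed inequalities: \eqref{eq:outer_R2V} gives $R_0 + R_2 \le I(V; Y_2)$; \eqref{eq:outer_RsumV_1} gives $R_0 + R_1 + R_2 \le I(X; Y_1 \mid V) + I(V; Y_2) + C_{21}$; the cut-set bound \eqref{eq:outer_cs} gives $R_0 + R_1 + R_2 \le I(X; Y_1, Y_2)$; and \eqref{eq:outer_RsumV_2} gives $R_0 + R_1 + R_2 \le I(X; Y_1 \mid Y_2, V) + I(X; Y_2)$, which I would rewrite as $I(X; Y_1, Y_2 \mid V) + I(V; Y_2)$ using $I(X; Y_2 \mid V) + I(V; Y_2) = I(X; Y_2)$ (valid since $V \to X \to Y_2$). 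The remaining inequality $R_0 + R_1 + R_2 \le I(X; Y_1) + C_{21}$ requires the more-capable hypothesis. Starting from \eqref{eq:outer_RsumU_1}, which with $C_{12} = 0$ reads $R_0 + R_1 + R_2 \le I(X; Y_2 \mid U) + I(U; Y_1) + C_{21}$, I would invoke the conditional form of the more-capable property: because $U \to X \to (Y_1, Y_2)$, the channel seen under every conditional $P(x \mid u)$ coincides with the original $P(y_1, y_2 \mid x)$, so applying the hypothesis $I(X; Y_1) \ge I(X; Y_2)$ pointwise at each $u$ and averaging yields $I(X; Y_2 \mid U) \le I(X; Y_1 \mid U)$. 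Combining this with the chain-rule identity $I(U; Y_1) + I(X; Y_1 \mid U) = I(X; Y_1)$ then closes the bound.

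The main obstacle is precisely the elimination of the auxiliary $U$ from \eqref{eq:outer_RsumU_1} to produce the $I(X; Y_1) + C_{21}$ constraint; this is where the more-capable hypothesis is used crucially, and without it $\mathcal{R}_o$ alone leaves a residual auxiliary that prevents matching the achievability. The remaining pieces are direct substitutions and elementary chain-rule manipulations, and one should additionally verify that the distribution constraint $P(u,v,w,x)\,P(\hat{y}_1 \mid u, w, y_1)\,P(\hat{y}_2 \mid y_2)$ required in $\mathcal{R}_i^{(1)}$ remains satisfied under the deterministic substitutions $U = X$, $W = V$, $\hat{Y}_1 = \varnothing$, $\hat{Y}_2 = Y_2$, which is immediate.
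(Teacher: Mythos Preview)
Your proposal is correct and follows exactly the paper's approach: the paper's proof consists of the single line ``The achievability is derived by setting $U \equiv X$, $W \equiv V$, and $\hat{Y}_2 \equiv Y_2$ in $\mathcal{R}_i^{(1)}$. The converse is given by $\mathcal{R}_o$,'' and you have correctly expanded the details, including the one nontrivial step the paper leaves implicit, namely eliminating the auxiliary $U$ from \eqref{eq:outer_RsumU_1} via the more-capable hypothesis to obtain $R_0+R_1+R_2 \le I(X;Y_1)+C_{21}$.
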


\begin{IEEEproof}
The achievability is derived by setting $U = X$, $W = V$,  $\hat{Y}_1=\varnothing$, and $\hat{Y}_2 = Y_2$ in $\mathcal{R}_i^{(1)}$. 
Note that the more capable condition and one-sided conferencing imply that the private message for user 2 can always be decoded by user 1.

For the converse, note that due to the Markov chain condition $U \rightarrow X \rightarrow Y_1,Y_2$, the more-capable condition implies%
\begin{align}
    I(X;Y_2|U) \le I(X;Y_1|U) \quad \forall P(u,x).
\end{align}
Therefore, the constraint \eqref{eq:outer_RsumU_1} from the outer bound $\mathcal{R}_o$ with $C_{12} = 0$ becomes 
\begin{align}
R_0 + R_1 + R_2 &\leq I(X;Y_2|U) + I(U;Y_1) + C_{21} \\
&\leq I(X;Y_1|U) + I(U;Y_1) + C_{21} \\
& = I(X;Y_1) + C_{21},
\end{align}
where the last step is again due to the Markov chain.
The remaining constraints are directly derived from $\mathcal{R}_o$ by setting $C_{12} = 0$.
The resulting outer bound coincides with  $\mathcal{R}_i^{(1)}$.
\end{IEEEproof}

\begin{remark}
The structure of the novel outer bound $\mathcal{R}_o$ given in Theorem \ref{th.obc}, in particular, with respect to the constraints \eqref{eq:outer_RsumV_2} and \eqref{eq:outer_RsumU_2}, is crucial for deriving the capacity results in Theorems \ref{th.dbc} and \ref{th.dmbc}. 
Specific combinations of conditions, such as semi-determinism, degraded message sets, and the more capable channel, make it possible for the converse
	to be tight.
\end{remark}

\section{Gaussian BC with Conferencing Decoders}
\label{sec:gaussian}

We now study the Gaussian BC with conferencing decoders and with correlated noises as shown in Fig.~\ref{fig:Gaussian}. The channel model is defined as follows:
\begin{align}\label{eq:GBC}
\begin{cases}
    Y_1=aX+Z_1\\
    Y_2=bX+Z_2
\end{cases}
\end{align}
where $Z_1$ and $Z_2$ are correlated Gaussian noises with zero mean and unit variance and correlation coefficient $\lambda$, i.e., $\mathbb{E}[Z_1Z_2]=\lambda$, $X$ is the input signal with $\mathbb{E}[X^2]\le P$, and $a$, and $b$ are the real-valued channel gains. 
Throughout this section, the transmit power constraint is assumed to be \emph{strictly} positive, i.e., $P > 0$.
The Gaussian BC is equipped with a conferencing link of capacity $C_{12}$ from the decoder at $Y_1$ to the decoder at $Y_2$, and a conference link of capacity $C_{21}$ from the decoder at $Y_2$ to the decoder at $Y_1$. 

We show that the bounds derived in earlier part of the paper are useful for deriving capacity and approximate capacity results for various types of Gaussian BC with conferencing decoders.

\begin{figure}
    \centering
    \includegraphics[width=0.5\textwidth]{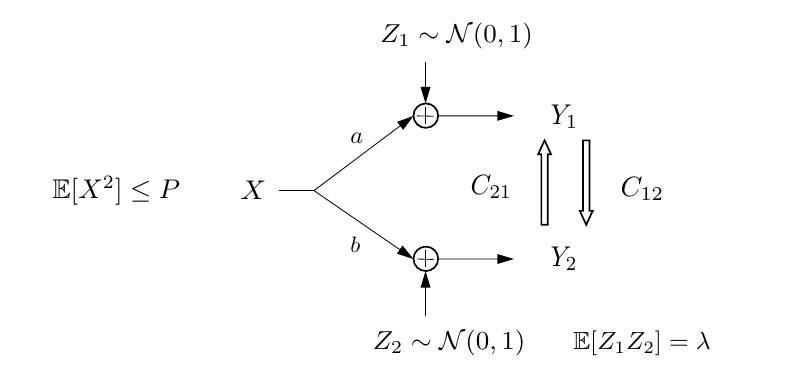}
    \caption{Two-user Gaussian BC with power constraint $P>0$ and noise correlation $\lambda$, and with conferencing links of capacities $C_{12}$ and $C_{21}$. }
    \label{fig:Gaussian}
\end{figure}

\subsection{Converse}

We begin with the outer bound.  By taking into consideration the input
power constraint $\mathbb{E}[X^2]\le P$, 
the outer bound in Theorem \ref{th.obc} can be optimized over its auxiliary variables $U$ and $V$ for the Gaussian channel. 
This gives 
an explicit characterization of the mutual information terms in the
outer bound as derived below.

\begin{theorem}\label{th.ogbc}
Consider the Gaussian BC (\ref{eq:GBC}) with conferencing decoders. Assume that $|a|\ge|b|$. 
Let $\mathcal{R}_o^G$ denote the set of all rate triples $(R_0,R_1,R_2)$ such that for some $\alpha, \beta \in [0,1]$
\begin{align}
\label{eq:ogbc}
   R_0+R_1 &\le \psi\left(\frac{(1-\alpha)a^2P}{\alpha a^2P+1}\right)+C_{21}\\
   R_1 &\le \Psi_2+\psi\left(\frac{(1-\beta)b^2P}{\beta b^2P+1}\right) \label{eq:ogbc_2} \\
   R_0+R_2 &\le \psi\left(\frac{(1-\beta)b^2P}{\beta b^2P+1}\right)+C_{12} \label{eq:ogbc_3} \\
   R_2 &\le \Psi_1+\psi\left(\frac{(1-\alpha)b^2P}{\alpha b^2P+1}\right) \label{eq:ogbc_4} \\
   R_0+R_1+R_2 &\le \psi\left(\beta a^2P\right)+\psi\left(\frac{(1-\beta)b^2P}{\beta b^2P+1}\right) \nonumber \\ & \qquad \qquad \qquad \qquad 
	+C_{12}+C_{21} \label{eq:ogbc_5} \\
   R_0+R_1+R_2 &\le \Psi_2+\psi\left(\frac{(1-\beta)b^2P}{\beta b^2P+1}\right)+C_{12} \label{eq:ogbc_6} \\
   R_0+R_1+R_2 &\le \Psi_1+\psi\left(\frac{(1-\alpha)a^2P}{\alpha a^2P+1}\right)+C_{21} \label{eq:ogbc_7} \\
   R_0+R_1+R_2 &\le \psi\left(\left(\frac{a^2+b^2-2\lambda ab}{1-\lambda^2}\right)P\right),
\label{eq:ogbc_8}
\end{align}
where 
\begin{align}
\Psi_1&=\psi\left(\alpha\left(\frac{a^2+b^2-2\lambda ab}{1-\lambda^2}\right)P\right)\\
\Psi_2&=\psi\left(\beta\left(\frac{a^2+b^2-2\lambda ab}{1-\lambda^2}\right)P\right).
\end{align}
Then $\mathcal{R}_o^G$ constitutes an outer bound on the capacity region of the BC.
\end{theorem}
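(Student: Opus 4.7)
My plan is to evaluate each constraint of the discrete-memoryless outer bound $\mathcal{R}_o$ from Theorem~\ref{th.obc} for the Gaussian channel~(\ref{eq:GBC}) under the power constraint $\mathbb{E}[X^2]\le P$, and to express each resulting bound in terms of two parameters $\alpha,\beta\in[0,1]$ that are determined from the auxiliary distribution $P(u,v,x)$. I would introduce $\alpha$ and $\beta$ via
\begin{align*}
h(Y_1|U) = \tfrac{1}{2}\log\bigl(2\pi e(\alpha a^2 P + 1)\bigr), \qquad h(Y_2|V) = \tfrac{1}{2}\log\bigl(2\pi e(\beta b^2 P + 1)\bigr),
\end{align*}
so that $\alpha,\beta\in[0,1]$ follows from the chain $h(Z_i)\le h(Y_i|\cdot)\le h(Y_i)\le \tfrac{1}{2}\log(2\pi e(\cdot\,^2 P+1))$. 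Under this choice, $I(U;Y_1)\le \psi((1-\alpha)a^2P/(\alpha a^2P+1))$ and $I(V;Y_2)\le \psi((1-\beta)b^2P/(\beta b^2P+1))$ follow immediately, yielding \eqref{eq:ogbc} and \eqref{eq:ogbc_3}.

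For the joint-output terms I would introduce the scalar sufficient statistic for $X$ given $(Y_1,Y_2)$,
\begin{align*}
S = \mathbf{h}^\top K_Z^{-1}(Y_1,Y_2)^\top = cX + N_S,\qquad c = \frac{a^2+b^2-2\lambda ab}{1-\lambda^2},\; N_S\sim\mathcal{N}(0,c),
\end{align*}
where $\mathbf{h}=(a,b)^\top$ and $K_Z=\bigl(\begin{smallmatrix}1&\lambda\\\lambda&1\end{smallmatrix}\bigr)$. Writing $(Y_1,Y_2)\leftrightarrow(S,T)$ with $T$ the orthogonal complement ($\mathbf{t}^\top \mathbf{h}=0$), the random variable $T$ is a linear combination of $(Z_1,Z_2)$ only and is therefore independent of $(X,U,V,S)$. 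The Gaussian max-entropy bound applied to $h(S|U)$, $h(S|V)$, and $h(S)$ then gives $I(X;Y_1,Y_2|U)=I(X;S|U)\le\Psi_1$, $I(X;Y_1,Y_2|V)=I(X;S|V)\le\Psi_2$, and the cut-set bound $I(X;Y_1,Y_2)\le \psi(cP)$ of \eqref{eq:ogbc_8}. For the cross-output terms $I(U;Y_2)$, $I(V;Y_1)$, $I(X;Y_1|V)$, $I(X;Y_2|U)$, I would use the orthogonalized decomposition
\begin{align*}
Y_2 = \lambda Y_1 + (b-\lambda a)X + W,\qquad W\sim\mathcal{N}(0,1-\lambda^2),\; W\perp (X,Y_1,U,V),
\end{align*}
together with the conditional EPI and the assumption $|a|\ge|b|$, to obtain bounds of the form $I(U;Y_2)\le \psi((1-\alpha)b^2P/(\alpha b^2P+1))$ and symmetric versions in $\beta$.

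The final step combines these per-term evaluations with the Markov identities (which follow from $U,V\to X\to (Y_1,Y_2)$)
\begin{align*}
I(X;Y_1|Y_2,V)+I(X;Y_2) &= I(X;Y_1,Y_2|V)+I(V;Y_2),\\
I(X;Y_2|Y_1,U)+I(X;Y_1) &= I(X;Y_1,Y_2|U)+I(U;Y_1),
\end{align*}
and their three sum-rate analogues, to map each DMC inequality of Theorem~\ref{th.obc} onto the corresponding Gaussian inequality in \eqref{eq:ogbc}--\eqref{eq:ogbc_8}. The main obstacle is to show that a single pair $(\alpha,\beta)$ simultaneously validates all the bounds. Reconciling the EPI-based lower bounds on $h(Y_i|U)$ and $h(Y_i|V)$ (which control the $\psi((1-\alpha)\cdot)$ and $\psi((1-\beta)\cdot)$ terms) with the max-entropy upper bounds on $h(S|U)$ and $h(S|V)$ (which drive $\Psi_1$ and $\Psi_2$) is the delicate technical step, and it is precisely where the hypothesis $|a|\ge|b|$ and the specific form of $c$ in terms of $\lambda$ come into play.
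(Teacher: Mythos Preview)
Your overall strategy matches the paper's: start from the DMC outer bound $\mathcal{R}_o$, introduce a scalar sufficient statistic for $X$ given $(Y_1,Y_2)$ (the paper uses $\hat{Y}=(a-\lambda b)Y_1+(b-\lambda a)Y_2$, which is your $S$ up to the factor $1-\lambda^2$), parametrize the auxiliary distribution by a single conditional entropy per auxiliary, and tie all the mutual-information terms to that parameter via the entropy power inequality. The Markov rewritings you list for the $R_1$, $R_2$ and sum-rate constraints are exactly the ones the paper uses (and the paper, like you, discards the constraints that become redundant under $|a|\ge|b|$).

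There is, however, one concrete gap. You state that ``the Gaussian max-entropy bound applied to $h(S|U)$, $h(S|V)$'' yields $I(X;S|U)\le\Psi_1$ and $I(X;S|V)\le\Psi_2$. It does not. Max-entropy gives $h(S|U)\le\tfrac12\log 2\pi e\big(c^2\mathrm{Var}(X|U)+c\big)$, hence $I(X;S|U)\le\psi\big(c\,\mathrm{Var}(X|U)\big)$; but from your definition $h(Y_1|U)=\tfrac12\log 2\pi e(\alpha a^2P+1)$ you only get $\mathrm{Var}(X|U)\ge\alpha P$, which is the wrong direction. What is actually needed---and what the paper does---is EPI in the \emph{subtractive} form: since (marginally) $Y_1$ is a degraded version of $S$, one writes $Y_1\stackrel{d}{=}\tfrac{a}{c}S+\hat{Z}^\ast$ with $\hat{Z}^\ast$ independent, and conditional EPI gives $e^{2h(Y_1|U)}\ge (a^2/c^2)e^{2h(S|U)}+e^{2h(\hat{Z}^\ast)}$, from which $h(S|U)\le\tfrac12\log 2\pi e\,c(\alpha cP+1)$ and hence $I(X;S|U)\le\Psi_1$. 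The same subtractive EPI, applied to $h(Y_2|V)$ (or $h(Y_1|V)$), is what produces $\Psi_2$ and the bound $I(X;Y_1|V)\le\psi(\beta a^2P)$. So in your last paragraph the labeling should be flipped: it is EPI (not max-entropy) that controls the $\Psi_1$, $\Psi_2$ terms, while max-entropy is only used for the unconditional entropies $h(Y_1)$, $h(Y_2)$, $h(S)$.

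A minor stylistic difference: the paper parametrizes $\beta$ via $h(Y_1|V)$ rather than $h(Y_2|V)$. Both choices lead to the same region $\mathcal{R}_o^G$, but the paper's choice makes one of the two required EPI applications (the lower bound on $h(Y_2|V)$) the standard additive form, whereas your choice forces subtractive EPI for both $h(Y_1|V)$ and $h(S|V)$.
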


\begin{IEEEproof}
The outer bound is based on applying the entropy power inequality to the outer
bound derived in Theorem \ref{th.obc}. The details are given in Appendix \ref{app:outer_bound_gbc}. 
\end{IEEEproof}

\begin{remark}\label{degraded1}
For the Gaussian BC (\ref{eq:GBC}) with $|a|\ge|b|$, when $\lambda=\frac{b}{a}$, 
we have
\begin{align}
    \frac{a^2+b^2-2\lambda ab}{1-\lambda^2}= a^2. 
\label{eq:condition_lambda_ab}
\end{align}
Note that in the case of
$\lambda=1$ (with $a=b$), or $\lambda=-1$ (with $a=-b$), the above equality 
is to be interpreted in the limiting sense. 
Further, when $\lambda = \frac{b}{a}$, the BC is physically degraded, for which 
the outer bound $\mathcal{R}_o^G$ is achievable. 
(See Remark \ref{degraded2}.)
\end{remark}

We now discuss the BC with perfectly correlated noises, i.e., $\lambda=1$ or $-1$, but with the assumption that $\lambda \neq \frac{b}{a}$.  

\subsection{Capacity Regions for Specific Channels}

\begin{theorem}\label{th.FCgbc1}
	Consider the two-user Gaussian BC (\ref{eq:GBC}) with degraded message sets (i.e., a common message for both receivers at rate $R_0$ and a private message for the first receiver at rate $R_1$) and bidirectional conferencing decoders. 
Assume $P>0$. When the noises are perfectly correlated, i.e., $|\lambda|=1$, 
but $\lambda \neq \frac{b}{a}$, the capacity region of the channel is given by
the closure of the convex hull of all $(R_0,R_1)$ satisfying
\begin{align}
\label{eq:FCgbc1} 
R_0 &\le \psi\left(\frac{(1-\beta)b^2P}{\beta b^2P+1}\right)+C_{12}\\
    R_0+R_1 &\le \psi\left(a^2P\right)+C_{21}\\
    R_0+R_1 &\le \psi\left(\beta a^2P\right)+\psi\left(\frac{(1-\beta)b^2P}{\beta b^2P+1}\right)+C_{12}+C_{21}
\label{eq:FCgbc1_3} 
\end{align}
for some $\beta\in[0,1]$.
\end{theorem}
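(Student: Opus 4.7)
The plan is to derive the converse by specializing the Gaussian outer bound of Theorem \ref{th.ogbc} to the degraded-message-set setting $R_2 = 0$, and to derive the achievability from Theorem \ref{th.dbc} with Gaussian auxiliary random variables, invoking the standard extension of discrete memoryless achievability to a Gaussian channel under a power constraint.

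For the converse, the key observation is that when $|\lambda|=1$ and $\lambda \ne b/a$, i.e., $a - \lambda b \ne 0$, the quantity
\[
\frac{a^2 + b^2 - 2\lambda a b}{1 - \lambda^2} \;=\; \frac{(a-\lambda b)^2}{1-\lambda^2} + b^2
\]
diverges to $+\infty$. Consequently, $\Psi_1$ and $\Psi_2$ are infinite whenever $\alpha > 0$ or $\beta > 0$, and the cut-set bound \eqref{eq:ogbc_8} is vacuous. Setting $R_2 = 0$ in Theorem \ref{th.ogbc}, for any achievable pair $(R_0, R_1)$ with associated outer-bound parameters $(\alpha_0, \beta_0)$, the only bounds that need to be invoked are \eqref{eq:ogbc}, \eqref{eq:ogbc_3}, and \eqref{eq:ogbc_5}. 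The first of these implies $R_0 + R_1 \le \psi(a^2 P) + C_{21}$ since its right-hand side is maximized at $\alpha = 0$, while \eqref{eq:ogbc_3} and \eqref{eq:ogbc_5}, evaluated at $\beta = \beta_0$, coincide with the remaining two inequalities of the theorem; taking $\beta_0$ as the parameter in the theorem's characterization completes the converse.

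For the achievability, note that $|\lambda| = 1$ gives $Z_1 = \lambda Z_2$, and hence
\[
Y_2 \;=\; bX + Z_2 \;=\; bX + \lambda(Y_1 - aX) \;=\; (b - \lambda a)X + \lambda Y_1,
\]
so the channel is semi-deterministic with $Y_2 = f(X, Y_1)$, and Theorem \ref{th.dbc} applies (its DMC achievability scheme extends to the Gaussian model by evaluating the rate region with Gaussian auxiliary random variables via the usual discretization argument). Take $X = X_0 + X_1$ with $X_0 \sim \mathcal{N}(0, (1-\beta)P)$ and $X_1 \sim \mathcal{N}(0, \beta P)$ independent, and set $V = X_0$. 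Direct computation yields
\begin{align*}
I(V; Y_2) &= \psi\left(\frac{(1-\beta)b^2 P}{\beta b^2 P + 1}\right), \\
I(X; Y_1) &= \psi(a^2 P), \\
I(X; Y_1 \mid V) &= \psi(\beta a^2 P).
\end{align*}
For any $\beta \in (0,1]$, since $a - \lambda b \ne 0$ and $X_1$ has positive variance, the linear map $(X_1, Z_1) \mapsto (aX_1 + Z_1,\, bX_1 + \lambda Z_1)$ is invertible, so $I(X; Y_1, Y_2 \mid V)$ and $I(X; Y_1, Y_2)$ are both infinite and the last two constraints in \eqref{eq:dbc} are vacuous. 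The remaining three constraints reduce exactly to \eqref{eq:FCgbc1}--\eqref{eq:FCgbc1_3}, and the closure in $\beta$ covers the boundary $\beta = 0$ as well.

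The main technical hurdle will be a rigorous application of Theorem \ref{th.dbc} in the Gaussian regime, particularly the quantize-bin-and-forward step with $\hat{Y}_2 = Y_2$ when $Y_2$ is continuous. This is handled by taking $\hat{Y}_2 = Y_2 + N_\epsilon$ for an independent Gaussian $N_\epsilon$ of vanishing variance: since $H(Y_2 \mid X, V, Y_1) = 0$ in the semi-deterministic case, the Wyner--Ziv rate $I(\hat{Y}_2; Y_2 \mid X, V, Y_1)$ tends to zero, so the full $C_{21}$ bits on the cooperation link become available for decoding, matching the outer-bound term $C_{21}$.
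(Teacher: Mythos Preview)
Your approach is essentially the paper's: the achievability is the same Gaussian specialization of $\mathcal{R}_i^{(1)}$ (you route it through Theorem~\ref{th.dbc}, whose achievability is itself obtained from $\mathcal{R}_i^{(1)}$ with $U\equiv X$, $W\equiv V$, $\hat Y_2\equiv Y_2$), and your final paragraph's device $\hat Y_2=Y_2+N_\epsilon$ with $\epsilon\to 0$ is exactly what the paper does. The converse via the Gaussian outer bound $\mathcal{R}_o^G$ with $R_2=0$ is also the paper's argument.

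There is one genuine gap in your converse. Theorem~\ref{th.ogbc} is stated under the hypothesis $|a|\ge|b|$, so your reduction only establishes the converse in that regime; the theorem you are proving carries no such assumption. The paper handles $|a|<|b|$ separately by observing that the proposed achievable region evaluated at $\beta=0$ already coincides with the cut-set bound in that case, which therefore characterizes the full capacity region. You should add this case distinction (or, symmetrically, invoke the version of Theorem~\ref{th.ogbc} with the roles of the receivers swapped).
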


\begin{IEEEproof}
The achievability is derived from $\mathcal{R}_i^{(1)}$ given in Proposition \ref{th.ibc1} by setting $X = U = W+\Bar{W}$ and $V = \varnothing$, and $\hat{Y}_2 = Y_2+\hat{Z}_2$,  where $W$ and $\Bar{W}$ are two independent Gaussian variables with zero means and variances $(1-\beta) P$ and $\beta P$, respectively, and $\hat{Z}_2$ is a Gaussian variable (independent of all other variables) with zero mean and variance $\hat{\sigma}^2$. Note that by this choice of random variables, when $|\lambda|=1$, we have
\begin{align}
    I(\hat{Y}_2;Y_2|W,U,Y_1)=\psi \left(\frac{1-\lambda^2}{\hat{\sigma}^2}\right)=0
\end{align}
Moreover,
\begin{align}
    I(X;Y_1,\hat{Y}_2)&=\psi\left(\left(\frac{a^2+b^2-2\lambda ab+\hat{\sigma}^2a^2}{1-\lambda^2+\hat{\sigma}^2}\right) P\right) \nonumber \\
    &=\psi\left(\left(\frac{a^2+b^2-2\lambda ab+\hat{\sigma}^2a^2}{\hat{\sigma}^2}\right)P\right)
\end{align}
and 
\begin{align}
    I(X;Y_1,\hat{Y_2}|W)&=\psi\left(\beta\left(\frac{a^2+b^2-2\lambda ab+\hat{\sigma}^2a^2}{1-\lambda^2+\hat{\sigma}^2}\right)P\right) \nonumber \\
    &=\psi\left(\beta\left(\frac{a^2+b^2-2\lambda ab+\hat{\sigma}^2a^2}{\hat{\sigma}^2}\right)P\right).
\end{align}
Note that if $|\lambda|=1$ but $\lambda \neq \frac{b}{a}$, then
$a^2+b^2-2\lambda ab+\hat{\sigma}^2a^2 \neq 0$.
Therefore, by letting $\hat{\sigma}^2\rightarrow 0$, one can make the above
two mutual information terms arbitrarily large, so they would not be
in effect in the characterization of the achievable rate region. 
For the case of $|a|\ge|b|$, the converse proof is readily given by $\mathcal{R}_o^G$ in Theorem \ref{th.ogbc}. 
For the case of $|a|<|b|$, the achievable rate region (\ref{eq:FCgbc1})-(\ref{eq:FCgbc1_3}) is maximized for $\beta =0$ and it coincides with the cut-set bound.    
\end{IEEEproof}

The capacity result in Theorem \ref{th.FCgbc1} is 
among the rare cases in network information theory for which the quantize-bin-and-forward strategy contributes to achieving capacity.

\begin{example}
\label{ex:GBC_1}
A special case of the Gaussian BC with perfectly correlated noises is the following channel:
\begin{align}\label{eq:GweiBC}
\begin{cases}
    Y_1=X+Z\\
    Y_2=X-Z
\end{cases}
\end{align}
where $Z$ is a zero-mean unit-variance Gaussian noise. In this scenario, the two receivers see exactly the same noise but with a different sign. The capacity region of this channel with degraded message sets can be obtained from Theorem \ref{th.FCgbc1} as: 
\begin{align}\label{eq:gex1}
     R_0 &\le \psi(P)+C_{12}\\
    R_0+R_1 &\le \psi(P)+C_{21}.
\end{align}
\end{example}

In fact for this channel, the cut-set bound is achievable. 
To achieve this capacity, the second receiver applies quantize-bin-and-forward and 
the first receiver applies decode-and-forward as the cooperation protocol. 

Observe that in the case of common message only, even as $P \rightarrow 0$, it is possible to achieve a strictly positive rate of $R_0 = \min\{ C_{12},C_{21} \}$.
This makes concrete the statement made in the introduction of this paper, namely, that it is possible to use an infinitesimally small amount of power to transmit information at a strictly positive rate.
Note that the argument presented in the proof of Theorem \ref{th.FCgbc1} is only valid for strictly positive values of input power $P$.
We summarize this observation in the following remark.


\begin{remark}
	The capacity characterization given in Theorem \ref{th.FCgbc1} gives rise to the following interesting observation. Even with a very small (yet positive) amount of input power $P$, one can transmit information over the channel at a rate as high as the capacities of the conferencing links, if the noises are perfectly correlated, i.e., $|\lambda|=1$, (assuming $\lambda \neq \frac{b}{a}$). In fact, as $P \rightarrow 0$, the capacity region of the Gaussian BC with degraded message sets is as follows:
\begin{align}
     R_0 &\le \epsilon_1(P)+C_{12}\\
    R_0+R_1 &\le \epsilon_2(P)+C_{21}
\end{align}
for some $\epsilon_1 (P)$ and $\epsilon_2(P)$, which go to zero
as $P\rightarrow 0$. 
\end{remark}

Next, we draw a connection between the BC with conferencing decoders and the relay channel.

\begin{remark}
For the special case of a primitive relay channel with perfectly correlated noises, i.e., $|\lambda|=1$, where $Y_2$ acts as a relay for $Y_1$, $C_{12}=0$, and there is only a private message for the first receiver, the capacity result given in Theorem \ref{th.FCgbc1} is reduced to $R_1 \le \psi\left(a^2P\right)+C_{21}$. In this case, using the quantize-bin-and-forward strategy, one can achieve the cut-set bound. Interestingly, the capacity does not depend on $b$ at all (as long as $\lambda \neq \frac{b}{a}$). 
\end{remark}

This result is an example of a semi-deterministic primitive relay channel, because the relay observation $Y_2=b X + Z_2$ is a deterministic function of the input $X$ and the receiver observation $Y_1=aX+Z_1$, when $Z_1$ and $Z_2$ are perfectly correlated (i.e., $|\lambda|=1$) and $\lambda \neq \frac{b}{a}$. In this case, the cut-set bound is achievable \cite{Kimd}.

The following two examples further illustrate Theorem \ref{th.FCgbc1}.


\begin{example}
\label{ex:GBC_2}
Consider the following Gaussian BC:
\begin{align}\label{eq:GBCex2}
\begin{cases}
    Y_1=X+Z\\
    Y_2=Z
\end{cases}
\end{align}
where $Z$ is a zero-mean unit-variance Gaussian noise. In this case, the receiver $Y_2$ does not receive any information from the transmitter and only observes the additive noise of the receiver $Y_1$. The capacity region of this channel with degraded message sets can be derived by setting $a=1$, $b=0$ in (\ref{eq:FCgbc1})-(\ref{eq:FCgbc1_3}) as: 
\begin{align}\label{eq:gex2}
     R_0 &\le C_{12}\\
    R_0+R_1 &\le \psi(P)+C_{21}.
\end{align}
\end{example}

This is the Gaussian counterpart of Example \ref{ex:DMC_1}. 
To achieve this capacity, first the receiver $Y_2$ sends a compressed version of the observed noise $Z$ to the receiver $Y_1$ through the digital link $C_{21}$. Next, the receiver $Y_1$ decodes both common and private messages based on the message from $Y_2$, then forwards the common message to $Y_2$ through the digital link $C_{12}$.

\begin{example}
\label{ex:GBC_3}
Consider the following Gaussian BC:
\begin{align}\label{eq:GBCex3}
\begin{cases}
    Y_1=Z\\
    Y_2=X+Z.
\end{cases}
\end{align}
The capacity region of this channel with degraded message sets is given by
\begin{align}\label{eq:gex3}
     R_0 &\le \psi(P)+C_{12}\\
    R_0+R_1 &\le C_{21}.
\end{align}
\end{example}

This is the Gaussian counterpart of Example \ref{ex:DMC_2}. 
Similar to Example \ref{ex:DMC_2}, the receiver $Y_1$ (which needs to detect both common and private messages) does not receive any information from the transmitter directly. The capacity achieving cooperation protocol is that the receiver $Y_2$ first sends a compressed version of its received signal to the receiver $Y_1$ through the digital link $C_{21}$. Next, the receiver $Y_1$ decodes both messages using the information received (and its own signal, which is in fact just the channel noise) then forwards the bin index of the decoded common message to the receiver $Y_2$ through the link $C_{12}$. Lastly, the receiver $Y_2$ decodes the common message using the received bin index and its own received signal. 

Note that in Example \ref{ex:GBC_3}, as the receiver $Y_1$ observes the channel noise only, one might think that a cooperative scheme in which $Y_1$ applies quantize-bin-and-forward first followed by $Y_2$ applying decode-and-forward should be used. However, such a scheme is not optimal in this particular BC where the message for $Y_2$ is a degraded version of the message for $Y_1$.

Theorem \ref{th.FCgbc1} is a capacity result for the Gaussian BC with degraded message sets, and with bidirectional conferencing decoders and perfectly correlated noises. Next, we treat the case of a Gaussian BC with both common and private messages, but with one-sided cooperation. 

\begin{theorem}\label{th.FCgbc2}
Consider the two-user Gaussian BC (\ref{eq:GBC}) with both common and private messages where $|a|\ge|b|$ and only the weaker receiver is connected to the stronger one by a conferencing link, i.e., $C_{12}=0$. Assume $P>0$. 
For the channel with perfectly correlated noises with $|\lambda|=1$, 
but $\lambda \neq \frac{b}{a}$, the capacity region is given by
the closure of the convex hull of all $(R_0,R_1,R_2)$ satisfying
\begin{align}\label{eq:FCgbc2}
    R_0+R_2 &\le \psi\left(\frac{(1-\beta)b^2P}{\beta b^2P+1}\right)\\
    R_0+R_1+R_2 &\le \psi\left(\beta a^2P\right)+\psi\left(\frac{(1-\beta)b^2P}{\beta b^2P+1}\right)+C_{21}
\end{align}
for some $\beta\in[0,1]$.
\end{theorem}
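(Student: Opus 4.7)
The plan is to establish the capacity region by a matching pair of bounds: achievability by specializing the inner bound $\mathcal{R}_i^{(1)}$ of Theorem~\ref{th.ibc1} (in the simplified form of Theorem~\ref{alpha_1*}) to Gaussian inputs, and the converse directly from the Gaussian outer bound $\mathcal{R}_o^G$ of Theorem~\ref{th.ogbc} specialized to $C_{12}=0$. The argument closely parallels the proof of Theorem~\ref{th.FCgbc1}; the only substantive change is that the cloud layer now carries the private message $R_2$ in addition to the common message $R_0$, so the auxiliary variable associated with the weaker receiver is identified with the cloud center rather than suppressed.

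For achievability, I would take two independent Gaussian codewords $W \sim \mathcal{N}(0,(1-\beta)P)$ and $\bar{W} \sim \mathcal{N}(0,\beta P)$, and set $X = U = W + \bar{W}$, $V \equiv W$, $\hat{Y}_1 \equiv \varnothing$ (since $C_{12}=0$), and $\hat{Y}_2 = Y_2 + \hat{Z}_2$ with $\hat{Z}_2 \sim \mathcal{N}(0,\hat{\sigma}^2)$ independent of all other variables. Exactly as in the proof of Theorem~\ref{th.FCgbc1}, the compression penalty $I(\hat{Y}_2;Y_2 \mid U,W,Y_1) = \tfrac{1}{2}\log\!\bigl(1+(1-\lambda^2)/\hat{\sigma}^2\bigr)$ vanishes when $|\lambda|=1$, so the entire $C_{21}$ budget is usable; at the same time, the mutual informations $I(U;Y_1,\hat{Y}_2\mid W)$ and $I(X;Y_1,\hat{Y}_2)$ diverge as $\hat{\sigma}^2 \to 0$ under the non-degeneracy condition $\lambda \neq b/a$, removing them from the active constraints. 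With these substitutions, the binding inequalities in the specialized $\mathcal{R}_i^{(1)}$ become $R_0+R_2 \le I(W;Y_2) = \psi((1-\beta)b^2P/(\beta b^2P + 1))$ and $R_0+R_1+R_2 \le I(\bar{W};Y_1\mid W) + C_{21} + I(W;Y_2)$, which reproduce the two bounds in the theorem statement.

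The remaining inequalities produced by $\mathcal{R}_i^{(1)}$ under this specialization---namely $R_0+R_1 \le \psi(a^2P)+C_{21}$, the secondary sum-rate bound $R_0+R_1+R_2 \le \psi(a^2P)+C_{21}$ that arises from constraint \eqref{eq:ibc1eq} via $V=W$, and $2R_0+R_1+R_2 \le \psi(a^2P)+C_{21}+\psi((1-\beta)b^2P/(\beta b^2P+1))$---are all implied by the two main bounds together with $R_0,R_1,R_2 \ge 0$, once the elementary inequality
\begin{equation*}
\psi(\beta a^2 P) + \psi\!\left(\tfrac{(1-\beta)b^2P}{\beta b^2P + 1}\right) \;\le\; \psi(a^2 P), \qquad |a|\ge |b|,
\end{equation*}
which rearranges to $(1-\beta)(a^2-b^2)P \ge 0$, is used to absorb the slack.

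For the converse, setting $C_{12}=0$ in $\mathcal{R}_o^G$ and noting that $(a^2+b^2-2\lambda ab)/(1-\lambda^2)\to \infty$ when $|\lambda|=1$ and $\lambda\neq b/a$ renders \eqref{eq:ogbc_2}, \eqref{eq:ogbc_4}, \eqref{eq:ogbc_6}, \eqref{eq:ogbc_7}, and \eqref{eq:ogbc_8} vacuous; the remaining active inequalities \eqref{eq:ogbc_3} and \eqref{eq:ogbc_5} reproduce precisely the two claimed bounds, with the parameter $\beta$ supplied by the outer bound serving as the parameter in the capacity region. The main obstacle lies in carefully handling the two simultaneous degenerate limits---that quantization becomes free and that $(Y_1,\hat{Y}_2)$ becomes essentially noiseless---at the same time; both rely crucially on $|\lambda|=1$ and the hypothesis $\lambda\neq b/a$, and the redundancy of the extra $\mathcal{R}_i^{(1)}$ constraints further requires $|a|\ge|b|$.
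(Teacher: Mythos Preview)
Your proposal is correct and follows essentially the same approach the paper indicates (it states that the proof of Theorem~\ref{th.FCgbc2} is ``similar to that of Theorem~\ref{th.FCgbc1}'' and omits the details). Your only substantive adaptation---taking $V\equiv W$ rather than $V\equiv\varnothing$ so that the cloud layer carries $R_2$---is exactly the right modification, and it matches the choice $W\equiv V$ the paper uses in the analogous discrete setting of Theorem~\ref{th.dmbc}. Your redundancy check via $\psi(\beta a^2P)+\psi((1-\beta)b^2P/(\beta b^2P+1))\le\psi(a^2P)$ is also correct and is precisely where the hypothesis $|a|\ge|b|$ enters.
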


\begin{IEEEproof}
The proof 
	is similar to that of Theorem \ref{th.FCgbc1}. 
\end{IEEEproof}




For the Gaussian BC in which the noises are not perfectly correlated, the inner and outer bounds of this paper can provide approximate capacity results. 
The first such result is for the Gaussian BC with degraded message sets.

\begin{theorem}\label{th.AFCgbc1}
	Consider the two-user Gaussian BC (\ref{eq:GBC}) with degraded message sets (i.e., a common message for both receivers and a private message for the first receiver) and bidirectional conferencing decoders. Assume that $|a|\ge|b|$. 
For all channel parameters $a$, $b$, $C_{12}$, $C_{21}$, and $\lambda$ with $|\lambda|<1$, 
the following achievable rate region is within half a bit from the capacity region:
\begin{align}\label{eq:AFCgbc1}
    R_0 &\le \psi\left(\frac{(1-\beta)b^2P}{\beta b^2P+1}\right)+C_{12}\\
    R_0+R_1 &\le \psi\left(a^2P\right)+\{C_{21}-1/2\}^+ \label{eq:AFCgbc1_2} \\
    R_0+R_1 &\le \psi\left(\left(\frac{a^2+b^2-2\lambda ab+(1-\lambda^2)a^2}{2(1-\lambda^2)}\right)P\right) \label{eq:AFCgbc1_3} \\
    R_0+R_1 &\le \psi\left(\beta a^2P\right) +\psi\left(\frac{(1-\beta)b^2P}{\beta b^2P+1}\right) \nonumber \\ 
& \qquad \qquad \qquad +\{C_{21}-1/2\}^+ +C_{12} \label{eq:AFCgbc1_4} \\
    R_0+R_1 &\le \psi\left(\beta\left(\frac{a^2+b^2-2\lambda ab+(1-\lambda^2)a^2}{2(1-\lambda^2)}\right)P\right) \nonumber \\
& \qquad \qquad \qquad +\psi\left(\frac{(1-\beta)b^2P}{\beta b^2P+1}\right)+C_{12} \label{eq:AFCgbc1_5}
\end{align}
for some $\beta\in[0,1]$.
\end{theorem}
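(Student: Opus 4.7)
The plan is to prove Theorem \ref{th.AFCgbc1} in two steps: first establish achievability of the listed region by specializing the inner bound $\mathcal{R}_i^{(1)}$ (in the simplified form of Theorem \ref{alpha_1*}) with a carefully chosen Gaussian distribution, and then compare term by term with the outer bound $\mathcal{R}_o^G$ of Theorem \ref{th.ogbc} at $R_2=0$ to verify that each achievable constraint is within $\tfrac12$ bit of a matching outer constraint.

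For achievability, I extend the Gaussian construction used in Theorem \ref{th.FCgbc1}: set $X\equiv U\equiv W+\bar W$ with independent $W\sim\mathcal{N}(0,(1-\beta)P)$ and $\bar W\sim\mathcal{N}(0,\beta P)$, take $V\equiv\varnothing$, $\hat Y_1\equiv\varnothing$, $\alpha_1=0$, and let $\hat Y_2=Y_2+\hat Z_2$ with $\hat Z_2\sim\mathcal{N}(0,\hat\sigma^2)$ independent of everything else. The key departure is the choice $\hat\sigma^2=1-\lambda^2$ (rather than $\hat\sigma^2\to 0$). Writing $Z_2=\lambda Z_1+\sqrt{1-\lambda^2}\,Z'$ with $Z'$ independent of $Z_1$ gives $\mathrm{Var}(Y_2\,|\,X,Y_1)=1-\lambda^2$, so a direct calculation yields
\begin{equation*}
I(\hat Y_2;Y_2\,|\,U,W,Y_1)=\tfrac{1}{2}\log\!\Bigl(1+\tfrac{1-\lambda^2}{\hat\sigma^2}\Bigr)=\tfrac{1}{2},
\end{equation*}
and hence $\zeta_2=\{C_{21}-\tfrac12\}^+$ (for $C_{21}<\tfrac12$, Remark \ref{zetarep} allows me to set $\hat Y_2\equiv\varnothing$, which yields the same formula since the positive-part vanishes). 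Standard Gaussian evaluations give $I(U,W;Y_1)=\psi(a^2P)$, $I(U;Y_1\,|\,W)=\psi(\beta a^2P)$, and $I(V,W;Y_2)=\psi\!\bigl(\tfrac{(1-\beta)b^2P}{\beta b^2P+1}\bigr)$, while a $2\times 2$ determinant computation for the Gaussian pair $(Y_1,\hat Y_2)$ together with the identity $\tfrac{a^2+b^2-2\lambda ab+(1-\lambda^2)a^2}{2(1-\lambda^2)}=\tfrac{A+a^2}{2}$, where $A\triangleq\tfrac{a^2+b^2-2\lambda ab}{1-\lambda^2}$, yields $I(X;Y_1,\hat Y_2)=\psi\!\bigl(\tfrac{(A+a^2)P}{2}\bigr)$ and $I(U;Y_1,\hat Y_2\,|\,W)=\psi\!\bigl(\tfrac{\beta(A+a^2)P}{2}\bigr)$. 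Substituting these into Theorem \ref{alpha_1*} reproduces exactly \eqref{eq:AFCgbc1}--\eqref{eq:AFCgbc1_5}.

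For the half-bit gap I fix a witness $(\alpha^*,\beta^*)$ for an arbitrary point in $\mathcal{R}_o^G$ and use the same $\beta^*$ in the inner region. The $R_0$ bound \eqref{eq:AFCgbc1} matches \eqref{eq:ogbc_3} exactly; the four $R_0+R_1$ bounds pair as \eqref{eq:AFCgbc1_2} versus \eqref{eq:outer_R1U} and \eqref{eq:AFCgbc1_4} versus \eqref{eq:ogbc_5}, each differing only by the quantization penalty $\{C_{21}-\tfrac12\}^+-C_{21}\ge -\tfrac12$ (using $\psi\!\bigl(\tfrac{(1-\alpha^*)a^2P}{\alpha^*a^2P+1}\bigr)\le\psi(a^2P)$), and as \eqref{eq:AFCgbc1_3} versus the cut-set \eqref{eq:ogbc_8} and \eqref{eq:AFCgbc1_5} versus \eqref{eq:ogbc_6}. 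The last two pairings both reduce to bounding $\psi(\mu AP)-\psi\!\bigl(\tfrac{\mu(A+a^2)P}{2}\bigr)$ for $\mu\in\{1,\beta^*\}$; rewriting this as $\tfrac12\log\tfrac{1+\mu AP}{1+\mu(A+a^2)P/2}$ and noting that $1+\mu AP\le 2+\mu(A+a^2)P$ (trivial since $1+\mu a^2P\ge 0$) caps the log ratio at $1$, giving a gap of at most $\tfrac12$; non-negativity of this gap follows from $(b-\lambda a)^2\ge 0$, which gives $A\ge a^2$. The main technical hurdle is precisely the choice $\hat\sigma^2=1-\lambda^2$: this value simultaneously makes the quantization cost exactly $\tfrac12$ bit and produces the compact expression $\psi\bigl(\tfrac{(A+a^2)P}{2}\bigr)$ for $I(X;Y_1,\hat Y_2)$, which is what makes the comparison to the cut-set $\psi(AP)$ tight up to half a bit; after this value is identified, everything else is routine algebra on $\psi(\cdot)$.
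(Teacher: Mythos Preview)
Your proof is correct and follows essentially the same approach as the paper: the paper uses the identical Gaussian specialization $X\equiv U\equiv W+\bar W$, $V\equiv\varnothing$, $\hat Y_2=Y_2+\hat Z_2$ with $\hat\sigma^2=1-\lambda^2$ to derive the achievable region, and then pairs the five constraints \eqref{eq:AFCgbc1}--\eqref{eq:AFCgbc1_5} with \eqref{eq:ogbc_3}, \eqref{eq:ogbc}, \eqref{eq:ogbc_8}, \eqref{eq:ogbc_5}, \eqref{eq:ogbc_6} exactly as you do (your reference to \eqref{eq:outer_R1U} should be its Gaussian counterpart \eqref{eq:ogbc}, but the content is the same). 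Your write-up in fact supplies more of the intermediate algebra than the paper's terse proof.
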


\begin{IEEEproof}
The above achievable rate region is derived from $\mathcal{R}_i^{(1)}$ given in Proposition \ref{th.ibc1} by setting $X = U = W+\Bar{W}$ and $V = \hat{Y}_1=\varnothing$, and $\hat{Y}_2 = Y_2+\hat{Z}_2$,  where $W$ and $\Bar{W}$ are two independent Gaussian variables with zero means and variances $(1-\beta) P$ and $\beta P$, respectively, and $\hat{Z}_2$ is a Gaussian variable (independent of all other variables) with zero mean and variance $\hat{\sigma}^2=1-\lambda^2$. 

By a comparison, one can see that the right-hand sides of the constraints (\ref{eq:AFCgbc1}), (\ref{eq:AFCgbc1_2}), (\ref{eq:AFCgbc1_3}), (\ref{eq:AFCgbc1_4}) and (\ref{eq:AFCgbc1_5})
are within half a bit of (\ref{eq:ogbc_3}), (\ref{eq:ogbc}), (\ref{eq:ogbc_8}), (\ref{eq:ogbc_5}), and (\ref{eq:ogbc_6}), respectively. 
\end{IEEEproof}

Next, we present an approximate capacity result for the Gaussian BC with one-sided conferencing link. 

\begin{theorem}\label{th.AFCgbc2}
Consider the two-user Gaussian BC (\ref{eq:GBC}) with both common and private messages, where $|a|\ge|b|$ and only the weaker receiver has a conferencing link to the stronger receiver, i.e., $C_{12}=0$.  
For all channel parameters $a$, $b$, $C_{21}$, and $\lambda$ with $|\lambda|<1$, 
the following achievable rate region is within half a bit from the capacity region:
\begin{align}\label{eq:AFCgbc2}
    R_0+R_2 &\le \psi\left(\frac{(1-\beta)b^2P}{\beta b^2P+1}\right)\\
    R_0+R_1+R_2 &\le \psi\left(\left(\frac{a^2+b^2-2\lambda ab+(1-\lambda^2)a^2}{2(1-\lambda^2)}\right)P\right) \label{eq:AFCgbc2_2} \\
    R_0+R_1+R_2 &\le \psi\left(\beta a^2P\right) +\psi\left(\frac{(1-\beta)b^2P}{\beta b^2P+1}\right) \nonumber \\
& \qquad \qquad \qquad \qquad +\{C_{21}-1/2\}^+ \label{eq:AFCgbc2_3} \\
    R_0+R_1+R_2 &\le \psi\left(\beta\left(\frac{a^2+b^2-2\lambda ab+(1-\lambda^2)a^2}{2(1-\lambda^2)}\right)P\right) \nonumber \\
	& \qquad \qquad \qquad +\psi\left(\frac{(1-\beta)b^2P}{\beta b^2P+1}\right) \label{eq:AFCgbc2_4}
\end{align}
for some $\beta\in[0,1]$.
\end{theorem}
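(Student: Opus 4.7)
The plan is to follow the template already used for Theorem \ref{th.AFCgbc1}: derive the achievability by specializing $\mathcal{R}_i^{(1)}$ from Theorem \ref{th.ibc1} (equivalently its simplified form in Theorem \ref{alpha_1*}) to a jointly Gaussian test distribution, and then verify the half-bit gap facet-by-facet against the outer bound $\mathcal{R}_o^G$ of Theorem \ref{th.ogbc} specialized to $C_{12}=0$.

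First, I would choose the auxiliary variables as follows. Because $C_{12}=0$, set $\hat{Y}_1\equiv\varnothing$ and $\alpha_1=0$, so that the decode-and-forward/quantize-bin-and-forward branch at $Y_1$ disappears. Decompose $X=W+\bar{W}$ with independent $W\sim\mathcal{N}(0,(1-\beta)P)$ and $\bar{W}\sim\mathcal{N}(0,\beta P)$, and set $U\equiv X$, $V\equiv W$. The outer layer $W$ carries the common message plus the private message to $Y_2$ (so that $Y_2$ decodes $W$), while $\bar{W}$ carries $Y_1$'s private refinement. Finally, let $\hat{Y}_2=Y_2+\hat{Z}_2$ with $\hat{Z}_2\sim\mathcal{N}(0,1-\lambda^2)$ independent of everything else, the same quantization variance as in Theorem \ref{th.AFCgbc1}.

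Next, I would evaluate each mutual information term. Using joint Gaussianity and the fact that $Z_2$ conditioned on $Z_1$ has variance $1-\lambda^2$, the quantization cost becomes
\begin{equation*}
I(\hat{Y}_2;Y_2\mid U,W,Y_1)=I(\hat{Y}_2;Y_2\mid X,Y_1)=\tfrac{1}{2}\log\tfrac{1-\lambda^2+\hat{\sigma}^2}{\hat{\sigma}^2}=\tfrac{1}{2}.
\end{equation*}
A direct determinant computation for $(Y_1,\hat{Y}_2)=(aX+Z_1,\,bX+Z_2+\hat{Z}_2)$ gives
\begin{equation*}
I(X;Y_1,\hat{Y}_2)=\psi\!\left(\tfrac{a^2+b^2-2\lambda ab+(1-\lambda^2)a^2}{2(1-\lambda^2)}P\right),
\end{equation*}
with $P$ replaced by $\beta P$ when conditioning on $W$. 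Also $I(W;Y_2)=\psi\!\left(\tfrac{(1-\beta)b^2P}{\beta b^2P+1}\right)$, $I(X;Y_1)=\psi(a^2P)$, $I(X;Y_1\mid W)=\psi(\beta a^2P)$, and $I(U;V\mid W)=I(X;W\mid W)=0$. Plugging these into the five constraints of Theorem \ref{alpha_1*} and using Remark \ref{zetarep} to handle the $\{\cdot\}^{+}$ in $\zeta_2=\{C_{21}-\tfrac{1}{2}\}^{+}$ yields exactly \eqref{eq:AFCgbc2}--\eqref{eq:AFCgbc2_4}, after noting that since $V\equiv W$, the $R_0+R_1$-only facet and the weighted facet \eqref{eq:ibc1eq_end} both collapse into the listed $R_0+R_1+R_2$ bounds.

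Finally, for the half-bit gap I would pair facets at the same $\beta$ and take $\alpha=0$ in $\mathcal{R}_o^G$. The bound \eqref{eq:AFCgbc2} matches \eqref{eq:ogbc_3} with $C_{12}=0$ exactly; \eqref{eq:AFCgbc2_3} versus \eqref{eq:ogbc_5} differs only by $C_{21}-\{C_{21}-\tfrac{1}{2}\}^{+}\leq\tfrac{1}{2}$; and both \eqref{eq:AFCgbc2_2} versus \eqref{eq:ogbc_8} and \eqref{eq:AFCgbc2_4} versus \eqref{eq:ogbc_6} reduce to the single inequality
\begin{equation*}
\psi\!\left(\tfrac{X}{1-\lambda^2}P\right)-\psi\!\left(\tfrac{X+(1-\lambda^2)a^2}{2(1-\lambda^2)}P\right)\leq\tfrac{1}{2},\qquad X=a^2+b^2-2\lambda ab,
\end{equation*}
which (up to cross-multiplication) is equivalent to $a^2P\geq -1$ and hence automatic. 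The main obstacle is bookkeeping: one must argue that the outer-bound facets \eqref{eq:outer_R1U}, \eqref{eq:outer_R1V_1}--\eqref{eq:outer_R1V_2}, \eqref{eq:outer_R2U_1}--\eqref{eq:outer_R2U_2} (which involve the parameter $\alpha$ not appearing in the inner bound) are either implied by the listed $R_0+R_1+R_2$ constraints once $R_2\geq 0$ is used, or become loose relative to the inner bound by at most half a bit when $\alpha$ is optimized. Apart from this matching, all remaining calculations are routine.
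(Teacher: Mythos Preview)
Your proposal is essentially the paper's own proof. The paper specializes $\mathcal{R}_i^{(1)}$ with $X\equiv U\equiv W+\bar W$, $W\sim\mathcal{N}(0,(1-\beta)P)$, $\bar W\sim\mathcal{N}(0,\beta P)$, $\hat Y_2=Y_2+\hat Z_2$ with $\hat\sigma^2=1-\lambda^2$, and then pairs the four resulting facets \eqref{eq:AFCgbc2}--\eqref{eq:AFCgbc2_4} with the outer-bound facets \eqref{eq:ogbc_3}, \eqref{eq:ogbc_8}, \eqref{eq:ogbc_5}, \eqref{eq:ogbc_6} exactly as you do. The only cosmetic difference is that the paper takes $V\equiv\varnothing$ whereas you take $V\equiv W$; since every appearance of $V$ in $\mathcal{R}_i^{(1)}$ is either as $I(V,W;\cdot)$ or conditioned on $W$, the two choices produce identical regions.

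Your closing ``bookkeeping'' worry is unnecessary and stems from reading the gap argument in the wrong direction. To show the inner bound is within half a bit of capacity, it suffices that for every point $(R_0,R_1,R_2)\in\mathcal{R}_o^G$ (witnessed by some $\alpha,\beta$), the same point satisfies each \emph{inner} facet at parameter $\beta$ with right-hand side inflated by $1/2$. You have already done this by exhibiting, for each of the four inner facets, an outer facet with the same left-hand side whose right-hand side exceeds the inner one by at most $1/2$. The remaining $\alpha$-dependent outer constraints \eqref{eq:ogbc}, \eqref{eq:ogbc_2}, \eqref{eq:ogbc_4}, \eqref{eq:ogbc_7} only shrink $\mathcal{R}_o^G$ further and therefore require no matching; you may simply drop that paragraph.
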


\begin{IEEEproof}
Similar to Theorem \ref{th.AFCgbc1}, the above achievable rate region is derived from $\mathcal{R}_i^{(1)}$ given in Proposition \ref{th.ibc1} by setting $X = U = W+\Bar{W}$ and $V=\hat{Y}_1=\varnothing$, and $\hat{Y}_2 = Y_2+\hat{Z}_2$,  where $W$ and $\Bar{W}$ are two independent Gaussian variables with zero means and variances $(1-\beta) P$ and $\beta P$, respectively, and $\hat{Z}_2$ is a Gaussian variable (independent of all other variables) with zero mean and variance $\hat{\sigma}^2 =1-\lambda^2$. 

By a simple comparison, one can see that the right-hand sides of the constraints (\ref{eq:AFCgbc2}), (\ref{eq:AFCgbc2_2}), (\ref{eq:AFCgbc2_3}) and (\ref{eq:AFCgbc2_4})
are within half a bit of (\ref{eq:ogbc_3}), (\ref{eq:ogbc_8}), (\ref{eq:ogbc_5}), and (\ref{eq:ogbc_6}), respectively. 
\end{IEEEproof}

Finally, we derive an approximate capacity result for the two-user Gaussian BC (\ref{eq:GBC}) with both common and private messages and with bidirectional cooperative receivers. First, we present an achievable rate region for the channel using only one-way conferencing with decode-and-forward. The other conferencing link is not used, so the resulting achievable rate region is a sub-region of $\mathcal{R}_i^{(1)}$ (\ref{eq:ibc1eq})-(\ref{eq:ibc1eq_end}). 
It turns out that this region is already approximately optimal when the noise correlation is small.

\begin{proposition}\label{DFgbc}
Let $\mathcal{R}_i^{DF-G}$ denote the set of all rate triples $(R_0,R_1,R_2)$ such that
\begin{align}\label{eq:DFgbc}
    R_0+R_2 &\le \psi\left(\frac{(1-\beta)b^2P}{\beta b^2P+1}\right)+C_{12}\\
    R_0+R_1+R_2 &\le \psi\left(a^2P\right) \label{eq:DFgbc_2} \\
    R_0+R_1+R_2 &\le \psi\left(\beta a^2P\right)+\psi\left(\frac{(1-\beta)b^2P}{\beta b^2P+1}\right)+C_{12} \label{eq:DFgbc_3}
\end{align}
for some $\beta\in[0,1]$. The set $\mathcal{R}_i^{DF-G}$ constitutes an inner bound on the capacity region of the Gaussian BC (\ref{eq:GBC}) with conferencing decoders.
\end{proposition}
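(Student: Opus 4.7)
The plan is to obtain $\mathcal{R}_i^{DF-G}$ by specializing the general achievable region $\mathcal{R}_i^{(1)}$ of Theorem~\ref{th.ibc1} (equivalently its simplified form in Theorem~\ref{alpha_1*}) to a pure decode-and-forward cooperation scheme in which receiver $Y_1$ decodes the entire input codeword and forwards the portion carrying the cloud message $(M_0,M_2)$ to receiver $Y_2$ over the link $C_{12}$, while neither quantize-bin-and-forward nor the reverse link $C_{21}$ is used. In this scheme, $W$ plays the role of a common cloud carrying $(M_0,M_2)$ and $\bar W$ is a superimposed satellite carrying the private message $M_1$ of the stronger receiver.

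Concretely, in the characterization of $\mathcal{R}_i^{(1)}$ I take $X \equiv U \equiv W+\bar W$ with $W \sim \mathcal{N}\bigl(0,(1-\beta)P\bigr)$ and $\bar W \sim \mathcal{N}(0,\beta P)$ independent Gaussians; $V \equiv \varnothing$; $\hat Y_1 \equiv \varnothing$ and $\hat Y_2 \equiv \varnothing$; and $\alpha_1 = 0$, so that $\bar\alpha_1 C_{12} = C_{12}$ is devoted entirely to decode-and-forward. These choices give $\zeta_1 = 0$ and $\zeta_2 = C_{21}$. The key observation is that once $\hat Y_2 \equiv \varnothing$ the $\min$ operations in $\mathcal{R}_i^{(1)}$ collapse, since for instance $\min\bigl\{I(U,W;Y_1)+\zeta_2,\,I(U,W;Y_1,\hat Y_2)\bigr\} = I(U,W;Y_1)$, and analogously for every sum-rate bound in which $\zeta_2$ appears. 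Consequently $C_{21}$ drops out entirely, consistent with the fact that the reverse link is not used.

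A routine evaluation of the Gaussian mutual informations then gives
\begin{align*}
I(U,W;Y_1) &= \psi(a^2 P), \\
I(V,W;Y_2) &= \psi\!\left(\tfrac{(1-\beta)b^2P}{\beta b^2P+1}\right), \\
I(U;Y_1\,|\,W) &= \psi(\beta a^2 P), \\
I(V;Y_2\,|\,W) &= 0, \\
I(U;V\,|\,W) &= 0.
\end{align*}
Substituting these into the five constraints of $\mathcal{R}_i^{(1)}$ yields $R_0 + R_1 \le \psi(a^2 P)$, the bound \eqref{eq:DFgbc} on $R_0+R_2$, two sum-rate bounds coinciding with \eqref{eq:DFgbc_2} and \eqref{eq:DFgbc_3}, and the bound $2R_0+R_1+R_2 \le \psi(a^2P) + \psi\bigl((1-\beta)b^2P/(\beta b^2P+1)\bigr) + C_{12}$. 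The last inequality is implied by adding \eqref{eq:DFgbc} and \eqref{eq:DFgbc_2} together with $R_2 \ge 0$, and $R_0 + R_1 \le \psi(a^2P)$ is likewise implied by \eqref{eq:DFgbc_2}; dropping these redundant constraints leaves exactly $\mathcal{R}_i^{DF-G}$.

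The only substantive step in the argument is recognizing that the $\min$ structure in $\mathcal{R}_i^{(1)}$ is precisely what permits the quantize-bin-and-forward branch at $Y_2$ to be turned off cleanly under $\hat Y_2 \equiv \varnothing$; everything else is bookkeeping of Gaussian mutual informations and standard redundancy checks, so no real technical obstacle is expected.
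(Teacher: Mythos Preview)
Your proposal is correct and follows essentially the same approach as the paper: specialize $\mathcal{R}_i^{(1)}$ with $X\equiv U\equiv W+\bar W$, $V\equiv\varnothing$, $\hat Y_2\equiv\varnothing$ and pure decode-and-forward on the link $C_{12}$. You are in fact slightly more careful than the paper's brief proof, explicitly setting $\hat Y_1\equiv\varnothing$ and $\alpha_1=0$, checking the redundancy of the $R_0+R_1$ and $2R_0+R_1+R_2$ constraints, and choosing the variances of $W$ and $\bar W$ so that the resulting expressions match the parametrization in the statement directly.
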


\begin{IEEEproof}
The bound $\mathcal{R}_i^{DF-G}$ is derived from $\mathcal{R}_i^{(1)}$ given in Theorem \ref{alpha_1*} by setting $X = U = W+\Bar{W}$ and $V = \hat{Y}_1 = \hat{Y}_2 = \varnothing$, where $W$ and $\Bar{W}$ are two independent Gaussian variables with zero means and variances $\beta P$ and $(1-\beta)P$, respectively. Note that this inner bound is in fact derived for the channel with one-sided cooperation (i.e., it does not make use of the conferencing link $C_{21}$) using the decode-and-forward strategy alone. Nevertheless, it is a valid inner bound for the channel with bidirectional cooperation. 
\end{IEEEproof}

\begin{theorem}\label{th.app.Gg}
Consider the two-user Gaussian BC (\ref{eq:GBC}) with both common and private messages and with bidirectional conferencing decoders. 
Assume that $|a|\ge|b|$. For all channel parameters $a$, $b$, $C_{12}$, $C_{21}$, and $\lambda$ with $|\lambda|<1$, 
the inner bound $\mathcal{R}_i^{DF-G}$ is within $\frac{1}{2}\log (\frac{2}{1-\left|\lambda\right|})$ bits of the capacity region.
\end{theorem}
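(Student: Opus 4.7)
The plan is to show the outer bound $\mathcal{R}_o^G$ from Theorem \ref{th.ogbc} lies within $\delta := \tfrac{1}{2}\log\tfrac{2}{1-|\lambda|}$ bits of the inner bound $\mathcal{R}_i^{DF-G}$ from Proposition \ref{DFgbc}, using a constraint-by-constraint matching argument with a common choice of the auxiliary parameter $\beta$. First I would pair each of the three inner bound constraints with a specific outer bound constraint: the $R_0+R_2$ bound is identical to (\ref{eq:ogbc_3}) (zero gap); the sum-rate bound $\psi(a^2P)$ is paired with the cut-set bound (\ref{eq:ogbc_8}); and the other sum-rate bound $\psi(\beta a^2P)+\psi\bigl(\tfrac{(1-\beta)b^2P}{\beta b^2P+1}\bigr)+C_{12}$ is paired with (\ref{eq:ogbc_6}). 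In the last two pairings, the only difference between the two sides is that the inner bound carries a coefficient $a^2$ where the outer bound carries $\gamma := \tfrac{a^2+b^2-2\lambda a b}{1-\lambda^2}$.

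Next I would reduce each gap to a single algebraic inequality,
\begin{equation*}
\frac{1+\gamma t}{1+a^2 t} \le \frac{2}{1-|\lambda|}, \qquad \forall\, t \ge 0,
\end{equation*}
applied with $t = P$ for the cut-set comparison and $t = \beta P$ for the (\ref{eq:ogbc_6}) comparison. Because the left-hand side is monotone in $t$, it suffices to prove the limiting bound $\gamma/a^2 \le 2/(1-|\lambda|)$, which rearranges to $b^2 - 2\lambda a b \le a^2(1+2|\lambda|)$. This follows from the hypothesis $|b|\le|a|$ together with the elementary bound $-2\lambda a b \le 2|\lambda|\cdot|a|\cdot|b|$.

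Finally, I would translate these right-hand side gaps into the claimed bit-gap between regions: for any $(R_0,R_1,R_2)\in\mathcal{R}_o^G$ with Gaussian witnesses $(\alpha,\beta)$, the coordinate-wise shifted triple $\bigl(\{R_0-\delta\}^+,\{R_1-\delta\}^+,\{R_2-\delta\}^+\bigr)$ lies in $\mathcal{R}_i^{DF-G}$ with the same $\beta$, because the cumulative shift on each inner bound constraint ($2\delta$ for the $R_0+R_2$ constraint and $3\delta$ for the two sum-rate constraints) more than absorbs the at-most-$\delta$ gap in the right-hand side. Since $\mathcal{R}_i^{DF-G}$ is contained in the capacity region, this yields the claimed bit distance to the capacity region.

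The main obstacle is the algebraic step: the inequality $\gamma/a^2 \le 2/(1-|\lambda|)$ becomes tight precisely in the extremal case $|b|=|a|$ with $\mathrm{sign}(\lambda a b)=-1$, which is exactly the setting in which fully negatively correlated noises provide the greatest cooperative gain. This extremal case also clarifies why the approximation bound takes the specific form $\tfrac{1}{2}\log\tfrac{2}{1-|\lambda|}$ and why it diverges as $|\lambda|\to 1$. Beyond this algebraic check, the remaining steps are routine bookkeeping on top of the Gaussian outer bound evaluation already carried out in Theorem \ref{th.ogbc}.
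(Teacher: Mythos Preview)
Your proposal is correct and follows exactly the approach the paper takes: the paper's proof matches the three constraints of $\mathcal{R}_i^{DF-G}$ to the outer bound constraints (\ref{eq:ogbc_3}), (\ref{eq:ogbc_8}), and (\ref{eq:ogbc_6}) respectively, and asserts (without details) that the last two right-hand sides differ by at most $\frac{1}{2}\log\frac{2}{1-|\lambda|}$. You have additionally supplied the clean algebraic justification that the paper leaves to the reader, namely the reduction to $\gamma/a^2 \le 2/(1-|\lambda|)$ via monotonicity in $t$, and the verification of that inequality from $|b|\le|a|$.
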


\begin{IEEEproof}
The constraint (\ref{eq:DFgbc}) is identical to (\ref{eq:ogbc_3}). Moreover, by simple algebraic computation, one can show that the right-hand sides of the constraints (\ref{eq:DFgbc_2}) and (\ref{eq:DFgbc_3}) are within $\frac{1}{2}\log (\frac{2}{1-\left|\lambda\right|})$ bits of (\ref{eq:ogbc_8}) and (\ref{eq:ogbc_6}), respectively. 
\end{IEEEproof}

\begin{remark}
For the Gaussian BC with $\lambda ab \ge 0$, a better approximate capacity gap bound of $\frac{1}{2}\log (\frac{2}{1-\lambda^2})$ bits is possible for the region $\mathcal{R}_i^{DF-G}$ given in Proposition \ref{DFgbc}. 
\end{remark}

Theorem \ref{th.app.Gg} states that if we do not use the quantize-bin-and-forward part of the conferencing protocol and rely solely on decode-and-forward, the gap to capacity would depend on the noise correlation. This is because decode-and-forward cannot exploit the noise correlation. 
Nevertheless, there is one special case for which decode-and-forward is optimal.

\begin{remark} \label{degraded2}
For the Gaussian BC (\ref{eq:GBC}) with $|a| \ge |b|$, when $\lambda=\frac{b}{a}$, 
one can verify that the BC is physically degraded (see Appendix 
\ref{app:outer_bound_gbc}) and the decode-and-forward achievable region $\mathcal{R}_i^{DF-G}$ of Proposition \ref{DFgbc} coincides with the outer bound $\mathcal{R}_o^G$ in Theorem \ref{th.ogbc}; 
(see (\ref{eq:condition_lambda_ab})).
Thus,  $\mathcal{R}_i^{DF-G}$ is the capacity region.
\end{remark}

As a concluding remark for this section, the novel structure of the outer bound given in Theorem \ref{th.ogbc}, in particular, the constraint (\ref{eq:ogbc_6}), is crucial for deriving the exact capacity results in Theorems \ref{th.FCgbc1} and \ref{th.FCgbc2}, and also the approximate capacity results in Theorems \ref{th.AFCgbc1}, \ref{th.AFCgbc2}, and \ref{th.app.Gg}.

\section{Conclusions}
\label{sec:conclude}

Receiver cooperation can significantly improve the capacity of a broadcast
channel.  In this paper, we establish novel outer bounds, which are tighter
than the cut-set bound, for the BC with bidirectional conferencing decoders
using the Csisz\'{a}r-K\"{o}rner identity.  Together with the achievability
results based on Marton's coding and judicious combination of
quantize-bin-and-forward and decode-and-forward, we derive capacity results for
specific classes of semi-deterministic BCs with degraded message sets and more
capable semi-deterministic BCs with one-sided conferencing.  Furthermore, for
the Gaussian BC, we illustrate the importance of noise correlation when the
receivers can cooperate using the conferencing links.  For a Gaussian BC with
perfectly negatively correlated noises, an infinitesimal amount of power is
already sufficient for achieving a rate equal to the capacity of the
conferencing links.

\appendices

\section{Proof of Theorem \ref{th.obc}}
\label{app:outer_bound}

We only derive the constraints that include the auxiliary variable $U$, i.e., (\ref{eq:outer_R1U}), (\ref{eq:outer_R2U_1}), (\ref{eq:outer_R2U_2}), (\ref{eq:outer_RsumU_1}), and (\ref{eq:outer_RsumU_2}). The constraints that include $V$ can be derived symmetrically. The last constraint (\ref{eq:outer_cs}) is due to the cut-set bound.

Consider a length-$n$ code with common and private message rates $(R_0, R_1, R_0)$ and with vanishing average probability of decoding error. 
Define the auxiliary variables as follows:
\begin{equation}
U_t=(M_0,M_1,Y_2^{t-1},Y_{1,t+1}^n), \qquad t=1,...,n.
\label{eq:Ut}
\end{equation}
By Fano's inequality, we have
\begin{align}
n&(R_0+R_1) \nonumber \\
&\le I(M_0,M_1;Y_1^n,J_{21}^{\Gamma})+n\epsilon_n^1
\\&=I(M_0,M_1;Y_1^n)+I(M_0,M_1;J_{21}^{\Gamma}|Y_1^n)+n\epsilon_n^1\\
&\le \sum_{t=1}^n I(M_0,M_1;Y_{1,t}|Y_{1,t+1}^n)+H(J_{21}^{\Gamma})+n\epsilon_n^1\\
&\le \sum_{t=1}^n I(M_0,M_1,Y_{1,t+1}^n,Y_2^{t-1};Y_{1,t})+H(J_{21}^{\Gamma})+n\epsilon_n^1 \label{eq:fano_a} \\
&\le \sum_{t=1}^nI(U_t;Y_{1,t})+nC_{21}+n\epsilon_n^1
\end{align}
where inequality $(\ref{eq:fano_a})$ holds because conditioning does not increase the entropy. This corresponds to (\ref{eq:outer_R1U}). 

Next, we derive the constraints on $R_2$. By Fano's inequality, 
\begin{align}
    nR_2&\le I(M_2;Y_2^n,J_{12}^{\Gamma})+n\epsilon_n^2
    \\&\le I(M_2;Y_1^n,Y_2^n,J_{12}^{\Gamma})+n\epsilon_n^2\\
    &= I(M_2;Y_1^n,Y_2^n)+n\epsilon_n^2 \label{eq:fano_b} \\
    &\le I(M_2;Y_1^n,Y_2^n|M_0,M_1)+n\epsilon_n^2
\label{eq:converse_R2}
\end{align}
where $(\ref{eq:fano_b})$ holds because $J_{12}^{\Gamma}$ is given by a deterministic function of $(Y_1^n,Y_2^n)$. 

Now consider the following:
\begin{align}
    I&(M_2;Y_1^n,Y_2^n|M_0,M_1) \nonumber \\
    &=I(M_2;Y_2^n|M_0,M_1)+I(M_2;Y_1^n|Y_2^n,M_0,M_1)\\
    &\le I(M_0,M_1,M_2;Y_2^n)+I(M_2;Y_1^n|Y_2^n,M_0,M_1) \\
    &\le \sum_{t=1}^{n}I(X_t;Y_{2,t})  \nonumber \\ & \qquad  +\sum_{t=1}^{n}I(X_t;Y_{1,t}|Y_{2,t},M_0,M_1,Y_2^{t-1},Y_{2,t+1}^n,Y_{1,t+1}^n) \label{eq:converse_M2_a} \\
    &=\sum_{t=1}^{n}I(X_t;Y_{2,t})+\sum_{t=1}^{n}I(X_t;Y_{1,t}|Y_{2,t},U_t,Y_{2,t+1}^n)\\
    &\le \sum_{t=1}^{n}I(X_t;Y_{2,t})+\sum_{t=1}^{n}I(X_t;Y_{1,t}|Y_{2,t},U_t),
\label{eq:converse_M2_b}
\end{align}
where $(\ref{eq:converse_M2_a})$ holds because $X_t$ is a deterministic function of $(M_0,M_1,M_2)$ and the channel is memoryless, and $(\ref{eq:converse_M2_b})$ holds because conditioning does not increase entropy and also given the input signal $X_t$, the outputs $Y_{1,t},Y_{2,t}$ are independent of other variables. 

In the same way, one can also derive
\begin{align}
    I&(M_2;Y_1^n,Y_2^n|M_0,M_1) \nonumber \\
    &=I(M_2;Y_1^n|M_0,M_1)+I(M_2;Y_2^n|Y_1^n,M_0,M_1)\\
    &\le \sum_{t=1}^{n}I(X_t;Y_{1,t})+\sum_{t=1}^{n}I(X_t;Y_{2,t}|Y_{1,t},U_t).
\label{eq:converse_M2}
\end{align}
By substituting (\ref{eq:converse_M2_b}) and (\ref{eq:converse_M2}) into (\ref{eq:converse_R2}), we obtain the desired bounds on $R_2$ (\ref{eq:outer_R2U_1}) and  
(\ref{eq:outer_R2U_2}).

Next, we derive the constraints on the sum rate, starting from Fano's inequality:
\begin{align}
    n&(R_0+R_1+R_2) \nonumber \\
    &\le I(M_0,M_1;Y_1^n,J_{21}^{\Gamma})+I(M_2;Y_2^n,J_{12}^{\Gamma})+n(\epsilon_n^1+\epsilon_n^2)\\
    &\le I(M_0,M_1;Y_1^n)+I(M_2;Y_2^n)+I(M_0,M_1;J_{21}^{\Gamma}|Y_1^n) \nonumber \\
    &\qquad +I(M_2;J_{12}^{\Gamma}|Y_2^n)+n(\epsilon_n^1+\epsilon_n^2)\\
    &\le I(M_0,M_1;Y_1^n)+I(M_2;Y_2^n|M_0,M_1) \nonumber \\
    &\qquad +H(J_{21}^{\Gamma})
    +H(J_{12}^{\Gamma})+n(\epsilon_n^1+\epsilon_n^2) \label{eq:129}\\
    &\le I(M_0,M_1;Y_1^n)+I(M_2;Y_2^n|M_0,M_1) \nonumber \\
    &\qquad +n(C_{12}+C_{21})+n(\epsilon_n^1+\epsilon_n^2),
\label{eq:converse_R0R1R2}
\end{align}
where in \eqref{eq:129} we use the fact that $M_0$, $M_1$ and $M_2$ are independent to
conclude that $I(M_2;Y_2^n) \le I(M_2;Y_2^n|M_0,M_1)$.

Now consider the following:
\begin{align}
    I&(M_0,M_1;Y_1^n)+I(M_2;Y_2^n|M_0,M_1) \nonumber \\
	&=\sum_{t=1}^{n}I(M_0,M_1;Y_{1,t}|Y_{1,t+1}^n) \nonumber \\
    &\qquad +\sum_{t=1}^{n}I(M_2;Y_{2,t}|M_0,M_1,Y_2^{t-1})\\
    &\le \sum_{t=1}^{n}I(Y_2^{t-1},M_0,M_1;Y_{1,t}|Y_{1,t+1}^n) \nonumber \\
    &\qquad -\sum_{t=1}^{n}I(Y_2^{t-1};Y_{1,t}|M_0,M_1,Y_{1,t+1}^n) \nonumber \\
    &\qquad +\sum_{t=1}^{n}I(Y_{1,t+1}^n,M_2;Y_{2,t}|M_0,M_1,Y_2^{t-1})\\
    &= \sum_{t=1}^{n}I(Y_2^{t-1},M_0,M_1;Y_{1,t}|Y_{1,t+1}^n) \nonumber \\
    &\qquad -\sum_{t=1}^{n}I(Y_2^{t-1};Y_{1,t}|M_0,M_1,Y_{1,t+1}^n) \nonumber \\
    &\qquad +\sum_{t=1}^{n}I(Y_{1,t+1}^n;Y_{2,t}|M_0,M_1,Y_2^{t-1}) \nonumber \\
    &\qquad +\sum_{t=1}^{n}I(M_2;Y_{2,t}|M_0,M_1,Y_2^{t-1},Y_{1,t+1}^n)\\
    &\le \sum_{t=1}^{n}I(Y_2^{t-1},M_0,M_1;Y_{1,t}|Y_{1,t+1}^n) \nonumber \\
    &\qquad +\sum_{t=1}^{n}I(M_2;Y_{2,t}|M_0,M_1,Y_2^{t-1},Y_{1,t+1}^n) 
\label{eq:converse_M0M1_a} \\
    &\le \sum_{t=1}^{n}I(Y_2^{t-1},Y_{1,t+1}^n,M_0,M_1;Y_{1,t}) \nonumber \\
    &\qquad +\sum_{t=1}^{n}I(M_2;Y_{2,t}|M_0,M_1,Y_2^{t-1},Y_{1,t+1}^n)\\
    &\le \sum_{t=1}^{n}I(U_t;Y_{1,t}) + \sum_{t=1}^{n} H(Y_{2,t}|U_t) \nonumber \\
    &\qquad - \sum_{t=1}^{n} H(Y_{2,t}|M_0,M_1,M_2,Y_2^{t-1},Y_{1,t+1}^n)
\label{eq:converse_M0M1_b} \\
    &\le \sum_{t=1}^{n}I(U_t;Y_{1,t}) + \sum_{t=1}^{n} H(Y_{2,t}|U_t) \nonumber \\
    &\qquad - \sum_{t=1}^{n} H(Y_{2,t}|X_t,M_0,M_1,M_2,Y_2^{t-1},Y_{1,t+1}^n)
\label{eq:converse_M0M1_c} \\
    &\le \sum_{t=1}^{n}I(U_t;Y_{1,t}) + \sum_{t=1}^{n} H(Y_{2,t}|U_t) \nonumber \\
    &\qquad - \sum_{t=1}^{n} H(Y_{2,t}|X_t,M_0,M_1,Y_2^{t-1},Y_{1,t+1}^n)
\label{eq:converse_M0M1_d} \\
    &\le \sum_{t=1}^{n}I(U_t;Y_{1,t})+\sum_{t=1}^{n}I(X_t;Y_{2,t}|U_t)
\label{eq:converse_M0M1}
\end{align}
where (\ref{eq:converse_M0M1_a}) is due to the Csisz\'{a}r-K\"{o}rner identity
\cite{marton} \cite[Lemma 7]{CsisKo}, which 
 states that given two arbitrary random vectors $A_1^n$ and $B_1^n$ and an arbitrary random variable $C$, the following is true: 
\begin{align}
    \centering
    \sum_{t=1}^n I(A_{t+1}^n;B_t|B_1^{t-1},C)=\sum_{t=1}^n I(B_1^{t-1};A_t|A_{t+1}^
n,C),
\end{align}
and \eqref{eq:converse_M0M1_b} is due to the definition of $U_t$, 
\eqref{eq:converse_M0M1_c} is due to that conditioning does not increase entropy,  
\eqref{eq:converse_M0M1_d} is due to the Markov chain 
$(M_2,U_t) \rightarrow X_t \rightarrow Y_{2,t}$. 
By substituting (\ref{eq:converse_M0M1}) in (\ref{eq:converse_R0R1R2}), we obtain the bound (\ref{eq:outer_RsumU_1}) on the sum rate. 

Finally, we can write
\begin{align}
    n&(R_0+R_1+R_2) \nonumber \\
    &\le I(M_0,M_1;Y_1^n,J_{21}^{\Gamma})+I(M_2;Y_2^n,J_{12}^{\Gamma})+n(\epsilon_n^1+\epsilon_n^2)\\
    &\le I(M_0,M_1;Y_1^n) +I(M_0,M_1;J_{21}^{\Gamma}|Y_1^n) \nonumber \\
    &\qquad + I(M_2;Y_1^n,Y_2^n,J_{12}^{\Gamma}) +n(\epsilon_n^1+\epsilon_n^2)\\
    &\le I(M_0,M_1;Y_1^n)+ H(J_{21}^{\Gamma}) + I(M_2;Y_1^n,Y_2^n) \nonumber \\
	& \qquad +n(\epsilon_n^1+\epsilon_n^2)\\
    &\le I(M_0,M_1;Y_1^n) +nC_{21} + I(M_2;Y_1^n,Y_2^n|M_0,M_1) \nonumber \\
	\label{eq:converse_R0R1R2_aa} 
    &\qquad +n(\epsilon_n^1+\epsilon_n^2)\\
    &\le I(M_0,M_1;Y_1^n) +nC_{21} +I(M_2;Y_2^n|M_0,M_1)  \nonumber \\
    &\qquad+I(M_2;Y_1^n|Y_2^n,M_0,M_1) 
	+n(\epsilon_n^1+\epsilon_n^2)\\
    &\le \sum_{t=1}^{n}I(U_t;Y_{1,t})+\sum_{t=1}^{n}I(X_t;Y_{2,t}|U_t)  
	\nonumber \\
	\label{eq:converse_R0R1R2_a} 
    &\qquad+\sum_{t=1}^{n}I(X_t;Y_{1,t}|Y_{2,t},U_t)+nC_{21}+n(\epsilon_n^1+\epsilon_n^2)\\
    &= \sum_{t=1}^{n}I(U_t;Y_{1,t})+\sum_{t=1}^{n}I(X_t;Y_{1,t},Y_{2,t}|U_t)  \nonumber \\
    &\qquad+nC_{21}+n(\epsilon_n^1+\epsilon_n^2)\\
    &= \sum_{t=1}^{n}I(U_t;Y_{1,t})+\sum_{t=1}^{n}I(X_t;Y_{1,t}|U_t) \nonumber \\
    &\qquad+\sum_{t=1}^{n}I(X_t;Y_{2,t}|Y_{1,t},U_t)+nC_{21}+n(\epsilon_n^1+\epsilon_n^2)\\
    &= \sum_{t=1}^{n}I(U_t,X_t;Y_{1,t})+\sum_{t=1}^{n}I(X_t;Y_{2,t}|Y_{1,t},U_t)  \nonumber \\
    &\qquad +nC_{21}+n(\epsilon_n^1+\epsilon_n^2) \\
    &= \sum_{t=1}^{n}I(X_t;Y_{1,t})+\sum_{t=1}^{n}I(X_t;Y_{2,t}|Y_{1,t},U_t)  \nonumber \\
	\label{eq:converse_R0R1R2_b} 
    &\qquad +nC_{21}+n(\epsilon_n^1+\epsilon_n^2)
\end{align}
where inequality \eqref{eq:converse_R0R1R2_aa} is due to that $M_0$, $M_1$ and $M_2$ are
independent so $ I(M_2;Y_1^n,Y_2^n|M_0,M_1) \ge I(M_2;Y_1^n,Y_2^n)$,
inequality $(\ref{eq:converse_R0R1R2_a})$ 
is derived by following the same line of argument as in (\ref{eq:converse_M0M1})
using the Csisz\'{a}r-K\"{o}rner identity, 
and \eqref{eq:converse_R0R1R2_b} is due to the Markov chain $U_t \rightarrow X_t \rightarrow Y_{1,t}$.
This corresponds to the desired constraint (\ref{eq:outer_RsumU_2}). 

By applying a standard time-sharing argument, the proof is thus complete. 
The cardinality bounds on $U$ and $V$ are due to Carathéodory theorem \cite{NIT}.

\section{Proof of Proposition \ref{th.ibc1}}
\label{app:achievability_1}

We prove the achievability of $\mathcal{R}_i^{(1)}$ using Marton's coding for the BC together with quantize-bin-and-forward at $Y_2$, then a combination of decode-and-forward and quantize-bin-and-forward at $Y_1$. Fix a joint distribution $$P(u,v,w,x)P(\hat{y}_1|u,w,y_1)P(\hat{y}_2|y_2)$$ and let $\alpha_1 \in [0,1]$. We construct random codebooks consisting of length-$n$ codewords to transmit the common message $M_0$, and the private messages $M_1$ and $M_2$, which are independent and uniformly distributed over the sets $[1:2^{nR_0}]$, $[1:2^{nR_1}]$, and $[1:2^{nR_2}]$, respectively.

First of all, each of the private messages $M_1$ and $M_2$ is split into two parts as follows:
\begin{align}
    M_1&\longmapsto (M_{10},M_{11}),
\end{align}
where $M_{10}\in [1:2^{nR_{10}}], M_{11}\in [1:2^{nR_{11}}]$, and 
\begin{align}
    M_2&\longmapsto (M_{20},M_{22}),
\end{align}
where $M_{20}\in [1:2^{nR_{20}}], M_{22}\in [1:2^{nR_{22}}]$.  
Therefore, 
\begin{align}
        R_1=R_{10}+R_{11}, \\
        R_2=R_{20}+R_{22}.
\end{align}
The sub-messages $M_{10}$ and $M_{20}$ are decoded at both receivers, i.e., the triple $(M_0,M_{10},M_{20})$ is considered as the common message with rate 
\begin{align}
        R_{00}=R_0+R_{10}+R_{20}.
\end{align}
Moreover, let $B_1, B_2, \hat{R}_1, \hat{R}_2$ be non-negative real numbers
denoting the rates of the binning steps and the compression steps in the proof. 
Also, define
\begin{align}\label{eq:Cind}
        \hat{C}_{12}^d&=\min \{\Bar{\alpha}_1C_{12},R_{00}\}\\
        \hat{C}_{12}^q&=\min \{{\alpha}_1C_{12} \hat{R}_1\}\label{eq:ApBC12qhatmin}\\
        \hat{C}_{21}&=\min \{C_{21},\hat{R}_2\} \label{eq:ApBC21hatmin}
\end{align}
as the portion of the conferencing link rates used in the two directions and for decode-and-forward and quantize-bin-and-forward, respectively.

\subsection{Encoding at the Transmitter}

\begin{enumerate}
\item Generate at random $2^{nR_{00}}$ independent codewords $W^n$ according to $P(w^n)=\prod_{t=1}^n P(w_t)$. Label these codewords $W^n(m_{00})$ where $m_{00}\in [1:2^{nR_{00}}]$.

\item For each codeword $W^n(m_{00})$, generate at random $2^{n(R_{11}+B_1)}$ independent codewords $U^n$ according to $\prod_{t=1}^nP(u_t|w_t(m_{00}))$. Label these codewords $U^n(m_{00},m_{11},b_1)$ where $m_{11}\in[1:2^{nR_{11}}]$ and $b_1\in[1:2^{nB_1}]$.

\item For each codeword $W^n(m_{00})$, generate at random $2^{n(R_{22}+B_2)}$ independent codewords $V^n$ according to $\prod_{t=1}^nP(v_t|w_t(m_{00}))$. Label these codewords $V^n(m_{00},m_{22},b_2)$ where $m_{22}\in[1:2^{nR_{22}}]$ and $b_2\in[1:2^{nB_2}]$.

\item Given a triple $(m_{00},m_{11},m_{22})$ to be transmitted, find a pair of indices $(b_1^\mathcal{T},b_2^\mathcal{T})$ that satisfies 
\begin{multline}\label{eq:typenc}
(W^n(m_{00}),U^n(m_{00},m_{11},b_1^\mathcal{T}),V^n(m_{00},m_{22},b_2^\mathcal{T})) \\
\qquad \in\mathcal{T}_\epsilon^n(P_{WUV}).
\end{multline}
If there are multiple pairs of $(b_1,b_2)$ that satisfy (\ref{eq:typenc}), let $(b_1^\mathcal{T},b_2^\mathcal{T})$ be the one with minimum $b_1$. 
If multiple such pairs have the same minimum $b_1$, choose the one with the minimum $b_2$. 
If no pair $(b_1,b_2)$ satisfies (\ref{eq:typenc}), declare an encoding error. Based on the mutual covering lemma \cite{NIT}, one can easily see that for sufficiently large $n$ and small $\epsilon$, the probability of encoding error tends to zero provided that
\begin{align}\label{eq:ApBbiningbound}
         B_1+B_2\ge I(U;V|W).
\end{align}
The transmitter then generates a codeword $X^n$ according to $$\prod_{t=1}^nP(x_t|w_t(m_{00}),u_t(m_{00},m_{11},b_1^\mathcal{T}),v_t(m_{00},m_{22},b_2^\mathcal{T}))$$ and sends it over the channel.
\end{enumerate}

\subsection{Encoding and Decoding at the Receivers}

\begin{enumerate}
\item Generate at random $2^{n\hat{R}_2}$ independent codewords $\hat{Y}_2^n$ according to the distribution $\prod_{t=1}^nP(\hat{y}_{2,t})$. Label these codewords $\hat{Y}_2^n(\hat{m}_2)$ where $\hat{m}_2\in [1:2^{n\hat{R}_2}]$. 

\item Partition the set $[1:2^{n\hat{R}_2}]$ into $2^{n\hat{C}_{21}}$ bins, each containing $2^{n(\hat{R}_2-\hat{C}_{21})}$ elements. For a given $\hat{m}_2\in [1:2^{n\hat{R}_2}]$, let $\mathfrak{C}_{21}(\hat{m}_2)$ denote the index of the bin that $\hat{m}_2$ belongs to. 

\item Given the output sequence $Y_2^n$ at the second receiver, let $\hat{m}_2^\mathcal{T}$ be the smallest integer in $[1:2^{n\hat{R}_2}]$ that satisfies 
\begin{align}\label{eq:Y2Y2hattyp}
        (Y_2^n,\hat{Y}_2^n(\hat{m}_2^\mathcal{T}))\in \mathcal{T}_\epsilon^n(P_{Y_2\hat{Y}_2}).
\end{align}
If there is no integer in $[1:2^{n\hat{R}_2}]$ that satisfies (\ref{eq:Y2Y2hattyp}), then declare an error. The probability of this error tends to zero provided that
\begin{align}\label{eq:R2hat}
        \hat{R}_2\ge I(Y_2;\hat{Y}_2).
\end{align}
The receiver $Y_2$ then sends $\kappa= \mathfrak{C}_{21}(\hat{m}_2^\mathcal{T})$ to the receiver $Y_1$ through the digital link.

\item Given its output sequence $Y_1^n$ and $\kappa \in [1:2^{n\hat{C}_{21}}]$, the first receiver seeks a unique triple $(m_{00}^*,m_{11}^*,b_1^*)$ for which there exists some $\hat{Y}_2^n$ that satisfies 
\begin{multline}\label{eq:Y1dec}
    (W^n(m_{00}^*),U^n(m_{00}^*,m_{11}^*,b_1^*),Y_1^n,\hat{Y}_2^n)\\ \in \mathcal{T}_\epsilon^n(P_{WUY_1\hat{Y}_2}) 
\end{multline}
and
\begin{align}\label{eq:Y1dec_2}
        \mathfrak{C}_{21}(\hat{Y}_2^n)=\kappa.
\end{align}
If there is such a unique triple, the first receiver decodes its intended messages as $m_{00}^*$ and $m_{11}^*$. Otherwise, it declares an error. The probability that the actual transmitted messages do not satisfy (\ref{eq:Y1dec})-(\ref{eq:Y1dec_2}) is negligible for sufficiently large $n$. Consider the error event that some messages other than the actual transmitted ones together with some $\hat{Y}_2^n$ would satisfy (\ref{eq:Y1dec})-(\ref{eq:Y1dec_2}). This event comprises two cases. First, this $\hat{Y}_2^n$ is not the codeword actually selected from the quantization codebook, i.e., $\hat{Y}_2^n(\hat{m}_2^\mathcal{T})$. Second, this $\hat{Y}_2^n$ is actually $\hat{Y}_2^n(\hat{m}_2^\mathcal{T})$. Using the multivariate packing lemma \cite{NIT}, one can show that the probability of the first case tends to zero provided that
\begin{align}\label{eq:Y1bounds1}
\hat{R}_2-\hat{C}_{21}+&R_{11}+B_1 \le \nonumber \\
	&I(U;Y_1|W)+I(W,U,Y_1;\hat{Y}_2)\\
\hat{R}_2-\hat{C}_{21}+&R_{00}+R_{11}+B_1 \le \nonumber \\ 
	&I(W,U;Y_1)+I(W,U,Y_1;\hat{Y}_2)\label{eq:Y1bounds1_2}
\end{align}
and the probability of the second case tends to zero provided that
\begin{align}\label{eq:Y1bounds2}
        R_{11}+B_1\le I(U;Y_1,\hat{Y}_2|W)\\
        R_{00}+R_{11}+B_1\le I(W,U;Y_1,\hat{Y}_2).\label{eq:Y1bounds2_2}
\end{align}

\item Generate at random $2^{n\hat{R}_1}$ independent codewords $\hat{Y}_1^n$ according to the distribution $\prod_{t=1}^nP(\hat{y}_{1,t})$. Label these codewords $\hat{Y}_1^n(\hat{m}_1)$ where $\hat{m}_1\in [1:2^{n\hat{R}_1}]$. 

\item Partition the set $[1:2^{n\hat{R}_1}]$ into $2^{n\hat{C}_{12}^q}$ bins, each containing $2^{n(\hat{R}_1-\hat{C}_{12}^q)}$ elements. For a given $\hat{m}_1\in [1:2^{n\hat{R}_1}]$, let $\mathfrak{C}_{12}^q(\hat{m}_1)$ denote the index of the bin that $\hat{m}_1$ belongs to. 

\item Given the output sequence $Y_1^n$ at the first receiver, let $\hat{m}_1^\mathcal{T}$ be the smallest integer in $[1:2^{n\hat{R}_1}]$ that satisfies 
\begin{multline}\label{eq:Y1hatsel}
(W^n(m_{00}^*),U^n(m_{00}^*,m_{11}^*,b_1^*),Y_1^n,\hat{Y}_1^n(\hat{m}_1^\mathcal{T})) \\
	\in \mathcal{T}_\epsilon^n(P_{WUY_1\hat{Y}_1}). 
\end{multline}
Note that $(m_{00}^*,m_{11}^*,b_1^*)$ is the unique triple decoded in Step 4 according to (\ref{eq:Y1dec})-(\ref{eq:Y1dec_2}). If there is no integer in $[1:2^{n\hat{R}_1}]$ that satisfies (\ref{eq:Y1hatsel}), then declare an error. The probability of this error event tends to zero provided that
\begin{align}\label{eq:R1hat}
        \hat{R}_1\ge I(W,U,Y_1;\hat{Y}_1).
\end{align}

\item Partition the set $[1:2^{nR_{00}}]$ into $2^{n\hat{C}_{12}^d}$ bins, each containing $2^{n(R_{00}-\hat{C}_{12}^d)}$ elements. For a given $m_{00}\in [1:2^{nR_{00}}]$, let $\mathfrak{C}_{12}^d(m_{00})$ denote the index of the bin that $m_{00}$ belongs to.

\item The receiver $Y_1$ then sends the pair $(\theta_1,\theta_2)=(\mathfrak{C}_{12}^d(m_{00}^*),\mathfrak{C}_{12}^q(\hat{m}_1^\mathcal{T}))$ to the receiver $Y_2$ through the digital link. (Note that $m_{00}^*$ is the decoded message in Step 4 at the receiver $Y_1$.)

\item Given $(\theta_1,\theta_2)\in [1:2^{n\hat{C}_{12}^d}]\times [1:2^{\hat{C}_{12}^q}]$ and its output sequence $Y_2^n$, the second receiver seeks a unique triple $(m_{00}^\bullet,m_{22}^\bullet,b_2^\bullet)$ for which there exists some $\hat{Y}_1^n$ that satisfies 
\begin{multline}\label{eq:Y2dec}
    (W^n(m_{00}^\bullet),V^n(m_{00}^\bullet,m_{22}^\bullet,b_2^\bullet),\hat{Y}_1^n,Y_2^n) \\
\in \mathcal{T}_\epsilon^n(P_{WV\hat{Y}_1Y_2}),
\end{multline}
and
\begin{align}\label{eq:Y2dec_2}
 \mathfrak{C}_{12}^d(m_{00}^\bullet) &= \theta_1,  \\
 \mathfrak{C}_{12}^q(\hat{Y}_1^n) &=\theta_2.\label{eq:ApBC12qY1hattheta2} 
\end{align}
If there is such a unique triple, the second receiver decodes the its intended messages as $m_{00}^\bullet$ and $m_{22}^\bullet$. Otherwise, it declares an error. The probability that the actual transmitted messages do not satisfy (\ref{eq:Y2dec})-(\ref{eq:ApBC12qY1hattheta2}) is negligible for sufficiently large $n$. Consider the error event that some messages other than the actual transmitted ones together with some $\hat{Y}_1^n$ would satisfy (\ref{eq:Y2dec})-(\ref{eq:ApBC12qY1hattheta2}). This event comprises two cases. First, this $\hat{Y}_1^n$ is not the codeword actually selected from the quantization codebook, i.e., $\hat{Y}_1^n(\hat{m}_1^\mathcal{T})$. Second, this $\hat{Y}_1^n$ is actually $\hat{Y}_1^n(\hat{m}_1^\mathcal{T})$. Using the multivariate packing lemma \cite{NIT}, one can show that the probability of the first case tends to zero provided that
\begin{align}\label{eq:Y2bnds1}
\hat{R}_1-\hat{C}_{12}^q+&R_{22}+B_2\le \nonumber \\
	&I(V;Y_2|W)+I(W,V,Y_2;\hat{Y}_1)\\
\hat{R}_1-\hat{C}_{12}^q+&R_{00}-\hat{C}_{12}^d+R_{22}+B_2\le \nonumber \\
	&I(W,V;Y_2)+I(W,V,Y_2;\hat{Y}_1),\label{eq:Y2bnds1_2}
\end{align}
and the probability of the second case tends to zero provided that
\begin{align}\label{eq:Y2bnds2}
        R_{22}+B_2\le I(V;\hat{Y}_1,Y_2|W)\\
        R_{00}-\hat{C}_{12}^d+R_{22}+B_2\le I(W,V;\hat{Y}_1,Y_2).\label{eq:Y2bnds2_2}
\end{align}

\end{enumerate}

\subsection{Final Analysis}

Using the Fourier-Motzkin algorithm to eliminate $\hat{R}_2$ and $\hat{C}_{21}$ and by algebraic computation, one can see that (\ref{eq:ApBC21hatmin}), (\ref{eq:R2hat}), (\ref{eq:Y1bounds1})-(\ref{eq:Y1bounds1_2}), and (\ref{eq:Y1bounds2})-(\ref{eq:Y1bounds2_2}) can be equivalently rewritten in the following form:
\begin{align}\label{eq:Y1BndFin}
       R_{11}+B_1&\le I(U;Y_1|W)+I(W,U,Y_1;\hat{Y}_2) \nonumber \\ 
	& \qquad \qquad +C_{21}-I(Y_2;\hat{Y}_2) \\
       R_{00}+R_{11}+B_1&\le I(W,U;Y_1)+I(W,U,Y_1;\hat{Y}_2)  \nonumber \\
	& \qquad \qquad +C_{21}-I(Y_2;\hat{Y}_2)\\
       R_{11}+B_1&\le I(U;Y_1,\hat{Y}_2|W)\\
       R_{00}+R_{11}+B_1&\le I(W,U;Y_1,\hat{Y}_2).\label{eq:Y1BndFinlast}
\end{align}
Moreover, the Markov chain condition $(U,W,Y_1) \rightarrow Y_2 \rightarrow \hat{Y}_2$ on the joint distribution implies that
\begin{multline}
        I(W,U,Y_1;\hat{Y}_2)+C_{21}-I(Y_2;\hat{Y}_2)= \\ C_{21}-I(\hat{Y}_2;Y_2|U,W,Y_1).
\end{multline}
One can easily see that if $C_{21}<I(\hat{Y}_2;Y_2|U,W,Y_1)$, then the optimal $\hat{Y}_2 = \varnothing$.
Otherwise, we keep $C_{21}-I(\hat{Y}_2;Y_2|U,W,Y_1)$ as is.

Then, (\ref{eq:Y1BndFin})-(\ref{eq:Y1BndFinlast}) are equivalent to
\begin{align}\label{eq:Y1last}
R_{11}&+B_1 \le \min \Big\{I(U;Y_1|W)  \nonumber \\ 
+&\{C_{21}-I(\hat{Y}_2;Y_2|U,W,Y_1)\}^+,I(U;Y_1,\hat{Y}_2|W)\Big\}\\
R_{00}&+R_{11} +B_1 \le \min \Big\{I(W,U;Y_1) \nonumber \\ 
+&\{C_{21}-I(\hat{Y}_2;Y_2|U,W,Y_1)\}^+,I(W,U;Y_1,\hat{Y}_2)\Big\}.\label{eq:Y1last_2}
\end{align}

Similarly, one can see that (\ref{eq:Cind})-(\ref{eq:ApBC12qhatmin}), (\ref{eq:R1hat}), (\ref{eq:Y2bnds1})-(\ref{eq:Y2bnds1_2}), and (\ref{eq:Y2bnds2})-(\ref{eq:Y2bnds2_2}) can be equivalently rewritten as 
\begin{align}\label{eq:Y2bnds}
R_{22}+&B_2 \le I(V;Y_2|W)+I(W,V,Y_2;\hat{Y}_1) \nonumber \\ 
	&+{\alpha}_1C_{12}-I(W,U,Y_1;\hat{Y}_1)\\
R_{00}+R_{22}+&B_2 \le I(W,V;Y_2)+I(W,V,Y_2;\hat{Y}_1) \nonumber \\ 
	&+{\alpha}_1C_{12}-I(W,U,Y_1;\hat{Y}_1)+\Bar{\alpha}_1{C}_{12}\\
R_{22}+&B_2 \le I(V;\hat{Y}_1,Y_2|W)\\
R_{00}+R_{22}+&B_2 \le I(W,V;\hat{Y}_1,Y_2)+\Bar{\alpha}_1{C}_{12}.\label{eq:Y2bnds_last}
\end{align}
In addition, the Markov chain condition $(V,Y_2) \rightarrow (Y_1,U,W) \rightarrow \hat{Y}_1$ on the joint distribution implies that
\begin{multline}
        I(W,V,Y_2;\hat{Y}_1)+{\alpha}_1C_{12}-I(W,U,Y_1;\hat{Y}_1)= \\
        \alpha_1 C_{12}-I(\hat{Y}_1;U,Y_1|V,W,Y_2).
\end{multline}
If $\alpha_1 C_{12}<I(\hat{Y}_1;U,Y_1|V,W,Y_2)$, then one can easily show that the optimal $\hat{Y}_1 = \varnothing$. Therefore, (\ref{eq:Y2bnds})-(\ref{eq:Y2bnds_last}) can be given in the following equivalent form:
\begin{align}\label{eq:Y2last}
R_{22}&+B_2 \le \min \Big\{I(V;Y_2|W)+\{\alpha_1 C_{12}- \nonumber \\
& I(\hat{Y}_1;U,Y_1|V,W,Y_2)\}^+, 
I(V;\hat{Y}_1,Y_2|W)\Big\} \\
R_{00}+R_{22}&+B_2 \le \min \Big\{I(W,V;Y_2) \nonumber \\
&+\{\alpha_1 C_{12}- I(\hat{Y}_1;U,Y_1|V,W,Y_2)\}^++\Bar{\alpha}_1{C}_{12}, \nonumber \\
& \ I(W,V;\hat{Y}_1,Y_2)+\Bar{\alpha}_1{C}_{12}\Big\}. \label{eq:Y2last_2}
\end{align}

Lastly, considering the bounds in (\ref{eq:ApBbiningbound}), (\ref{eq:Y1last})-(\ref{eq:Y1last_2}), and (\ref{eq:Y2last})-(\ref{eq:Y2last_2}) together, by applying the Fourier–Motzkin algorithm to eliminate all the partial rates, one can derive the rate region $\mathcal{R}_i^{(1)}$. The proof is thus complete.

We note that the error probability analysis here can also be derived from the general approach of \cite{chung}, but the general expressions in \cite{chung} are not always easy to simplify.

\section{Proof of Proposition \ref{th.ibc2}}

\label{app:achievability_2}

Fix a joint distribution $$P(u,v,w,x)P(\hat{y}_1|u,w,y_1)P(\hat{y}_2|w,y_2)$$ and let $\alpha_2 \in [0,1]$. Similar to the proof of Proposition \ref{th.ibc1}, we construct random codebooks of length-$n$ codewords to transmit the common message $M_0$, and the private messages $M_1$ and $M_2$, which are independent and uniformly distributed over the sets $[1:2^{nR_0}]$, $[1:2^{nR_1}]$, and $[1:2^{nR_2}]$, respectively. 

Each of the private messages $M_1$ and $M_2$ is split into two parts as follows:
\begin{align}
    M_1&\longmapsto (M_{10},M_{11}),
\end{align}
where $M_{10}\in [1:2^{nR_{10}}]$, $M_{11}\in [1:2^{nR_{11}}]$, and
\begin{align}
    M_2&\longmapsto (M_{20},M_{22}),
\end{align}
where $M_{20}\in [1:2^{nR_{20}}]$, $M_{22}\in [1:2^{nR_{22}}]$.
Therefore, 
\begin{align}
        R_1=R_{10}+R_{11}, \\
        R_2=R_{20}+R_{22}.
\end{align}
The sub-messages $M_{10}$ and $M_{20}$ are decoded at both receivers, i.e., the triple $(M_0,M_{10},M_{20})$ is considered as the common message with rate 
\begin{align}
        R_{00}=R_0+R_{10}+R_{20}.
\end{align}
Moreover, let $B_1, B_2, \hat{R}_1, \hat{R}_2$ be non-negative real numbers denoting 
the rates of the binning steps and the compression steps in the proof. Also, define
\begin{align}
        \hat{C}_{12}&=\min \{C_{12},\hat{R}_1\} \label{eq:ApCC12hat}\\
        \hat{C}_{21}^d&=\min \{\Bar{\alpha}_2C_{21},R_{00}\} \\
        \hat{C}_{21}^q&=\min \{{\alpha}_2C_{21},\hat{R}_2\} \label{eq:ApCC21qhat}
\end{align}
as the portion of the conferencing link rates used in the two directions and for decode-and-forward and quantize-bin-and-forward, respectively.

\subsection{Encoding at the Transmitter}

\begin{enumerate}
\item Generate at random $2^{nR_{00}}$ independent codewords $W^n$ according to $P(w^n)=\prod_{t=1}^n P(w_t)$. Label these codewords $W^n(m_{00})$ where $m_{00}\in [1:2^{nR_{00}}]$.

\item For each codeword $W^n(m_{00})$, generate at random $2^{n(R_{11}+B_1)}$ independent codewords $U^n$ according to $\prod_{t=1}^nP(u_t|w_t(m_{00}))$. Label these codewords $U^n(m_{00},m_{11},b_1)$ where $m_{11}\in[1:2^{nR_{11}}]$ and $b_1\in[1:2^{nB_1}]$.

\item For each codeword $W^n(m_{00})$, generate at random $2^{n(R_{22}+B_2)}$ independent codewords $V^n$ according to $\prod_{t=1}^nP(v_t|w_t(m_{00}))$. Label these codewords $V^n(m_{00},m_{22},b_2)$ where $m_{22}\in[1:2^{nR_{22}}]$ and $b_2\in[1:2^{nB_2}]$.

\item Given a triple $(m_{00},m_{11},m_{22})$ to be transmitted, let $(b_1^\mathcal{T},b_2^\mathcal{T})$ be a pair that satisfies 
\begin{multline}\label{Etypd}
    (W^n(m_{00}),U^n(m_{00},m_{11},b_1^\mathcal{T}),V^n(m_{00},m_{22},b_2^\mathcal{T}))\\ \in\mathcal{T}_\epsilon^n(P_{WUV}).
\end{multline}
If there are multiple pairs of $(b_1,b_2)$ that satisfy (\ref{Etypd}), 
let $(b_1^\mathcal{T},b_2^\mathcal{T})$ be the one with minimum $b_1$. 
If multiple such pairs have the same minimum $b_1$, choose the one with the minimum $b_2$. 
If no pair $(b_1,b_2)$ satisfies (\ref{Etypd}), declare an encoding error. 
Based on the mutual covering lemma \cite{NIT}, one can easily see that for sufficiently large $n$ and small $\epsilon$, the probability of encoding error tends to zero provided that
\begin{align}
         B_1+B_2\ge I(U;V|W).
\end{align}
The transmitter then generates a codeword $X^n$ according to $$\prod_{t=1}^nP(x_t|w_t(m_{00}),u_t(m_{00},m_{11},b_1^\mathcal{T}),v_t(m_{00},m_{22},b_2^\mathcal{T}))$$ and sends it over the channel.
\end{enumerate}

\subsection{Encoding and Decoding at the Receivers}

\begin{enumerate}
\item Partition the set $[1:2^{nR_{00}}]$ into $2^{n\hat{C}_{21}^d}$ bins, each containing $2^{n(R_{00}-\hat{C}_{21}^d)}$ elements. For a given $m_{00}\in [1:2^{nR_{00}}]$, let $\mathfrak{C}_{21}^d(m_{00})$ denote the index of the bin that $m_{00}$ belongs to.

\item Generate at random $2^{n\hat{R}_2}$ independent codewords $\hat{Y}_2^n$ according to the distribution $\prod_{t=1}^nP(\hat{y}_{2,t})$. Label these codewords $\hat{Y}_2^n(\hat{m}_2)$ where $\hat{m}_2\in [1:2^{n\hat{R}_2}]$. 

\item Partition the set $[1:2^{n\hat{R}_2}]$ into $2^{n\hat{C}_{21}^q}$ bins, each containing $2^{n(\hat{R}_2-\hat{C}_{21}^q)}$ elements. For a given $\hat{m}_2\in [1:2^{n\hat{R}_2}]$, let $\mathfrak{C}_{21}^q(\hat{m}_2)$ denote the index of the bin that $\hat{m}_2$ belongs to. 

\item Given its output sequence $Y_2^n$, the second receiver seeks a unique $m_{00}^\bullet$ for which there exists some $(m_{22},b_2)$ that satisfies 
\begin{multline}
        (W^n(m_{00}^\bullet),V^n(m_{00}^\bullet,m_{22},b_2),Y_2^n) \in \mathcal{T}_\epsilon^n(P_{WVY_2}).
\end{multline}
If there is no such $m_{00}^\bullet$ or if there is more than one, declare an error. The probability of this error event tends to zero provided that
\begin{align}
        R_{00}+R_{22}+B_2\le I(W,V;Y_2)
\end{align}

\item Let $\hat{m}_2^\mathcal{T}$ be the smallest integer in $[1:2^{n\hat{R}_2}]$ that satisfies 
\begin{multline}
        (W^n(m_{00}^\bullet),Y_2^n,\hat{Y}_2^n(\hat{m}_2^\mathcal{T}))\in \mathcal{T}_\epsilon^n(P_{WY_2\hat{Y}_2}). \label{eq:ApCtypWY2Y2hat}
\end{multline}
If there is no integer in $[1:2^{n\hat{R}_2}]$ that satisfies (\ref{eq:ApCtypWY2Y2hat}), then declare an error. The probability of this error tends to zero provided that
\begin{align}
        \hat{R}_2\ge I(W,Y_2;\hat{Y}_2)\label{eq:ApCIWY2Y2hat}
\end{align}
The receiver $Y_2$ then sends the pair $(\theta_1,\theta_2) = (\mathfrak{C}_{21}^d(m_{00}^\bullet),\mathfrak{C}_{21}^q(\hat{m}_2^\mathcal{T}))$ to the receiver $Y_1$ through the digital link.

\item Given $(\theta_1,\theta_2) \in [1:2^{n\hat{C}_{21}^d}]\times[1:2^{n\hat{C}_{21}^q}]$ and its output sequence $Y_1^n$, the first receiver seeks a unique triple $(m_{00}^*,m_{11}^*,b_1^*)$ for which there exists some $\hat{Y}_2^n$ that satisfies 
\begin{multline}
    (W^n(m_{00}^*),U^n(m_{00}^*,m_{11}^*,b_1^*),Y_1^n,\hat{Y}_2^n) \\ \in \mathcal{T}_\epsilon^n(P_{WUY_1\hat{Y}_2})\label{eq:ApCtypWUY1Y2hat}
\end{multline}
and
\begin{align}
  \mathfrak{C}_{21}^d(m_{00}^*)&=\theta_1, \label{eq:ApCm00st1}\\
  \mathfrak{C}_{21}^q(\hat{Y}_2^n)&=\theta_2.\label{eq:ApCY2hatt2}
\end{align}
If there is such a unique triple, the first receiver decodes its intended messages as $m_{00}^*$ and $m_{11}^*$. Otherwise, it declares an error. The probability that the actual transmitted messages do not satisfy (\ref{eq:ApCtypWUY1Y2hat})-(\ref{eq:ApCY2hatt2}) is negligible for sufficiently large $n$. Consider the error event that some messages other than the actual transmitted ones together with some $\hat{Y}_2^n$ would satisfy (\ref{eq:ApCtypWUY1Y2hat})-(\ref{eq:ApCY2hatt2}). This event comprises two cases. First, this $\hat{Y}_2^n$ is not the codeword actually selected from the quantization codebook, i.e., $\hat{Y}_2^n(\hat{m}_2^\mathcal{T})$. Second, this $\hat{Y}_2^n$ is actually $\hat{Y}_2^n(\hat{m}_2^\mathcal{T})$. Using the multivariate packing lemma \cite{NIT}, one can see that the probability of the first case tends to zero provided that
\begin{align}
\hat{R}_2-\hat{C}_{21}^q+&R_{11}+B_1\le \nonumber \\ 
	&I(U;Y_1|W)+I(W,U,Y_1;\hat{Y}_2)\label{eq:ApCIUY1WWUY1Y2hat}\\
\hat{R}_2-\hat{C}_{21}^q+&R_{00}-\hat{C}_{21}^d+R_{11}+B_1\le \nonumber \\ 
	&I(W,U;Y_1)+I(W,U,Y_1;\hat{Y}_2)
\end{align}
and the probability of the second case tends to zero provided that
\begin{align}
        R_{11}+B_1\le I(U;Y_1,\hat{Y}_2|W)\\
        R_{00}-\hat{C}_{21}^d+R_{11}+B_1\le I(W,U;Y_1,\hat{Y}_2).\label{eq:ApCIWUY1Y2hat}
\end{align}

\item Generate at random $2^{n\hat{R}_1}$ independent codewords $\hat{Y}_1^n$ according to the distribution $\prod_{t=1}^nP(\hat{y}_{1,t})$. Label these codewords $\hat{Y}_1^n(\hat{m}_1)$ where $\hat{m}_1\in [1:2^{n\hat{R}_1}]$. 

\item Partition the set $[1:2^{n\hat{R}_1}]$ into $2^{n\hat{C}_{12}}$ bins, each containing $2^{n(\hat{R}_1-\hat{C}_{12})}$ elements. For a given $\hat{m}_1\in [1:2^{n\hat{R}_1}]$, let $\mathfrak{C}_{12}(\hat{m}_1)$ denote the index of the bin that $\hat{m}_1$ belongs to. 

\item Given the output sequence $Y_1^n$ at the first receiver, let $\hat{m}_1^\mathcal{T}$ be the smallest integer in $[1:2^{n\hat{R}_1}]$ that satisfies 
\begin{multline}
        (W^n(m_{00}^*),U^n(m_{00}^*,m_{11}^*,b_1^*),Y_1^n,\hat{Y}_1^n(\hat{m}_1^\mathcal{T})) \\ \in \mathcal{T}_\epsilon^n(P_{WUY_1\hat{Y}_1}), \label{eq:ApCtypWUY1Y1hat}
\end{multline}
where $(m_{00}^*,m_{11}^*,b_1^*)$ is the unique triple decoded in Step 6 according to (\ref{eq:ApCtypWUY1Y2hat})-(\ref{eq:ApCY2hatt2}). If there is no integer $[1:2^{n\hat{R}_1}]$ that satisfies (\ref{eq:ApCtypWUY1Y1hat}), then declare an error. The probability of this error event tends to zero provided that
\begin{align}
        \hat{R}_1\ge I(W,U,Y_1;\hat{Y}_1)\label{eq:ApCR1hatge}
\end{align}
The receiver $Y_1$ then sends $\kappa= \mathfrak{C}_{12}(\hat{m}_1^\mathcal{T})$ to the receiver $Y_2$ through the digital link.

\item Given $\kappa\in [1:2^{n\hat{C}_{12}}]$ and its output sequence $Y_2^n$, the second receiver seeks a unique pair $(m_{22}^\bullet,b_2^\bullet)$ for which there exists some $\hat{Y}_1^n$ that satisfies 
\begin{multline}
    (W^n(m_{00}^\bullet),V^n(m_{00}^\bullet,m_{22}^\bullet,b_2^\bullet),\hat{Y}_1^n,Y_2^n) \\ \in \mathcal{T}_\epsilon^n(P_{WV\hat{Y}_1Y_2})\label{eq:ApCtypWVY1hatY2}
\end{multline}
and
\begin{align}
        \mathfrak{C}_{12}(\hat{Y}_1^n)=\kappa. \label{eq:ApCY1hatk}
\end{align}
Note that $m_{00}^\bullet$ has been already decoded in Step 4. If there is such a unique pair, the receiver decodes its intended (private) message as $m_{22}^\bullet$. Otherwise, it declares an error. The probability that the actual transmitted messages do not satisfy (\ref{eq:ApCtypWVY1hatY2})-(\ref{eq:ApCY1hatk}) is negligible for sufficiently large $n$. Consider the error event that some messages other than the actual transmitted ones together with some $\hat{Y}_1^n$ would satisfy (\ref{eq:ApCtypWVY1hatY2})-(\ref{eq:ApCY1hatk}). This event comprises two cases. First, this $\hat{Y}_1^n$ is not the codeword actually selected from the quantization codebook, i.e., $\hat{Y}_1^n(\hat{m}_1^\mathcal{T})$. Second, this $\hat{Y}_1^n$ is actually $\hat{Y}_1^n(\hat{m}_1^\mathcal{T})$. Using the multivariate packing lemma \cite{NIT}, one can show that the probability of the first case tends to zero provided that
\begin{align}
        \hat{R}_1-\hat{C}_{12}+R_{22}+B_2\le I(V;Y_2|W)+I(W,V,Y_2;\hat{Y}_1)\label{eq:ApCIVY2WWVY2Y1hat}
\end{align}
and the probability of the second case tends to zero provided that
\begin{align}
        R_{22}+B_2\le I(V;\hat{Y}_1,Y_2|W).\label{eq:ApCIVY1hatY2W}
\end{align}
\end{enumerate}

\subsection{Final Analysis}

By using Fourier-Motzkin elimination and by algebraic computation, one can see that (\ref{eq:ApCC12hat})-(\ref{eq:ApCC21qhat}), (\ref{eq:ApCIWY2Y2hat}), and (\ref{eq:ApCIUY1WWUY1Y2hat})-(\ref{eq:ApCIWUY1Y2hat}) can be equivalently rewritten in the following form: 
\begin{align}
        R_{11}+B_1&\le I(U;Y_1|W)+I(W,U,Y_1;\hat{Y}_2) \nonumber \\
& \ \ +{\alpha}_2C_{21}-I(W,Y_2;\hat{Y}_2)\label{eq:ApCR11B1}\\
        R_{00}+R_{11}+B_1&\le I(W,U;Y_1)+I(W,U,Y_1;\hat{Y}_2) \nonumber \\
& \ \ +{\alpha}_2C_{21}-I(W,Y_2;\hat{Y}_2)+\Bar{\alpha}_2C_{21}\\
        R_{11}+B_1&\le I(U;Y_1,\hat{Y}_2|W)\\
        R_{00}+R_{11}+B_1&\le I(W,U;Y_1,\hat{Y}_2)+\Bar{\alpha}_2C_{21}. \label{eq:ApCR00R11B1}
\end{align}
Moreover, the Markov chain condition $(U,Y_1) \rightarrow (W,Y_2) \rightarrow \hat{Y}_2$ on the joint distribution implies that
\begin{multline}
        I(W,U,Y_1;\hat{Y}_2)+{\alpha}_2C_{21}-I(W,Y_2;\hat{Y}_2)= \\ {\alpha}_2C_{21}-I(\hat{Y}_2;Y_2|U,W,Y_1).
\end{multline}
One can easily see that if ${\alpha}_2C_{21}<I(\hat{Y}_2;Y_2|U,W,Y_1)$, then the optimal $\hat{Y}_2 = \varnothing$. 

Thus, (\ref{eq:ApCR11B1})-(\ref{eq:ApCR00R11B1}) are equivalent to 
\begin{align}
R_{11}+B_1 &\le \min \Big\{I(U;Y_1|W) \nonumber \\
& \qquad + \{{\alpha}_2C_{21} -I(\hat{Y}_2;Y_2|U,W,Y_1)\}^+, \nonumber \\
& \qquad \qquad I(U;Y_1,\hat{Y}_2|W)\Big\}\label{eq:ApCR11B1min}\\
R_{00}+R_{11}+B_1 &\le \min \Big\{I(W,U;Y_1)+\{{\alpha}_2C_{21} \nonumber \\ 
& \qquad -I(\hat{Y}_2;Y_2|U,W,Y_1)\}^++\Bar{\alpha}_2C_{21}, \nonumber \\
& \qquad \qquad I(W,U;Y_1,\hat{Y}_2)+\Bar{\alpha}_2C_{21}\Big\}. \label{eq:ApCR00R11B1min}
\end{align}
Similarly, by algebraic computation, one can see that (\ref{eq:ApCC12hat})-(\ref{eq:ApCC21qhat}), (\ref{eq:ApCR1hatge}), (\ref{eq:ApCIVY2WWVY2Y1hat}), and (\ref{eq:ApCIVY1hatY2W}) can be rewritten in  the following equivalent form:
\begin{align}
R_{22}+B_2&\le I(V;Y_2|W)+I(W,V,Y_2;\hat{Y}_1)+C_{12} \nonumber \\
	& \qquad \qquad -I(W,U,Y_1;\hat{Y}_1)\label{eq:ApCR22B2C12-}\\
R_{22}+B_2&\le I(V;\hat{Y}_1,Y_2|W)\\
R_{00}+R_{22}+B_2&\le I(W,V;Y_2). \label{eq:ApCR00R22B2IWVY2}
\end{align}
The Markov chain condition $(V,Y_2) \rightarrow (U,W,Y_1) \rightarrow \hat{Y}_1$ on the joint distribution implies that
\begin{multline}
        I(W,V,Y_2;\hat{Y}_1)+C_{12}-I(W,U,Y_1;\hat{Y}_1)=\\ C_{12}-I(\hat{Y}_1;U,Y_1|V,W,Y_2).
\end{multline}
If $C_{12}<I(\hat{Y}_1;U,Y_1|V,W,Y_2)$, then the optimal $\hat{Y}_1 = \varnothing$. Therefore, (\ref{eq:ApCR22B2C12-})-(\ref{eq:ApCR00R22B2IWVY2}) are equivalent to 
\begin{align}
R_{22} +B_2 &\le \min \Big \{I(V;Y_2|W)\nonumber \\ 
&\qquad +\{C_{12}-I(\hat{Y}_1;U,Y_1|V,W,Y_2)\}^+, \nonumber \\
&\quad \qquad I(V;\hat{Y}_1,Y_2|W)\Big\}\label{eq:ApCR22B2min}\\
R_{00} +R_{22}+B_2 &\le I(W,V;Y_2). \label{eq:ApCR00R22B2le2}
\end{align}
Finally, considering the bounds (\ref{eq:ApCR11B1min})-(\ref{eq:ApCR00R11B1min}) and (\ref{eq:ApCR22B2min})-(\ref{eq:ApCR00R22B2le2}) together, by applying the Fourier–Motzkin algorithm to eliminate all the partial rates, we obtain the rate region $\mathcal{R}_i^{(2)}$. The proof is thus complete. 

\section{Proof of Theorem \ref{th.ogbc}}
\label{app:outer_bound_gbc}

We present the optimization over the auxiliary random variable $V$ in $\mathcal{R}_o$ in Theorem \ref{th.obc}. The constraints involving the auxiliary random variable $U$ can be evaluated similarly.
First note that since $|a|\ge|b|$, the receiver $Y_1$ is less noisy than the receiver $Y_2$, i.e.,
\begin{align}
	I(V;Y_2)\le I(V;Y_1) \qquad \forall P(u,x).
\end{align}
Therefore, we have
\begin{align}
    I(&X;Y_1|Y_2,V)+I(X;Y_2) \nonumber \\
    &=I(X;Y_1|Y_2,V)+I(X;Y_2|V)+I(V;Y_2)\\
    &=I(X;Y_1,Y_2|V)+I(V;Y_2)\\
    &\le I(X;Y_1,Y_2|V)+I(V;Y_1)\\
    &=I(X;Y_2|Y_1,V)+I(X;Y_1|V)+I(V;Y_1)\\
    &=I(X;Y_2|Y_1,V)+I(X;Y_1).
\end{align}
In other words, constraint (\ref{eq:outer_R1V_2}) is inactive. 

Considering (\ref{eq:outer_R1V_1}), (\ref{eq:outer_R2V}), (\ref{eq:outer_RsumV_1}), (\ref{eq:outer_RsumV_2}), and (\ref{eq:outer_cs}) and accounting for the Markov chain $V \rightarrow X \rightarrow (Y_1,Y_2)$, it is easy to see that we need to maximize the following terms simultaneously:
\begin{align}
    I_1&\doteq I(X;Y_1,Y_2|V)+I(V;Y_2) \label{eq:ApI_1}\\
    I_2&\doteq I(V;Y_2) \label{eq:ApI_2}\\
    I_3&\doteq I(X;Y_1|V)+I(V;Y_2) \label{eq:ApI_3}\\
    I_4&\doteq I(X;Y_1,Y_2). \label{eq:ApI_4}
\end{align}

Consider first the special case where $\lambda = \frac{b}{a}$. 
In this case, $Z_1$ and $Z_2$ can be thought of as being related as 
\begin{equation}
Z_2 = \frac{b}{a} Z_1 + \tilde{Z},
\end{equation} 
where $Z_1$, $\tilde{Z}$ are independent. 
Since $Y_1 = a X + Z_1$, we have 
\begin{align}
Y_2 & = b X + Z_2 \\
& = b X + \frac{b}{a} Z_1 + \tilde{Z} \\
& = \frac{b}{a} Y_1 + \tilde{Z}.
\end{align}
So, the BC is physically degraded (because $|a| \ge |b|$), i.e., $X \rightarrow Y_1 \rightarrow Y_2$ forms a Markov chain. 

In this special case, the simultaneous maximization of the four mutual information terms 
(\ref{eq:ApI_1})-(\ref{eq:ApI_4}) then reduces to the simultaneous maximization of $h(Y_1|V)$ and $h(Y_1)$ together with
minimization of $h(Y_2|V)$, which is achieved by jointly Gaussian $(X,V)$ as can be shown by the entropy power inequality in the proof of the capacity region of the degraded Gaussian broadcast channel \cite{NIT}. 

In the rest of the proof, we assume $\lambda \neq \frac{b}{a}$. 
Define a new virtual output $\hat{Y}$ (which is a scaled version
of the minimum mean-squared error (MMSE) estimate of $X$ given $(Y_1,Y_2)$) as follows:
\begin{align}
    \hat{Y}&\doteq (a-\lambda b)Y_1+(b-\lambda a)Y_2\\
    &=\left((a-\lambda b)a+(b-\lambda a)b\right)X+\hat{Z}\\
    &=(a^2+b^2-2\lambda ab)X+\hat{Z}
\end{align}
where 
\begin{align}
    \hat{Z}\doteq(a-\lambda b)Z_1+(b-\lambda a)Z_2.
\end{align}
When $\lambda \neq \frac{b}{a}$, it is clear
that there is a one-to-one mapping between $(\hat{Y},Y_2)$ and $(Y_1,Y_2)$. 
Moreover, $\lambda \neq \frac{b}{a}$ and the fact
$|\lambda|\le 1$ imply that 
$a^2+b^2-2\lambda ab \neq 0$, so we can write
\begin{align}
    Y_2=\frac{b}{a^2+b^2-2\lambda ab}\hat{Y}+\hat{Z_2},
\end{align}
where
\begin{align}
    \hat{Z_2}\doteq \frac{-b\hat{Z}}{a^2+b^2-2\lambda ab}+Z_2.
\end{align}

It can be verified that $\hat{Z_2}$ and $\hat{Z}$ are uncorrelated, and therefore they are independent (as they are jointly Gaussian). This means that the output $Y_2$ is in fact a noisy 
version of $\hat{Y}$, i.e., $X \rightarrow \hat{Y} \rightarrow Y_2$ forms a Markov chain. This relationship can also be recognized from the fact that $\hat{Y}$ is a (scaled) MMSE estimate of $X$ given $(Y_1,Y_2)$.  Thus, we have
\begin{align} 
I(X;Y_1,Y_2|V)&=I(X;\hat{Y},Y_2|V)=I(X;\hat{Y}|V), \label{eq:ApYhatV}\\
I(X;Y_1,Y_2)&=I(X;\hat{Y},Y_2)=I(X;\hat{Y}). \label{eq:ApY1Y2}
\end{align}


Now, utilizing (\ref{eq:ApYhatV})-(\ref{eq:ApY1Y2}) for bounding the maximum
values of (\ref{eq:ApI_1})-(\ref{eq:ApI_4}), we see that after 
expressing the mutual information in terms of differential entropy, the optimization 
is effectively to maximize $h(\hat{Y}|V)$, $h(Y_1|V)$ and $h(\hat{Y})$, and simultaneously to minimize $h(Y_2|V)$. 

We complete this step using the entropy power inequality. First,
\begin{align}
    \frac{1}{2} \log (2\pi e)&=h(Z_1) \\
    &=h(Y_1|X,V)\\
    &\le h(Y_1|V) \label{eq:ApY1V_left} \\
    &\le h(Y_1) \\ 
&\le \frac{1}{2} \log 2\pi e(a^2P+1)\label{eq:ApY1V}.
\end{align}
By comparing the two sides of (\ref{eq:ApY1V_left})-(\ref{eq:ApY1V}), we find that there is $\beta \in [0,1]$ such that
\begin{align}
    h(Y_1|V)=\frac{1}{2} \log 2\pi e(\beta a^2P+1). \label{eq:EPI_1}
\end{align}
Now, let $\hat{Z}^*$ be a Gaussian random variable independent of all other variables and with zero mean and a variance equal to $\frac{(a\lambda-b)^2}{\mu}$ where $\mu\doteq a^2+b^2-2\lambda ab$. We then have
\begin{align}
        h(\hat{Y}|V)&=h(\mu X+\hat{Z}) \\
        &=h\left(aX+\frac{a}{\mu}\hat{Z}\right)-\frac{1}{2} \log \left(\frac{a^2}{\mu^2}\right)\\
        &\le \frac{1}{2} \log \left(e^{2h(aX+\frac{a}{\mu}\hat{Z}+\hat{Z}^*|V)} - e^{2h(\hat{Z}^*|V)} \right) \nonumber \\
        & \qquad \qquad \qquad -\frac{1}{2} \log \left(\frac{a^2}{\mu^2}\right) \label{eq:converse_epi_0} \\
        &=\frac{1}{2} \log \left(e^{2h(aX+Z_1|V)} - e^{2h(\hat{Z}^*)}\right)  \nonumber \\
        & \qquad \qquad \qquad -\frac{1}{2} \log \left(\frac{a^2}{\mu^2}\right) \label{eq:converse_epi_0b}\\
        &=\frac{1}{2} \log \left(e^{2h(Y_1|V)} - e^{2h(\hat{Z}^*)} \right)  \nonumber \\
        & \qquad \qquad \qquad -\frac{1}{2} \log \left(\frac{a^2}{\mu^2}\right)\\
        &=\frac{1}{2} \log \left(2\pi e \left(\beta \mu P+1-\lambda^2\right)\mu\right)
\label{eq:EPI_2}
\end{align}
where (\ref{eq:converse_epi_0}) is due to the entropy power inequality, and (\ref{eq:converse_epi_0b}) holds because $\frac{a}{\mu}\hat{Z}+\hat{Z}^*$ is a Gaussian random variable with zero mean and unit variance and independent of $X$ and $V$, therefore it can be replaced by $Z_1$ in the differential entropy expression, and also $\hat{Z}^*$ is independent of $V$.

Finally, let $Z_2^*$ be a Gaussian random variable independent of all other variables and with zero mean and variance $\frac{a^2-b^2}{b^2}$, we have
\begin{align}
        h(Y_2|V)&=h(bX+Z_2|V) \nonumber \\
        &=h\left(aX+\frac{a}{b}Z_2|V\right)-\frac{1}{2} \log\left(\frac{a^2}{b^2}\right)\\
        &=h(aX+Z_2+Z_2^*|V)-\frac{1}{2} \log\left(\frac{a^2}{b^2}\right)\\
        &\ge \frac{1}{2} \log \left(e^{2h(aX+Z_2|V)}+e^{2h(Z_2^*|V)}\right) \nonumber \\
        & \qquad \qquad \qquad \qquad -\frac{1}{2} \log\left(\frac{a^2}{b^2}\right) \label{eq:converse_epi} \\
        &=\frac{1}{2} \log\left(e^{2h(aX+Z_1|V)}+e^{2h(Z_2^*)}\right) \nonumber \\
        & \qquad \qquad \qquad \qquad -\frac{1}{2} \log\left(\frac{a^2}{b^2}\right) \label{eq:converse_epi_b} \\
        &=\frac{1}{2} \log\left(e^{2h(Y_1|V)}+e^{2h(Z_2^*)}\right) \nonumber \\
        & \qquad \qquad \qquad \qquad -\frac{1}{2} \log\left(\frac{a^2}{b^2}\right)\\
        &=\frac{1}{2}\log\left(2\pi e\left(\beta b^2P+1\right)\right)
\label{eq:EPI_3}
\end{align}
where (\ref{eq:converse_epi}) is due to the entropy power inequality, and (\ref{eq:converse_epi_b}) holds because $Z_2$ has the same distribution as $Z_1$ and both are independent of $(X,V)$, and also $Z_2^*$ is independent of $V$.

Considering \eqref{eq:EPI_1}, \eqref{eq:EPI_2}, and \eqref{eq:EPI_3}
together with the fact that a Gaussian $X \sim \mathcal{N}(0,P)$ maximizes $h(\hat{Y})$
allows us to conclude that a jointly Gaussian $(X,V)$ simultaneously maximizes the mutual information terms in (\ref{eq:ApI_1})-(\ref{eq:ApI_4}) and (\ref{eq:ApYhatV})-(\ref{eq:ApY1Y2}). 

The proof can be completed by substituting the evaluation of these mutual information terms (\ref{eq:ApI_1})-(\ref{eq:ApI_4}) for Gaussian $(X,V)$ into $\mathcal{R}_o$ in Theorem \ref{th.obc}.

%

\bibliographystyle{IEEEtran}
\bibliography{IEEEabrv,references}

\end{document}